\newtheorem{lemma}{Lemma}
\newtheorem{theorem}{Theorem}
\newtheorem{definition}{Definition}
\newtheorem{claim}{Claim}
\newtheorem{corollary}{Corollary}
\newcommand{\changes}[1]{{\color{black} #1}}
\newcommand{\camera}[1]{{\color{black} #1}}
\newcommand\mycommfont[1]{\ttfamily\textcolor{gray}{#1}}
\newcommand{\poly}{\textup{poly}}
\newcommand{\cI}{{\mathcal{I}}}
\newcommand{\cX}{{\mathcal{X}}}
\newcommand{\cY}{{\mathcal{Y}}}
\newcommand{\cS}{{\mathcal{S}}}
\newcommand{\cF}{{\mathcal{F}}}
\newcommand{\cR}{{\mathcal{R}}}
\newcommand{\cT}{{\mathcal{T}}}
\newcommand{\cC}{{\mathcal{C}}}
\newcommand{\diver}{\textup{div}}
\newcommand{\lsf}{\ell^*}
\newcommand{\divScore}{\ell^*}
\newcommand{\appr}{\epsilon}
\newcommand{\foo}{\epsilon'}
\newcommand{\doubldim}{\lambda}
\newcommand{\dimension}{D}
\newcommand{\gonzalez}{\textsc{GMM}\xspace}
\newcommand{\GMM}{\textsc{GMM}\xspace}
\newcommand{\tauGMM}{$\tau$-\textsc{GMM}\xspace}
\newcommand{\tauGMMstream}{$\tau$-\textsc{GMM-Stream}\xspace}
\newcommand{\tilk}{\widetilde{k}}
\newcommand{\fMM}{\textsc{Fair Max-Min}\xspace}
\newcommand{\fMMFlow}{\textsc{Fair-Flow}\xspace}
\newcommand{\fGMM}{\textsc{Fair-GMM}\xspace}
\newcommand{\fDP}{\textsc{Fair-DP}\xspace}
\newcommand{\fL}{\textsc{Fair-Line}\xspace}
\newcommand{\fB}{\textsc{Fair-Euclidean}\xspace}
\newcommand{\randround}{\textsc{Randomized-Rounding}}
\newcommand{\genswap}{\textsc{Fair-Greedy-Flow}\xspace}
\newcommand{\streamrbcol}{\textsc{Fair-Stream-2Groups}\xspace}
\newcommand{\ball}[1]{\mathbf{B}(#1, \gamma/2)}
\newcommand{\rball}[2]{\mathbf{B}(#1, #2)}
\newcommand{\dmin}{d_\text{min}}
\newcommand{\dmax}{d_\text{max}}
\newcommand{\tb}{H}
\algnewcommand{\LineComment}[1]{\State{\mycommfont{$\triangleright${#1}}}}
\algnewcommand{\LineCommentx}[1]{\Statex{\mycommfont{$\triangleright${#1}}}}
\DeclareMathOperator*{\E}{{\bf {E}}}
\title{	
\LARGE Improved Approximation and Scalability \\ 
for Fair Max-Min Diversification \footnote{University of Massachusetts Amherst, \{raddanki, mcgregor, ameli, zmoumoulidou\}@cs.umass.edu}
}
\author{
  \large Raghavendra Addanki \quad  Andrew McGregor \quad Alexandra Meliou \quad Zafeiria Moumoulidou \\
  } 
\date{} 
\begin{document}
\maketitle

\begin{abstract}
Given an $n$-point metric space $(\cX,d)$ where each point belongs to one of
$m=O(1)$ different categories or groups and a set of integers $k_1, \ldots, k_m$, the fair Max-Min diversification problem is to select $k_i$ points belonging to category $i\in [m]$, such that the minimum pairwise distance between selected points is maximized. The problem was introduced by Moumoulidou et al.~[ICDT 2021] and is motivated by the need to down-sample large data sets in various applications so that the derived sample achieves a balance over
\emph{diversity}, i.e., the minimum distance between a pair of selected points,
and \emph{fairness}, i.e., ensuring enough points of each category are
included. We prove the following results:

\begin{enumerate}
	\item We first consider general metric spaces. We present a randomized polynomial time algorithm that returns a factor $2$-approximation to the diversity but only satisfies the fairness constraints in expectation. Building upon this result, we present a $6$-approximation that is guaranteed to satisfy the fairness constraints up to a factor $1-\epsilon$ for any constant $\epsilon$. We also present a linear time algorithm returning an $m+1$ approximation with exact fairness. The best previous result was a $3m-1$ approximation.
	\item We then focus on Euclidean metrics. We first show that the problem can be solved \emph{exactly} in one dimension. {For constant dimensions, categories
and any constant $\epsilon>0$, we present a $1+\epsilon$ approximation
algorithm that runs in $O(nk) + 2^{O(k)}$ time where $k=k_1+\ldots+k_m$.} We
can improve the running time to $O(nk)+\poly(k)$ at the expense of only picking
$(1-\epsilon) k_i$ points from category $i\in [m]$.
\end{enumerate}

Finally, we present algorithms suitable to processing massive data sets
including single-pass data stream algorithms and composable coresets for the
distributed processing.
\end{abstract}


\section{Introduction}\label{sec:Intro}

Given a universe of $n$ elements $\cX$ and a metric distance function $d: \cX
\times \cX \rightarrow \mathbb{R}_{0}^{+}$, the Max-Min diversification problem
seeks to select a $k$-sized subset $\cS$ of $\cX$ such that the minimum
distance between the points in $\cS$ is maximized~\cite{chandra2001, ravi1994}.
Intuitively, the goal is to maximize the \emph{dissimilarity} across all the
selected points while $k$ is typically much smaller than $n$. A considerable
amount of work in the database community has addressed the diversity
maximization problem in the context of query result
diversification~\cite{deng2014complexity, drosoudisc, viera11}, efficient
indexing schemes for result diversification~\cite{Agarwal2020,
Drosou2014, Wang2018}, nearest neighbor search~\cite{abbar2013diverse}, ranking
schemes~\cite{AngelK2011, QinYC2012}, and  recommendation systems~\cite{abbassi2013, Borodin2012}.

Recently, Moumoulidou et al.~\cite{moumoulidou21} introduced the \emph{fair}
variant of the Max-Min diversification problem. Specifically, the assumption is
that the universe of elements $\cX$ is partitioned into $m= O(1)$ disjoint
categories or groups. Then, the aim is to construct a diverse set of points
where each group is sufficiently represented. To this end, the input of the
problem includes non-negative integers $k_1,\dots ,k_m$ and the goal now is to select a subset $\cS$ using $k_i$ representatives from each group such that the minimum distance across all points is maximized. As a concrete example, consider a query over a maps service for finding restaurants around Manhattan at NYC. Then the goal is to present the user with a diversified set of restaurant locations while representing different cuisines in the sample.

In this work, we improve currently known approximation results for fair
Max-Min diversification. This includes improving the approximation factor in the most general case of the problem; significantly decreasing the approximation factor if we slightly relax the fairness constraints; and reducing the approximation factors to arbitrarily close to 1 when the underlying metric is Euclidean. Before presenting our results, we review related work.

\subsection{Related Work} The problem of unconstrained diversity maximization, \camera{i.e., when the number of groups $m=1$,} is well-studied in the context of facility location, information retrieval, web search and recommendation systems~\cite{AngelK2011, mmr, chandra2001, drosoudisc, ERKUT199048, gollapudi2009, hassin1997, Kuby1987, QinYC2012, ravi1994, viera11}. We refer the interested readers to the following surveys related to the diversification literature~\cite{article, drosou2010}.

Among popular diversification models are the \emph{distance-based} models. In
these models, the diversity of a set of points is modeled via some function
defined over pairwise distances. Max-Sum (also known as \emph{remote-clique})
and Max-Min (also known as \emph{remote-edge} or \emph{p-dispersion}) are two
of the most well-established distance-based diversification models~\cite{Indyk14}. In Max-Sum,
diversity is defined as the sum of the pairwise distances of points selected in
a set, while in Max-Min the diversity of a set is equal to the minimum pairwise
distance. For both problems, there are known $2$-approximation algorithms,
which yield the best approximation guarantee that can be achieved for both
problems~\cite{Bhaskara16, Borodin2012, ravi1994}. There are also recent works
on \emph{distance-based} diversity maximization models in the streaming,
distributed, and sliding-window models~\cite{Aghamolaei2015, Borassi19, Ceccarello17, Indyk14}. 

Contrary to unconstrained diversity maximization, the problem of fair
diversity maximization is less studied. To the best of our knowledge, there is
a known 2-approximation local search algorithm for fair Max-Sum
diversification~\cite{abbassi2013, Borodin2017, Borodin2012} where fairness is
modeled via partition matroids~\cite{schrijver2003}. Recent work also extends
the local search approach to distances of negative type~\cite{Cevallos17}. Another recently studied objective called Sum-Min~\cite{Bhaskara16} is defined as the sum of distances of all points to their closest point in the set. Bhaskara et al.~\cite{Bhaskara16} present an $8$-approximation algorithm for Sum-Min under partition matroid constraints.

The most relevant result to our work is due to Moumoulidou {et al.}~\cite{moumoulidou21} that introduced the fair variant for the Max-Min
diversification problem that we also study. \camera{The proposed fairness objectives have been widely studied by prior work~\cite{bera2019fair, pmlr-v80-celis18a, Celis2017RankingWF, chierichetti2017fair, chiplunkar2020solve, halabi2021, Jones2020FairKV, pmlr-v97-kleindessner19a, Stoyanovich2018OnlineSS, wang2021, Yang:2019:BRD:3367722.3367886, YangS2017, Zehlike:2017:FFT:3132847.3132938}, and are based on the definition of group fairness and statistical parity~\cite{Dwork:2012:FTA:2090236.2090255}. It is worth noting that there are other definitions for fairness, like individual or causal fairness~\cite{GalhotraBM2017}, but these are not the focus of our work.} Moumoulidou {et al.}~\cite{moumoulidou21} designed a polynomial time algorithm that achieved a $3m-1$- approximation for fair Max-Min diversification. There is also a recent line of work for designing (composable) coresets for various distance-based diversification objectives in the \camera{fairness} setting~\cite{Ceccarello2018, Ceccarello2021}. Coresets are small subsets of the original data that contain a good approximate solution and are typically used for speed up purposes or designing streaming and distributed algorithms. Prior efforts leave as an open question the construction of coresets for the fair variant of the Max-Min diversification objective. 

\subsection{Our Results}
We present results for both the cases of general metrics and Euclidean metrics.
\begin{enumerate}
	\item{\bf General Metrics.} In Section~\ref{subsec:randomLP}, we 	present a randomized polynomial time algorithm that returns a factor 2-approximation to the diversity but only satisfies the fairness constraints in expectation, i.e., for each $i\in [m]$, the output is expected to include at least $k_i$ points from $\cX_i$. In Section~\ref{subsec:randomLP2},  we present a 6-approximation that is guaranteed to include $(1-\epsilon)k_i$ points in each group $i\in [m]$ assuming each $k_i=\Omega(\epsilon^{-2} \log m)$. Both these results are based on randomized rounding of a linear program. Finally, in Section~\ref{subsec:lineartime} we present a linear time algorithm returning an $m+1$ approximation with perfect fairness. \changes{This is an improvement over the previously known $3m-1$ approximation~\cite{moumoulidou21}. We present an example in Appendix~\ref{app:tightexample} that shows that the analysis presented in Moumoulidou et al.~\cite{moumoulidou21} cannot be improved to obtain a better approximation factor.} In Section~\ref{subsec:hardness}, we present a hardness of approximation result arguing that we cannot get an approximation factor better than $2$, even allowing for multiplicative approximations in fairness constraints.
	\item {\bf Euclidean Metrics.} If the points can be embedded in low dimensional space ${\mathbb R}^D$ (e.g., if the points correspond to geographical locations) and the distances correspond to Euclidean distances then we can significantly improve the approximation factors of our algorithms. In Section~\ref{subsec:line}, we show that the problem can be solved \emph{exactly} for $D=1$. For constant dimensions, groups, we then present a $1+\epsilon$ approximation algorithm that runs in $O(nk)+2^{O(k)}$ time \camera{where $k = k_1 + k_2 + \ldots+k_m$}. In Section~\ref{subsec:bicriteria}, we show how to improve the running time to $O(nk) + \poly(k)$ at the expense of only picking $(1-\epsilon) k_i$ points from group $i\in [m]$. All these results are based on a new coreset construction.
\end{enumerate}

In Sections~\ref{subsec:streams} and \ref{sec:composable_coresets}, we present algorithms suitable to processing massive data sets
including single-pass data stream algorithms and composable coresets for distributed processing.

\section{Background and Preliminaries} \label{sec:Background}

\subsection{Fair Max-Min Diversification}
We formally define the problem of fair Max-Min diversification recently
introduced in~\cite{moumoulidou21}.

\begin{definition}[\fMM] \label{def:fMax}
Let $(\cX, d)$ be a metric space where $\cX = \bigcup_{i=1}^{m}\cX_i$ is a
universe of $n$ elements partitioned into $m$ non-overlapping groups and $d:
\cX \times \cX \rightarrow \mathbb{R}_{0}^{+}$ is a metric distance function.
Then $\forall u, v \in \cX$, $d$ satisfies the following properties: (1)~$d(u,
v) = 0$ iff $u=v$ (identity), (2)~$d(u, v) = d(v, u)$ (symmetry), and (3)~$d(u,
v) \leq d(u, w) + d(w, v)$ (triangle inequality). Further, let $k_1, k_2,
\cdots, k_m$ be non-negative integers with $k_i \leq |\cX_i|, \ \forall i \in
[m]$. The problem of fair Max-Min diversification is now defined as follows:
\begin{maxi*}|l|
{\cS \subseteq \cX}{\min_{\substack{u,v\in \mathcal{S}\\ u\neq v}}d(u,v)} {}{}
\addConstraint{|\mathcal{S} \cap
\mathcal{X}_i| = k_i, \ \forall i \in [m] \quad \textup{(fairness constraints)}} 
\end{maxi*}\end{definition}

The aim is to select a subset $\cS \subseteq \cX$ of points that maximizes the
minimum pairwise distance across the points in $\cS$ while being constrained to
include $k_i$ points from group $i$. Throughout the paper we refer to the
diversity of a set $\cS$ as $\diver(\cS) = \min_{u,v\in \mathcal{S}, u\neq
v}d(u,v)$.

Let $\cS^{*} = \bigcup_{i=1}^{m} \cS^{*}_i$ be the set of points that obtains
the optimal diversity score denoted by $\diver{(\cS^{*})}=\divScore$. We say a
subset of points $\cS$ is an $\alpha$ approximation if $\diver(\cS)\geq
\lsf/\alpha$ and achieves $\beta$ fairness if $|\cS \cap \cX_i|\geq \beta k_i$
for all $i\in [m]$. When $\beta=1$, we say subset achieves \emph{perfect
fairness}.

\fMM is an NP-hard problem for which the best known polynomial time algorithms
are: a $4$-approximation algorithm that only works for $m=2$ groups and a
$3m-1$-approximation algorithm that yields the best guarantees for any $m \geq
3$~\cite{moumoulidou21}. The best approximation factor one can hope for in
general metric spaces is a $2$-approximation guarantee. This claim easily
follows since when $m=1$, the problem is just the Max-Min diversification
problem where it is known that no polynomial time algorithm \camera{with an approximation factor better than 2} exists if $P\neq NP$~\cite{ravi1994}. \changes{We use $\poly(\cdot)$ to describe polynomial time algorithms using the context dependent parameters.}

\subsection{Low Doubling Dimension Spaces}

Our results for low dimensional Euclidean metrics use the fact that such metrics
have \emph{low doubling dimension}. Our work in this direction is inspired by
work on diversity maximization by Ceccarello et
al.~\cite{Ceccarello2018,Ceccarello2021,Ceccarello17}. We define a ball of
radius $r$ centered at $p \in \cX$ as the set of all points in $\cX$ within
distance strictly less than $r$ from $p$. We use the notation: $\rball{p}{r} =
\{ q \in \cX \mid d(p, q) < r \}.$

\begin{definition}[\textsc{Doubling Dimension}]\label{def:doubling} 
Let $(\cX, d)$ be a metric space. The \emph{doubling dimension} of $\cX$ is the
smallest integer $\doubldim$ such that any ball $\rball{p}{r}$ of radius $r$
around a point $p \in \cX$ can be covered using at most $(r/r')^\doubldim$
balls of radius $r'$. The Euclidean metric on $\mathbb{R}^{D}$ has doubling
dimension $O(D)$~\cite{assouad1983plongements, Ceccarello17, gupta2003bounded}.
\end{definition}

\subsection{Coresets}
Coresets are powerful theoretical tools for designing efficient optimization
algorithms in the presence of massive datasets in sequential, streaming or distributed environments~\cite{Agarwal2007GeometricAV,
Indyk14}. At a high level, coresets are carefully chosen subsets of the original universe of elements that contain an approximate solution to the optimal solution for the optimization problem at hand. A coreset for fair Max-Min diversification is defined as follows:

\begin{definition}[\textsc{Coreset for} \fMM]
A set $\cT \subseteq \cX$ is an $\alpha$-coreset if there exists a subset $\cT^{'} \subseteq \cT$ with $|\cT^{'} \cap \cX_i|=k_i \ \forall i \in [m]$ and $\diver{(\cT^{'})} \geq {\divScore}/{\alpha}$.
\end{definition}

Note that optimally solving \fMM on $\cT$,
a set typically much smaller in size than $\cX$, yields an $\alpha$-approximation factor. Further, the notion of coresets is useful for designing algorithms in the distributed setting using the \emph{composability} property. \emph{Composable coresets} closely relate to the notion of \emph{mergeable summaries}~\cite{10.1145/2213556.2213562, Indyk14} while the assumption is that the universe of elements $\cX$ is partitioned into $L$ subsets (e.g., processing sites). Then the goal is to process each subset independently and extract a \emph{local} coreset such that in the union of these local coresets, there is an approximate solution for the optimization problem at hand. Specifically, for \fMM a composable coreset is defined as follows:

\begin{definition}[\textsc{Composable coreset for} \fMM]
A function $c(\cX)$ that maps a set of elements to a subset of these
elements computes an $\alpha-$composable coreset for some $\alpha \geq 1$, if
for any partitioning\footnote{The notion of composable coresets can also be
extended when $\cX$ is not divided into disjoint subsets but this is not the
focus of our work.} of $\cX=\bigcup_j \cY_j$ and $\cT= \bigcup_{j} c(\cY_j)$,
there exists a set $\cT^{'} \subseteq \cT$ with $|\cT^{'} \cap \cX_i|=k_i \
\forall i \in [m]$ such that $\diver{(\cT^{'})} \geq {\lsf}/{\alpha}$.
\end{definition}

\section{General Metrics}\label{sec:genmetrics}

In this section, we present algorithms for \fMM with an arbitrary metric. Our first two algorithms are based on rounding a suitable linear program. In Section~\ref{subsec:lineartime} we present a linear time algorithm returning an $m+1$ approximation with perfect fairness. Finally, in Section~\ref{subsec:hardness}, we give hardness of approximation results for \fMM.


\subsection{2-Approx with Expected Fairness}\label{subsec:randomLP}
In this section and others, we assume a guess $\gamma$ on the optimal diversity value for \fMM. Note there are at most $\binom{n}{2}$ possible values for the optimal diversity corresponding to the set of distances between pairs of points. Hence, trying all these guesses only increases the running time by a factor $O(n^2)$. Assuming the ratio between the largest and smallest distance is $\poly(n)$, this can be reduced to $O(\epsilon^{-1} \log n)$ at the expense of introducing an additional factor of  $1+\epsilon$ in the approximation. This follows by the standard technique of only considering guesses that are powers of $(1+\epsilon)$~\cite{Guha2009}. 

\medskip
\noindent \textbf{\fMM \textsc{LP}}.
Let $\cX=\{p_1,\ldots, p_n\}$. For every point $p_j \in \cX$, we have a
variable $x_j$. We represent the fairness constraint for every group $i \in
[m]$ using constraint~(\ref{eq:constraint1}). Additionally, for every point $p
\in \cX$, we add the constraint~(\ref{eq:constraint2}) that includes at most
one point in a ball of radius $\gamma/2$ centered at $p$. This ensures that the
selected points are separated by a distance of at least $\gamma/2$. Using
constraint (\ref{eq:constraint3}), we allow $x_p$ to take any value between $0$
and $1$. If $\gamma \le \lsf$, observe that the optimal solution for \fMM is a feasible solution for this LP.
\begin{align*}
    \sum_{p_j \in \mathcal \cX_i} x_j &\geq k_i \quad \forall i \in [m]. \tag{1} \label{eq:constraint1}\\
    \sum_{p_\ell \in \ball{p}} x_\ell &\leq 1 \quad \forall p \in \cX. \tag{2} \label{eq:constraint2}\\
    x_j &\geq  0 \quad \forall j\in [n]. \tag{3} \label{eq:constraint3}
\end{align*}
Let $x^*_j$ denote the optimal solution of the linear program stated above. Let
$n'=|\{j: x^*_j>0\}|$ and without loss of generality suppose $x^*_j>0$ for all
$j\in [n']$. \changes{We obtain an integral solution using a randomized rounding algorithm, in which we generate a random ordering based on sampling without replacement, such that a point $p_j$ is selected as the next point in the ordering with probability proportional to $x^*_j$. This allow us to show (see Lemma~\ref{lem:randround}) that the rounding scheme returns a set $\cS$ with at least $k_i$ points \emph{in expectation} from each group $i \in [m]$ (satisfying constraint~(\ref{eq:constraint1}) in expectation). Further, our rounding scheme selects at most one point from each ball of radius $\gamma/2$ (satisfying constraint~(\ref{eq:constraint2})). \camera{Since for a $\gamma\leq \lsf$ there is a set $\cS$ that satisfies the properties discussed above, selecting the set $\cS$ for the largest guess $\gamma$} results in a $2$-approximation for the diversity score.}

\medskip
\noindent \textbf{Randomized Rounding}.
We generate a random ordering $\sigma$ of $[n']$ where $\sigma(t)$ is randomly
chosen from $R_{t}=[n']\setminus \{\sigma(1), \ldots, \sigma(t-1)\}$ such that
for $j\in R_{t}$, $ \Pr[\sigma(t)=j]= \frac{x^*_j}{\sum_{\ell\in R_{t}
}x^*_\ell} \ .$ After generating the ordering $\sigma$, we construct the output set $\cS$ by including the point $p_j$ in $\cS$ iff $\sigma(j)\leq
\sigma(\ell)$ for all $p_\ell \in \rball{p_j}{\gamma/2}$. Note that all points in the output are at least distance $\gamma/2$ apart.

\begin{lemma}\label{lem:randround}
There is an algorithm that returns a set $\cS$, such that
for all groups $i \in [m]$, it holds that $\mathbb{E}[|\cS \cap \cX_i|]\geq
k_i$. Further all the points selected in $\cS$ are at least $\gamma/2$ far
apart.
\end{lemma}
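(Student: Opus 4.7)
The plan is to verify the two assertions separately. The distance guarantee follows directly from the inclusion rule: if $p_j, p_\ell \in \cS$ are distinct with $d(p_j,p_\ell) < \gamma/2$, then $p_\ell \in \rball{p_j}{\gamma/2}$ and $p_j \in \rball{p_\ell}{\gamma/2}$, forcing both $\sigma(j) \leq \sigma(\ell)$ and $\sigma(\ell) \leq \sigma(j)$ simultaneously, which is impossible for distinct elements of a permutation.

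For the fairness claim, I will show $\Pr[p_j \in \cS] \geq x^*_j$ for every $j \in [n']$; summing over $p_j \in \cX_i$ and applying LP constraint~(\ref{eq:constraint1}) then delivers $\mathbb{E}[|\cS \cap \cX_i|] \geq k_i$ by linearity of expectation. Points with $x^*_j = 0$ lie outside $[n']$ and are never chosen by the sampling rule, so they can be harmlessly viewed as positioned after every index of $[n']$ in $\sigma$. Under this convention, $p_j$ enters $\cS$ precisely when $j$ is the earliest index of $B_j := \rball{p_j}{\gamma/2} \cap [n']$ to appear in $\sigma$.

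The main obstacle, and really the only nontrivial step, is a standard identity about weighted sampling without replacement: for any subset $B \subseteq [n']$ containing $j$, the probability that $j$ is drawn before every other element of $B$ under the rule defining $\sigma$ equals $x^*_j / \sum_{\ell \in B} x^*_\ell$. I would prove this by conditioning on the (possibly empty) sequence $E$ of elements of $[n']\setminus B$ drawn before any element of $B$ appears; given any such prefix $E$, the conditional probability that the very next draw is a particular $\ell \in B$ is $x^*_\ell / \sum_{\ell' \in [n']\setminus E} x^*_{\ell'}$, and after normalizing by the conditional probability that the next draw lies in $B$ at all, one obtains $x^*_j / \sum_{\ell \in B} x^*_\ell$, which is independent of $E$ and hence survives integrating over $E$.

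Applying this identity with $B = B_j$ and invoking LP constraint~(\ref{eq:constraint2}), which yields $\sum_{p_\ell \in B_j} x^*_\ell \leq 1$, gives
$\Pr[p_j \in \cS] = x^*_j \big/ \sum_{p_\ell \in B_j} x^*_\ell \geq x^*_j$,
which combined with the reduction above closes the argument.
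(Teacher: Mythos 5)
Your proof is correct and follows essentially the same route as the paper's: both reduce the fairness claim to showing $\Pr[p_j \in \cS] = x^*_j / \sum_{p_\ell \in \rball{p_j}{\gamma/2}} x^*_\ell \geq x^*_j$ via the standard identity for weighted sampling without replacement, then apply linearity of expectation with constraints~(\ref{eq:constraint1}) and~(\ref{eq:constraint2}). Your conditioning on the prefix $E$ is just a slightly more explicit justification of the step the paper carries out by conditioning on the events $A_t$, and your handling of the zero-weight points and of the separation claim matches what the paper leaves implicit.
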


\begin{proof}
Consider the randomized rounding algorithm described in this section. Now, let $p_j$ be a point with $x^*_j>0$. Define
$A_t$ to be the event $d(p_{\sigma(t)},p_j)<\gamma/2$ and $d(p_{\sigma(t')},p_j)\geq \gamma/2$ for all $t'<t$. In other words, $A_t$ is the event that the first point included in $\cS$ from the ball $\rball{p_j}{\gamma/2}$ is the point from the $t$-th step (in the ordering $\sigma$). Then,

\begin{align*}
 \Pr[p_j\in \cS]  = \sum_{t=1}^{n'} \Pr[\sigma(t)=j |A_t] \Pr[A_t] 
    &= \sum_{t=1}^{n'}\frac{x^*_j}{\sum_{p_\ell \in \rball{p_j}{\gamma/2}} x^*_\ell } \Pr[A_t]\\ &= \frac{x^*_j}{\sum_{p_\ell \in \rball{p_j}{\gamma/2}} x^*_\ell } \sum_{t=1}^{n'}  \Pr[A_t] \\
    &= \frac{x^*_j}{\sum_{p_\ell \in \rball{p_j}{\gamma/2}} x^*_\ell }    \geq  x^*_j
    \end{align*}
	 
where the last equality follows because $\sum_{t=1}^{n'}  \Pr[A_t]=1$ and the last inequality holds because of constraint (\ref{eq:constraint2}) in the
\fMM LP. 
Then for $i \in [m]$, we have $\mathbb{E}[|\cS \cap \cX_i|] \geq \sum_{p \in \cX_i} x^*_p \geq k_i$ where the last inequality follows from constraint (\ref{eq:constraint1}).
\end{proof}

\subsection{6-Approx with $(1-\epsilon)$ Fairness}\label{subsec:randomLP2}

We now present a more involved rounding scheme of the LP given in the previous section that ensures that the selected points contain at least $(1-\epsilon)k_i$ points in $\cX_i$ for each $i\in [m]$. However, this guarantee comes at the expense of increasing the approximation factor for the diversity score from 2 to 6. 

The main idea behind the new rounding scheme stems from the observation that for any $p_i, p_j\in \cX$, if $\ball{p_i}$ and $\ball{p_j}$ are disjoint, then, in the previous rounding scheme, the event that $p_i$ is included in the returned solution is independent of the event that $p_j$ is  included. This follows because  the relative  ordering of the elements in $\{\ell: p_\ell\in \ball{p_i}\}$ in $\sigma$ is independent of the ordering of the elements in $\{\ell: p_\ell\in \ball{p_j}\}$ in $\sigma$. This independence will ultimately allow us to use Chernoff bound to argue concentration of the number of elements chosen from each group $\cX_j \ \forall j \in [m]$.

\subsubsection{Randomized Rounding with improved fairness guarantees}
We solve the LP in Section~\ref{subsec:randomLP} to get a feasible solution $\{x^*_j\}_{j\in [n]}$. Next, we transform $\{x^*_j\}_{j\in [n]}$ into a feasible solution $\{y_j^*\}_{j\in [n]}$ for the following set of constraints, some of which are no longer linear:
\begin{align*}
    \sum_{p_j \in \mathcal \cX_i} y_j  \geq k_i & \quad \forall i \in [m]. \tag{1'} \label{eq:constraint1new}\\
    \sum_{p_\ell \in \rball{p}{\gamma/6}} y_\ell \leq 1 &\quad \forall p \in \cX. \tag{2'} \label{eq:constraint2new}\\
    y_j \geq 0 & \quad  \forall j\in [n]. \tag{3'} \label{eq:constraint3new} \\ 
(0<y_i \mbox{ and } 0<y_j)  \Rightarrow  d(p_i, p_j)  \geq \frac{\gamma}{3} & \quad \forall p_i, p_j \in \cX_\ell  , \forall \ell\in [m]  \tag{4'} \label{eq:constraint4new} \end{align*}

\changes{The constraint~(\ref{eq:constraint2new}) ensures that at most one point in a ball of radius $\gamma/6$ is selected (instead of $\gamma/2$ used in Section~\ref{subsec:randomLP}) and results in an approximation factor of $6$. The constraint~(\ref{eq:constraint4new}) ensures that points from the same group with non-zero values are separated by at least $\gamma/3$, which is used to argue $(1-\epsilon)$ fairness (see Theorem~\ref{thm:6approx})}. The transformation of $x^*$ to $y^*$ can be done by redistributing the values as follows:
\begin{itemize}
    \item[(a)]  For each $p_j \in \cX$ with $x^*_j>0$ satisfying  $p_j\in \cX_i$ and $y^*_j$ value not yet set, we set: \[y_j^*\leftarrow  \left (\sum_{ p_\ell\in \rball{p_j}{\gamma/3} \cap \cX_i} x_\ell^* \right ) \textup{and } y_\ell^*\leftarrow 0 \textup{ for all }p_\ell\in \rball{p_j}{\gamma/3} \cap \left(\cX_i \setminus \{ p_j\} \right).\]
    \item[(b)] Finally, for all $p_j \in \cX$ with $x^*_j=0$, we set $y^*_j\leftarrow 0$.
\end{itemize}

Informally, we are just moving  weight to $p_j$ from points of the same group (as $p_j$) that are at a distance strictly less than $\gamma/3$ from $p_j$.
\begin{lemma}
$\{y_j^*\}_{j\in [n]}$ satisfies Constraints (1'-4').
\end{lemma}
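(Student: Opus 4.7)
The plan is to check each of the four constraints (1')--(4') in turn, exploiting the structure of the redistribution in step (a). Constraints (3') and (4') will follow immediately from the construction, (1') from a covering argument within each group, and (2') --- which I expect to be the main obstacle --- by combining (4') with the triangle inequality to inherit constraint (2) from the original LP.

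For (3'), every $y_j^*$ is either $0$ or a sum of non-negative $x_\ell^*$ values, so non-negativity is immediate. For (4'), the key observation is that whenever step (a) processes a point $p_j \in \cX_i$, it simultaneously zeros out $y_\ell^*$ for every $p_\ell \in \rball{p_j}{\gamma/3} \cap (\cX_i \setminus \{p_j\})$. Hence if two distinct points of the same group both end up with positive $y^*$ values, the later-processed one cannot have been inside the $\gamma/3$-ball of the earlier one; by symmetry of $d$ this gives the required $d(p_i, p_j) \geq \gamma/3$.

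Next, I would prove (1') by a covering argument within each group. Let $p_{j_1}, \ldots, p_{j_s}$ be the points of $\cX_i$ actually processed in step (a) --- the only points in $\cX_i$ whose $y^*$ values can be positive. Every $p_\ell \in \cX_i$ with $x^*_\ell > 0$ is either itself processed or was zeroed out during the processing of some $p_{j_t}$, so in either case lies in $\rball{p_{j_t}}{\gamma/3} \cap \cX_i$. Therefore
\[
\sum_{p_\ell \in \cX_i} y^*_\ell \;=\; \sum_{t=1}^s \sum_{p_q \in \rball{p_{j_t}}{\gamma/3} \cap \cX_i} x^*_q \;\geq\; \sum_{p_q \in \cX_i} x^*_q \;\geq\; k_i,
\]
using constraint (1) for $x^*$ at the end (points covered by several balls only increase the middle quantity).

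The main obstacle is (2'). Here I would fix an arbitrary $p \in \cX$ and let $J = \{p_j \in \rball{p}{\gamma/6} : y^*_j > 0\}$. Any two points in $J$ lie within distance less than $\gamma/3$, so by (4') they must come from distinct groups. For each $p_j \in J$ with group $\cX_{i(j)}$, the triangle inequality gives $\rball{p_j}{\gamma/3} \subseteq \rball{p}{\gamma/6 + \gamma/3} = \rball{p}{\gamma/2}$, and because the $i(j)$ are pairwise distinct the sets $\rball{p_j}{\gamma/3} \cap \cX_{i(j)}$ are disjoint. Summing,
\[
\sum_{p_\ell \in \rball{p}{\gamma/6}} y^*_\ell \;=\; \sum_{p_j \in J} \sum_{p_q \in \rball{p_j}{\gamma/3} \cap \cX_{i(j)}} x^*_q \;\leq\; \sum_{p_q \in \rball{p}{\gamma/2}} x^*_q \;\leq\; 1,
\]
where the final step is constraint (2) for $x^*$ applied at $p$. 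The delicate point is the choice of radii $\gamma/6$, $\gamma/3$, $\gamma/2$: the identity $\gamma/6 + \gamma/3 = \gamma/2$ is precisely what makes the disjoint contributions in a $\gamma/6$-ball pack into the original $\gamma/2$-constraint, and the loss from $\gamma/2$ down to $\gamma/6$ is what drives the approximation ratio from $2$ up to $6$.
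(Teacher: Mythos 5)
Your proof is correct and follows essentially the same route as the paper's: (3') and (4') directly from the construction, (1') from the fact that every point with positive $x^*$ mass is covered by some $\gamma/3$-ball of a processed point, and (2') via the radius identity $\gamma/6+\gamma/3=\gamma/2$ and constraint (2). In fact you are slightly more careful than the paper in two places --- you make explicit that the points of $J$ belong to distinct groups so their contributing $x^*$-sums are disjoint (the step the paper's one-line chain of inequalities silently relies on), and you correctly state (1') as an inequality rather than the paper's claimed equality, which can fail if a point lies in the $\gamma/3$-balls of two processed points of its group.
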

\begin{proof}
Observe that $\{y_j^*\}_{j\in [n]}$ satisfies the constraint (\ref{eq:constraint4new}).  If a point $p_j \in \cX_i$ satisfies $y^*_j>0$, then, it means that we set $y^*_\ell$ to $0$ for every $p_l \in \rball{p_j}{\gamma/3} \cap \left(\cX_i \setminus \{ p_j\} \right)$.

Constraint (\ref{eq:constraint2new}) is  satisfied because
\[
\sum_{p_\ell \in \rball{p_j}{\gamma/6}} y_\ell^*
\leq 
\sum_{p_\ell \in \rball{p_j}{\gamma/6+\gamma/3}} x_\ell^*= 
\sum_{p_\ell \in \rball{p_j}{\gamma/2}} x_\ell^*\leq 1
\]
since $\{x_\ell^*\}_{\ell\in [n]}$ satisfies constraint (\ref{eq:constraint2}). Constraint (\ref{eq:constraint1new}) is satisfied because $\sum_{p_j \in \mathcal \cX_i} y^*_j= \sum_{p_j \in \mathcal \cX_i} x^*_j$ and  Constraint (\ref{eq:constraint3new}) is trivially satisfied.
\end{proof}

We next pick a random permutation $\sigma$ as in the previous Section~\ref{subsec:randomLP}, but now using the values $\{y_\ell^*\}_{\ell\in [n]}$. We add $p_j$ to the output $\cS$ if $\sigma(j)\leq \sigma(\ell)$ for all $p_\ell$ such that $d(p_\ell,p_j)<\gamma/6.$ Note that all points in $\cS$ are therefore at least a distance of $\gamma/6$ apart. 

\begin{theorem}\label{thm:6approx}
Asssume $k_i\geq 3\epsilon^{-2} \log (2m)$ for all $i\in [m]$. There is a $\poly(n,k, \delta^{-1})$ time algorithm that returns a subset of points with diversity $\lsf/6$ and includes $(1-\epsilon)k_i$ points in each group $i\in [m]$ with probability at least $1-\delta$.
\end{theorem}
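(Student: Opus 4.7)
The argument has three components: a deterministic diversity bound, an expected fairness bound by LP rounding, and a concentration argument that upgrades expected to high-probability fairness. The algorithm would be: enumerate candidate values of $\gamma$ (the $O(n^2)$ pairwise distances, or $O(\epsilon^{-1}\log n)$ powers of $1+\epsilon$), and for the largest $\gamma$ whose LP is feasible run the $x^*\to y^*$ redistribution and the weighted random-ordering rounding, repeating $O(\log \delta^{-1})$ times and returning any run that certifiably meets the fairness lower bounds.

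For \emph{diversity}, by construction $p_j$ is added to $\cS$ only when $\sigma(j)$ is minimum among $\{\ell : p_\ell \in \rball{p_j}{\gamma/6}\}$, so every pair in $\cS$ is at least $\gamma/6$ apart. Since the integer solution $\cS^*$ is LP-feasible whenever $\gamma \le \lsf$, the selected $\gamma$ satisfies $\gamma \ge \lsf$, giving diversity at least $\lsf/6$. For \emph{expected fairness}, I would replay the calculation of Lemma~\ref{lem:randround} verbatim but with weights $\{y^*_j\}$ and balls of radius $\gamma/6$: constraint~(\ref{eq:constraint2new}) gives $\sum_{p_\ell\in \rball{p_j}{\gamma/6}} y^*_\ell \le 1$, so $\Pr[p_j\in \cS]\ge y^*_j$, and summing over $p_j\in\cX_i$ together with constraint~(\ref{eq:constraint1new}) yields $\mathbb{E}[|\cS \cap \cX_i|] \ge k_i$.

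The heart of the proof is a \emph{concentration} argument. Fix $i \in [m]$ and let $J_i = \{j : p_j \in \cX_i,\ y^*_j > 0\}$ and $X_j = \mathbf{1}[p_j \in \cS]$. Constraint~(\ref{eq:constraint4new}) forces $d(p_j,p_{j'}) \ge \gamma/3$ for distinct $j,j' \in J_i$, so the balls $\rball{p_j}{\gamma/6}$ and $\rball{p_{j'}}{\gamma/6}$ are disjoint. The indicator $X_j$ depends only on the relative ranking in $\sigma$ of indices whose points lie in $\rball{p_j}{\gamma/6}$. Because the biased sampling-without-replacement $\sigma$ can be realized by assigning each index $\ell$ an independent exponential random variable of rate $y^*_\ell$ and sorting (the Plackett--Luce construction), the relative rankings on disjoint index subsets are mutually independent. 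Hence $\{X_j\}_{j \in J_i}$ is a family of independent Bernoullis with mean $\mu \ge k_i$, and a standard multiplicative Chernoff bound yields
\[\Pr\bigl[\,|\cS \cap \cX_i| < (1-\epsilon)k_i\,\bigr] \;\le\; \exp(-\epsilon^2 k_i/3) \;\le\; \frac{1}{2m}\]
under the hypothesis $k_i \ge 3\epsilon^{-2}\log(2m)$. A union bound over the $m$ groups gives success probability at least $1/2$ per run; repeating $O(\log \delta^{-1})$ times and checking fairness on each run boosts this to $1-\delta$, for total runtime $\poly(n,k,\delta^{-1})$ dominated by the LP solves.

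The main obstacle is the independence claim. Without the careful $x^*\to y^*$ redistribution that enforces constraint~(\ref{eq:constraint4new}), same-group supported points could sit inside overlapping $\gamma/6$-balls and their $X_j$'s would be correlated through shared ``ball members,'' breaking Chernoff entirely. Verifying the exponential-race realization of weighted sampling-without-replacement (so that rankings on disjoint supports are genuinely independent, not merely pairwise uncorrelated) is the step I would prove most carefully.
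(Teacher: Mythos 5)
Your proposal is correct and follows essentially the same route as the paper: LP solve, redistribution of $x^*$ to $y^*$ to enforce the $\gamma/3$ separation within each group, rounding via the weighted random ordering with $\gamma/6$-balls, reuse of Lemma~\ref{lem:randround} for the expectation, and a Chernoff plus union bound with $O(\log\delta^{-1})$ repetitions. The one place you go beyond the paper is the explicit exponential-race (Plackett--Luce) realization of the weighted sampling-without-replacement, which cleanly justifies the mutual independence of the indicators on disjoint balls that the paper asserts more informally; that is a worthwhile detail to spell out, but it is a refinement of the same argument rather than a different approach.
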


\begin{proof}
Let $Y_p=1$ if the point $p\in \cX$ is included in the output $\cS$. Fix $i\in [m]$. The proof of Lemma \ref{lem:randround} applied to balls of radius $\gamma/6$ rather than balls of radius $\gamma/2$, ensures that for each $i\in [m]$,
$
\E[\sum_{p\in \cX_i} Y_p] \geq k_i$.
The fact  $\{Y_p\}_{p \in \cX_i}$ are fully independent allows us to apply the Chernoff bound and conclude
$
\Pr  [\sum_{p\in \cX_i} Y_p\leq (1-\epsilon) k_i   ] \leq \exp(-\epsilon^{2} k_i/3)\leq 1/(2m)$. 
Hence, by an application of the union bound, we ensure that with probability at least $1/2$, $|\cS \cap \cX_i| \geq (1-\epsilon)k_i$ for all $i\in [m]$. Repeating the process $\log \delta^{-1}$ times ensures that at least one of the trials succeeds with probability at least $1-\delta.$
\end{proof}

\changes{Note that Theorem~\ref{thm:6approx} requires the $k_i$ values to be sufficiently large, and such conventions have also been used in prior work~\cite{Bhaskara16}. For small $k_i$ values, i.e., $k_i =o (\log n)$, the \fGMM algorithm introduced in Moumoulidou et al.~\cite{moumoulidou21} obtains a 5-approximation guarantee in polynomial time. Using an additive Chernoff bound, alternatively, we can find at least $k_i - O(\sqrt{k_i \log m})$ points from each group $i \in [m]$, without the requirement of having large $k_i$'s.}

\subsection{$(m+1)$-Approx with Perfect Fairness}\label{subsec:lineartime}

We now describe {\genswap}, an  $m+1$-approximation algorithm that ensures perfect fairness. \changes{This is an improvement over the previously known $3m-1$ approximation~\cite{moumoulidou21}. Specifically, we give the pseudocode for \genswap~(Algorithm~\ref{alg:genswap}) and show that \genswap~returns a solution which is an $m+1$-approximation for \fMM. We also present an example in Appendix~\ref{app:tightexample} that shows that the analysis presented in Moumoulidou et al.~\cite{moumoulidou21} cannot be improved to obtain a better approximation factor. Missing details from this section are presented in Appendix~\ref{app:lineartime}.}

\smallskip

\noindent 
\textbf{Overview of \genswap}. 
We assume a guess $\gamma$ for $\lsf$. The algorithm proceeds by iteratively
building clusters of close points of distinct groups. Our main idea is to
select one point from each cluster such that the fairness constraints are
guaranteed. First, we describe the procedure for building a cluster. Let $D$
denote a cluster initialized with a point of group $i \in [m]$. Among the
available points $\cR$, we include a point $p \in \cR$, if it is within a
distance of $\frac{\gamma}{m+1}$ to some point $x \in D$, and no other point of
the same group is already present in $D$.

If there is no such point, the cluster $D$ is complete, and we remove all
points from $\cR$ that are within a distance of $\frac{\gamma}{m+1}$ from some
point in $D$. Also, we discard all points of group $i$, i.e., $\cX_i$ from
$\cR$, as soon as there are at least $k$ distinct clusters in $\mathcal C$
containing points from $\cX_i$. We continue this process of iteratively
building clusters, until there are points from each group that are part of at
least $k$ distinct clusters or if there are no remaining points.

Next, we use an approach similar to~\cite{moumoulidou21} and select at most one
point from each cluster, satisfying the fairness constraints. We construct a
flow network with clusters $D_1, D_2, \cdots, D_t$ in $\mathcal C$ represented
by nodes $v_1, v_2, \cdots, v_t$ and groups represented by nodes $u_1, u_2,
\cdots, u_m$. We add an edge with capacity $1$ between every pair $u_i$ and
$v_j$ if there is a point of group $i$ in cluster $D_j$ for some $j \in [t]$.
We create a source node $a$ and add edges with capacity $k_i$ between $a$ and
$u_i \ \forall i \in [m]$. We then create a sink node $b$ and add edges with
capacity $1$ between $b$ and $v_j \ \forall j \in [t]$. Finally, we
find maximum flow using Ford-Fulkerson algorithm~\cite{cormen2009introduction}.
For each edge $(u_i, v_j)$ with flow equal to $1$, we include the point of
group $i$ from cluster $D_j$ in our solution. \changes{We conclude this section with the following theorem:}

\begin{theorem}\label{thm:lineartime}
\genswap~Algorithm returns an $(m+1) (1+\epsilon)$-approximation and achieves {perfect fairness} for the \fMM problem using a running time of $O(nkm^3 \epsilon^{-1}\log n)$.
\end{theorem}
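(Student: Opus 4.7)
The plan is to prove three things about \genswap: (i) for any guess of $\gamma$ the points returned realize diversity at least $\gamma/(m+1)$; (ii) when $\gamma\leq \lsf$ the flow network admits a feasible flow of value $k=\sum_i k_i$, ensuring perfect fairness; and (iii) the total cost is $O(nkm^3\epsilon^{-1}\log n)$. The approximation factor and runtime are essentially bookkeeping; the nontrivial step is the feasibility of the flow, where I would focus attention.

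For the approximation bound I would argue that any two points returned by the algorithm lie at distance at least $\gamma/(m+1)$. By construction, as soon as a cluster $D_j$ is completed every point of $\cR$ within $\gamma/(m+1)$ of some point of $D_j$ is removed; hence every point later placed in any $D_{j'}$ with $j'>j$ (and in particular its eventual representative) sits at distance at least $\gamma/(m+1)$ from every point of $D_j$. Since at most one representative is taken from each cluster, the output diversity is at least $\gamma/(m+1)$. Sweeping $\gamma$ over the geometric grid of powers of $(1+\epsilon)$ and keeping the largest guess for which the max-flow attains value $k$ then yields $\gamma\geq \lsf/(1+\epsilon)$, i.e.\ an $(m+1)(1+\epsilon)$-approximation.

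For perfect fairness, fix an optimum $\cS^*=\bigcup_i\cS_i^*$ and, for every $p^*\in \cS_i^*$, define $\phi(p^*)$ to be the first cluster $D_j$ (if any) such that either $p^*\in D_j$ or $p^*$ was removed from $\cR$ upon completion of $D_j$. Two properties are central: (a)~whenever $\phi(p^*)=D_j$ is defined, $D_j$ must contain a point of group $i$, since otherwise $p^*$ would have been eligible for inclusion in $D_j$ (within $\gamma/(m+1)$ and with group $i$ unrepresented), contradicting the termination rule for $D_j$'s construction; (b)~$\phi$ is injective, because each cluster has at most $m$ points linked by growth steps of length strictly less than $\gamma/(m+1)$, so its diameter is strictly less than $(m-1)\gamma/(m+1)$, and two optima both mapping to the same $D$ would satisfy
\[
d(p^*,q^*)<\tfrac{\gamma}{m+1}+(m-1)\tfrac{\gamma}{m+1}+\tfrac{\gamma}{m+1}=\gamma,
\]
contradicting $d(p^*,q^*)\geq \lsf\geq \gamma$. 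Optima discarded by the quota-saturation rule may have $\phi$ undefined, but at the moment that rule fires for group $i$ we already have $|C_i|\geq k_i$. I would then verify Hall's condition $|\bigcup_{i\in S} C_i|\geq \sum_{i\in S} k_i$ for every $S\subseteq[m]$: partition $S$ into groups whose entire $\cS_i^*$ is handled injectively by $\phi$ (contributing the required clusters by (b)) and groups whose quota fired (handled via the $|C_i|\geq k_i$ invariant, combined with the fact that group-$i$ representatives across distinct clusters in $C_i$ are themselves at distance at least $\gamma/(m+1)$, so their hosting clusters are provably distinct). Hall's theorem for $b$-matchings then certifies a flow of value $k$, giving perfect fairness.

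Running time is routine: the geometric guessing contributes $O(\epsilon^{-1}\log n)$; per guess, cluster construction scans each of the $n$ points against at most $O(k)$ current clusters of size $O(m)$ using an $O(m)$ group-membership check, for $O(nkm^2)$ work, while the flow network has $O(k+m)$ nodes, $O(km)$ edges, and flow value $k$, so Ford--Fulkerson augments for $O(k^2m)$ work; collecting factors yields the stated $O(nkm^3\epsilon^{-1}\log n)$ bound. The main obstacle I anticipate is the Hall-style accounting when the quota-saturation rule fires simultaneously for several groups: the mapping $\phi$ alone does not cover those discarded optima, and one must couple the $|C_i|\geq k_i$ invariant with the pairwise $\gamma/(m+1)$-separation between the group-$i$ representatives in distinct clusters of $C_i$ to certify enough distinct elements of $\bigcup_{i\in S} C_i$.
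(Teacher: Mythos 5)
Your overall architecture matches the paper's: the separation bound obtained by removing $\gamma/(m+1)$-balls around completed clusters, the map $\phi$ from optima to clusters containing a same-group point (the paper's $f$), and the injectivity of $\phi$ via the chain/diameter bound $d(p^*,q^*)<\frac{\gamma}{m+1}+(m-1)\frac{\gamma}{m+1}+\frac{\gamma}{m+1}=\gamma$ are exactly the paper's Claims~\ref{cl:atmost_one}, \ref{cl:subset_flow} and \ref{cl:approx}. The one packaging difference is that you certify a flow of value $k$ by verifying Hall's condition over all $S\subseteq[m]$, whereas the paper constructs the transversal explicitly: it first places the $f$-images of optima from \emph{critical} groups (those whose quota rule never fired, hence fully processed), which land in distinct clusters by injectivity, and then serves the remaining groups via the counting bound $k-\sum_{i:\text{processed}}k_i\geq k_j$. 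These are equivalent in content, and your runtime accounting is fine (your $O(nkm^2)$ for clustering is even slightly tighter than the paper's stated $O(nkm^3)$).

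The obstacle you flag at the end --- the Hall accounting when the quota rule has fired for several groups of $S$ --- is a genuine gap under the invariant you state, namely $|C_i|\geq k_i$: that bound does not yield $|\bigcup_{i\in S}C_i|\geq\sum_{i\in S}k_i$ when the $C_i$ overlap, and the pairwise $\gamma/(m+1)$-separation of group-$i$ representatives only re-establishes that the clusters within a single $C_i$ are distinct (which holds by definition), not that the union over different $i$ is large. The repair is not the separation argument but the observation that the algorithm's discard threshold is $k=\sum_j k_j$, not $k_i$: group $i$ is removed from $\cR$ only once at least $k$ distinct clusters contain a point of $\cX_i$. Hence any $S$ containing a quota-fired group already satisfies $|\bigcup_{i\in S}C_i|\geq k\geq\sum_{i\in S}k_i$, and a set consisting only of critical groups is handled by the injective $\phi$. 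With that correction your argument closes and coincides with the paper's.
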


\subsection{Hardness of Approximation}\label{subsec:hardness}
In this section, we give a hardness of approximation result for the \fMM problem. Our result is a generalization and improvement over the $2$-approximation hardness shown in~\cite{moumoulidou21}, as we also allow for approximations in fairness constraints.

\begin{definition}[\textsc{Gap-Clique}$_\rho$] Given a constant $\rho \geq 1$, a graph $G$, and an integer $k$, we want to distinguish between  the case where a clique exists of size $k$ (the ``yes" case) and the case where no clique exists of size $\geq k/\rho$ (the ``no" case).
\end{definition}

It is known that $\textsc{Gap-Clique}_\rho$ is NP-hard for every $\rho \geq
1$~\cite{pcp}. Now, via a reduction from the $\textsc{Gap-Clique}_\rho$ we argue that \fMM cannot be approximated to a factor better than $2$, even allowing for multiplicative approximations in fairness constraints. 
See Appendix~\ref{app:genmetrics} for the proof of the following theorem.

\begin{theorem} \label{thm:hardness}
Let  $\alpha < 2$ and $\beta > 0$ be constants.
Unless $P = NP$, there is no {polynomial time} algorithm for the \fMM problem that obtains an $\alpha$-approximation factor for diversity score, and
$\beta$ fairness.
\end{theorem}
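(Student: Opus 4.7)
The plan is to prove hardness by a reduction from $\textsc{Gap-Clique}_\rho$ with a constant gap $\rho$ chosen large enough to absorb the $\beta$-fairness slack. Since a $\beta$-fair solution with $\beta>1$ is in particular $1$-fair (perfect fairness), we may assume $\beta\leq 1$ and set $\rho=\lceil 1/\beta\rceil+1$, which is constant and for which $\textsc{Gap-Clique}_\rho$ is NP-hard by~\cite{pcp}.

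Given an instance $(G,k)$ of $\textsc{Gap-Clique}_\rho$ with $G=(V,E)$, the construction is a single-group \fMM instance ($m=1$) with point set $\cX=V$, target $k_1=k$, and the $\{0,1,2\}$-valued metric
\[
d(u,v)=\begin{cases}0 & u=v,\\ 2 & u\neq v \text{ and } \{u,v\}\in E,\\ 1 & u\neq v \text{ and } \{u,v\}\notin E.\end{cases}
\]
The metric axioms are immediate; in particular, the triangle inequality holds because for any three distinct points $d(u,v)\leq 2\leq 1+1\leq d(u,w)+d(w,v)$.

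The analysis then rests on two observations. In the yes case, the $k$-clique of $G$ yields a feasible \fMM solution with pairwise distance $2$, so $\lsf=2$. In the no case, every clique of $G$ has size strictly less than $k/\rho<\beta k$ (which we may assume holds for $k$ large enough, since $\textsc{Gap-Clique}_\rho$ remains hard for arbitrarily large $k$), so every subset of $\cX$ of size at least $\beta k$ must contain a non-edge and therefore has diversity exactly $1$. Consequently, a polynomial-time algorithm $A$ achieving an $\alpha<2$ approximation with $\beta$ fairness would, on yes instances, return a set of size $\geq \beta k$ with diversity $\geq 2/\alpha>1$; since the only nonzero pairwise distances are $1$ and $2$, this diversity must in fact equal $2$. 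On no instances, $A$ can only return diversity $1$. Thus reading off the minimum pairwise distance of $A$'s output decides $\textsc{Gap-Clique}_\rho$ in polynomial time, contradicting its NP-hardness.

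The main obstacle relative to the $\beta=1$ hardness in~\cite{moumoulidou21} is precisely the fairness relaxation: once the algorithm is allowed to output fewer than $k$ points, one must prevent these smaller subsets from forming cliques in the no case. Choosing $\rho>1/\beta$ captures this requirement quantitatively by forcing every surviving clique to lie strictly below the $\beta k$ output-size threshold, and the rest of the argument follows the same ``gap between distances $1$ and $2$'' pattern as the classical Max-Min hardness.
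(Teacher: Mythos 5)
Your proposal is correct and follows essentially the same route as the paper: a reduction from $\textsc{Gap-Clique}_\rho$ to a single-group \fMM{} instance with distances $2$ on edges and $1$ on non-edges, distinguishing yes/no instances via the output's diversity. The only cosmetic difference is your choice of $\rho=\lceil 1/\beta\rceil+1$ versus the paper's $\rho=\beta$, which merely reflects the two conventions for $\beta$-fairness ($|\cS\cap\cX_i|\geq \beta k_i$ in the main text versus $|\cS\cap\cX_i|\geq k_i/\beta$ in the appendix restatement).
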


\section{Euclidean Metrics}\label{sec:eucmetrics}

In this section, we assume that the metric space is Euclidean, i.e., we can
associate a point $p_i\in {\mathbb R}^D$ with the $i$th entry of $\cX$ and
$d(p_i,p_j)=\|p_i-p_j\|_2=\sqrt{\sum_{\ell\in [D]} (p_i(\ell)-p_j(\ell))^2} \
.$ When $D=1$ we show that the problem can be solved exactly in polynomial time
via Dynamic Programming. More generally, when $D=O(1)$ we present a bi-criteria
approximation that uses an extension of the dynamic programming approach and
properties of low dimensional Euclidean spaces. Missing details from this
section are presented in Appendix~\ref{app:eucmetrics}.

\subsection{Exact Computation in One Dimension}\label{subsec:line}
In this section, we assume the points in the universe $\cX =\bigcup_{i=1}^{m}
\cX_i$ can be embedded on a line. Specifically, let $\cX=\{p_1, \ldots, p_n\}$
where each $p_i\in \mathbb{R}$ and we order the points such that $p_1\leq p_2
\leq \ldots \leq p_n$. We further assume a guess $\gamma$ on the optimal
diversity score for \fMM and design the dynamic programming algorithm
\fL~(Algorithm~\ref{alg:dp1}) that computes an exact solution when $\gamma =
\lsf$. See the previous section for a discussion on guessing $\gamma$.

\medskip
\noindent \textbf{Dynamic Programming.}
Define the dynamic programming table $\tb \in \{0, 1\}^{(k_1+1)\times \ldots
\times (k_m+1) \times n}$ indexed from 0. An entry $\tb[k'_1, k'_2, \cdots,
k'_m, j] \in \{0, 1\}$ is 1 iff there is a subset $\cS^{'}$ of the first $j$
points on the line with diversity $\gamma$ that contains $k'_i$ points from
each group $i\in [m]$. To compute the entries of $\tb$, we process the points
in their order of appearance on the line.

Note that there is a set $\cS^{'}$ with $k'_i$ points from each group $i$ among
the first $j$ points if: (1)~there is such a set among the first $j-1$ points,
or (2)~point $j$ belongs to group $i$ for some $i \in [m]$, and among the first
$j'$ points there is a set with $ k'_1, \cdots, k'_i-1, \cdots, k'_m $ points
from the corresponding groups where $j'<j$ is the largest value such that
$d(p_j,p_j') \geq \gamma$.

See \fL (Algorithm \ref{alg:dp1}) for the resulting algorithm. For simplicity,
the algorithm is written to only determine whether it is possible to pick a
subset with diversity $\gamma$ subject to the required fairness constraints.
However, the algorithm can be easily extended to construct a subset of points
for every non-zero entry in $\tb$ by storing a pointer to the choice we made.
For an entry $\tb[k'_1, k'_2, \cdots, k'_m, j]=1$ that also satisfies
$\tb[k'_1, k'_2, \cdots, k'_m, j-1]=1$ we store a pointer to that entry.
In the second case, if $\tb[k'_1, k'_2, \cdots, k'_m, j']=1$ for some $j'$, we
store a pointer to that entry. We construct the solution set using the stored
pointers, starting at $\tb[k_1, k_2, \cdots k_m, n]$ and backtracking, to
indicate which points to add to the solution.


\begin{algorithm}[t] 
	\caption{\fL: An exact algorithm for data on a line}\label{alg:dp1} 
		{\small
			\begin{algorithmic}[1]
				\Statex
				\begin{description} 
					\item[\rlap{Input:}\phantom{Output:}]$\cX = \bigcup_{i=1}^{m} \cX_i$: Universe of available points.
					\item[\phantom{Output:}] $k_1, \ldots, k_m \in \mathbb{Z}^{+}$.
					\item[\phantom{Output:}] $\gamma \in \mathbb{R}^{+}$: A guess of the optimum fair diversity.
					\item[Output:] $k_i$ points in $\cX_i$ for $i\in [m]$.
				\end{description} 
				
				\State Let $n \leftarrow |\bigcup_{i =1}^m \cX_i |$ and initialize  $\tb \in \{0,1\}^{(k_1+1)\times \ldots (k_m+1) \times n}$ to 0.
				\State Set $\tb[0,\ldots ,0  ,0]\leftarrow 1$, $\tb[0,\ldots ,0  ,1]\leftarrow 1$, and if $p_1\in \cX_\ell$, $\tb[0,\ldots, \underbrace{1}_{\textup{index $\ell$}},\ldots ,0  ,1]\leftarrow 1.$	
			\For{$j=2$ to $n$} 
				\State Let $i \in [m]$ satisfy $p_j\in \cX_i$. \State Let $j'=\max \left( \{0\} \cup \{j'\in [n]:p_{j'}+\gamma\leq  p_j\} \right)$.
				\For{$k_1'\in \{0,\ldots, k_1\},\ldots, k_m'\in \{0,\ldots, k_m\}$}
				\State 
				$\tb[k'_1, \cdots, k'_m, j]\leftarrow \tb[k'_1,  \cdots, k'_m, j-1]$.
				\State If $k'_i\geq 1$, 
				$\tb[k'_1, \cdots, k'_m, j]\leftarrow \tb[k'_1,  \cdots, k'_i-1, \ldots, k'_m, j']\vee \tb[k'_1,  \cdots, k'_m, j-1].$
				\EndFor
			\EndFor
					 \State \Return $\tb[k_1, k_2, \cdots, k_m, n]$.
			\end{algorithmic}
			}
\end{algorithm}
\setlength{\textfloatsep}{2pt}

\begin{theorem}\label{thm:fairline}
There is an algorithm that solves the \fMM problem exactly when the points can be embedded on a line and requires a running time of $O(n^4 \prod_{i=1}^m (k_i+1))$. 
\end{theorem}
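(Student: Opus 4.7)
The approach is to prove that for a fixed guess $\gamma$, Algorithm~\ref{alg:dp1} maintains the following invariant: $\tb[k'_1,\ldots,k'_m,j]=1$ if and only if there exists $\cS'\subseteq\{p_1,\ldots,p_j\}$ with $|\cS'\cap\cX_i|=k'_i$ for every $i\in[m]$ and $\diver(\cS')\geq\gamma$. Combined with a search over the $\binom{n}{2}$ possible pairwise distances, returning the largest $\gamma$ for which $\tb[k_1,\ldots,k_m,n]=1$ recovers $\lsf$ exactly, since $\lsf$ is itself the distance between some pair in $\cX$.

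For correctness I would induct on $j$. The base cases $j\in\{0,1\}$ follow from the initialization on line~2. For the inductive step, fix $k'_1,\ldots,k'_m$, let $p_j\in\cX_i$, and consider any valid $\cS'$ among the first $j$ points. If $p_j\notin\cS'$, then $\cS'$ certifies $\tb[k'_1,\ldots,k'_m,j-1]=1$, matching the first branch of the recurrence. If $p_j\in\cS'$, then for every other $p_{j''}\in\cS'$ we have $p_j-p_{j''}=d(p_j,p_{j''})\geq\gamma$ (using that points are sorted on the line and $p_{j''}\leq p_j$), so $j''\leq j'$ where $j'=\max(\{0\}\cup\{\ell:p_\ell+\gamma\leq p_j\})$. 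Hence $\cS'\setminus\{p_j\}$ certifies $\tb[k'_1,\ldots,k'_i-1,\ldots,k'_m,j']=1$, matching the second branch. Conversely, any witness for a subproblem either stays a witness at $j$ (first branch) or can be extended by adding $p_j$ (second branch), since every selected point in $\{p_1,\ldots,p_{j'}\}$ is at distance $\geq\gamma$ from $p_j$ by the definition of $j'$. This establishes the recurrence on lines~7--8 and hence the invariant.

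To recover an actual set rather than a decision bit one follows the stored pointers from $\tb[k_1,\ldots,k_m,n]$ back to the base, emitting $p_j$ whenever the second branch was taken. For the running time, there are $O(n^2)$ candidate guesses of $\gamma$; each table contains $n\prod_{i=1}^m(k_i+1)$ cells; each cell is filled in $O(1)$ time once $j'$ is known; and computing the $j'$ values naively contributes at most an additional $O(n)$ factor. Multiplying through yields the claimed $O(n^4\prod_{i=1}^m(k_i+1))$ bound.

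The only nontrivial point in the argument — and the reason one-dimensionality is essential — is the observation that on the line the pairwise-distance constraint between $p_j$ and the prefix $\cS'\setminus\{p_j\}$ is controlled by the single threshold $p_j-\gamma$: every earlier selected point is automatically at distance at least $\gamma$ from $p_j$ whenever its coordinate is at most $p_j-\gamma$. This is precisely what allows the DP state to collapse to $(k'_1,\ldots,k'_m,j)$ without tracking the identity of the most recently chosen point, and the analogous reduction fails in $\mathbb{R}^D$ for $D\geq 2$, which is what motivates the coreset-based approach developed in the following subsections.
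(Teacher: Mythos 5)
Your proposal is correct and follows essentially the same route as the paper: it analyzes the same dynamic program \fL, with the same two-branch recurrence keyed on the threshold index $j'$, and the same running-time accounting ($O(n^2)$ guesses times $O(n^2\prod_{i=1}^m(k_i+1))$ per guess). Your explicit induction on $j$ and the remark on why the state can omit the identity of the last chosen point are in fact more detailed than the paper's own (rather terse) correctness argument, but they formalize exactly the same idea.
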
 

\subsection{Coresets for Constant Dimensions}\label{subsec:euclCoresets}
In this section, we design efficient $(1+\epsilon)$-coresets for \fMM in metric
spaces of low doubling dimension (Definition~\ref{def:doubling}). Let $\lambda$
denote the doubling dimension of $\cX$. Our approach generalizes prior work on constructing efficient coresets for unconstrained Max-Min diversification~\cite{Ceccarello17} to \changes{the} \fMM problem.

Specifically, we give the first algorithm for constructing coresets in metric
spaces of doubling dimension. The proposed approach uses the \gonzalez
algorithm that obtains a factor $2$-approximation for the unconstrained Max-Min
diversification problem~\cite{ravi1994,Tamir}.

\gonzalez is a greedy algorithm and works as follows: it starts with an
arbitrary point in a set $S$ and in every subsequent step selects the point that is the
farthest away from the previously selected points. In fact, readers familiar
with the $k$-center clustering problem will recognize that this is the same
strategy used by~\cite{Gonzalez1985ClusteringTM}. If $k$ is the size of the
subset to be selected and $n$ is the size of the universe of points, it is
known that~\gonzalez can be implemented in $O(kn)$ time~\cite{pmlr-v97-kleindessner19a, Wang2018}.

\medskip

\noindent
\textbf{Coreset Construction.} 
First, define  $\foo = \appr / (1+ \appr)$ and note that $\appr/2 \leq \foo <
1$ since $\appr \in (0, 1]$. The \textsc{Coreset} Algorithm constructs coreset $\cT$ as follows: we run
\gonzalez on each group $i \in [m]$ separately to retrieve a set $T_i$ with
$O((4/\foo)^\doubldim k)$ points. The coreset $\cT$ is equal to the union of
the $T_i$ sets for all $i \in [m]$, namely: $\cT \leftarrow \bigcup_{i=1}^{m} T_i, \ \textup{where} \ T_i \leftarrow \gonzalez(\cX_i, (4/\foo)^\doubldim k).$

We will show that $\cT$ contains a set $\cT^{'}$ with $\diver(\cT')\geq \lsf/(1+\appr)$ and $k_i$
points from each group $i$. At a high level, the idea is that for each group $i$ there are two
cases: (1)~either $T_i$ contains a sufficient number of points that are far
apart such that even if we had to remove points close to points
selected from other groups, we would still have enough points to satisfy
fairness, or (2)~the optimal points from group $i$ are within small distance
from their closest point in $T_i$. In the analysis we show that in both cases
we have enough points from each group $i$ to satisfy fairness while these
points are at least $\lsf/(1+\epsilon)$ far apart.

We first prove the following lemma, which we will use later. 

\begin{lemma}\label{lem:doublingDim}
Let $S$ be a set of $k^{'}=(4/\foo)^\doubldim k$ points that are all at least
$(\foo/2) \gamma$ far apart. Then, there exists a subset $S^{'} \subset S$ of points that are all at least $\gamma$ far apart and $|S^{'}| \geq k$. 
\end{lemma}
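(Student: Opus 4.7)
The plan is to build $S'$ greedily via a farthest-point-removal scheme. Starting from $S$, I repeatedly pick an arbitrary surviving point $p$, add it to $S'$, and delete from $S$ every point lying in the open ball $\rball{p}{\gamma}$. By construction, any two points added to $S'$ are at distance at least $\gamma$, so the only nontrivial step is verifying that the process performs at least $k$ iterations.

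To count how many points are discarded in one iteration, I would combine the doubling-dimension hypothesis with the pairwise separation assumed on $S$. Since any two distinct points of $S$ are at distance at least $(\foo/2)\gamma$, any ball of radius $(\foo/4)\gamma$ contains at most one point of $S$: two points in such a ball would, by the triangle inequality, be at distance strictly less than $(\foo/2)\gamma$, contradicting the hypothesis. On the other hand, Definition~\ref{def:doubling} guarantees that $\rball{p}{\gamma}$ can be covered by at most
\[
(\gamma / ((\foo/4)\gamma))^{\doubldim} = (4/\foo)^{\doubldim}
\]
balls of radius $(\foo/4)\gamma$. Combining the two facts gives $|\rball{p}{\gamma}\cap S| \leq (4/\foo)^{\doubldim}$, so each iteration removes at most $(4/\foo)^{\doubldim}$ points (including $p$ itself).

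Since $|S| = (4/\foo)^{\doubldim} k$, the greedy process runs for at least $k$ iterations and produces a set $S' \subseteq S$ of size at least $k$ with pairwise distances at least $\gamma$, which is what the lemma asks for. The step I expect to require the most care is the packing bound: I need to keep the strict-versus-weak inequalities consistent between the open-ball definition used in the paper and the pairwise-distance hypothesis on $S$, so that each covering ball of radius $(\foo/4)\gamma$ really does contain at most one point of $S$. Once that is nailed down, the rest of the argument is a direct division of $|S|$ by the per-iteration removal bound.
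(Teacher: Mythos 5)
Your proposal is correct and follows essentially the same argument as the paper's proof: greedily add a point, delete its open $\gamma$-ball, and bound the per-iteration loss by covering that ball with $(4/\foo)^{\doubldim}$ balls of radius $(\foo/4)\gamma$, each of which can contain at most one point of $S$ by the $(\foo/2)\gamma$ separation. Your explicit attention to the strict/open-ball inequalities is a minor sharpening of a detail the paper states more loosely, but the substance is identical.
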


\begin{proof}
Let $S'=\emptyset$. Add an arbitrary point $x$ from $S$ to $S'$ and remove all points in the ball $\rball{x}{\gamma}$ from $S$. Consider a set of balls of radius $(\foo/4) \gamma$ that cover the removed points. Each of these balls cover at most one removed point since discarded points are at least $(\foo/2) \gamma$ far apart. Hence, the number of balls is at least the number of removed points. But because the doubling dimension is $\lambda$ we know there exists a set of $(4/\foo)^\lambda$ balls of radius $(\foo/4) \gamma$ that cover the removed points. Hence, the number of removed points is at most $(4/\foo)^\doubldim$. Since there were $k^{'}=(4/\foo)^\doubldim k$ points in $S$, we may continue in this way until we've added $k$ points to $S'$. All chosen points are at least $\gamma$ apart as required.
\end{proof}

Our main theorem in this section is as follows:

\begin{theorem}\label{thm:coresets_doubling}
There is an algorithm that returns a $(1+\epsilon)$-coreset of size $O((8/\epsilon)^\doubldim km)$ in metrics of doubling dimension $\doubldim$ with a running time  $O((8/\epsilon)^\doubldim kmn)$. 
\end{theorem}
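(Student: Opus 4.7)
The plan is to show that $\cT = \bigcup_{i=1}^m T_i$ is a $(1+\epsilon)$-coreset by exhibiting a fair subset $\cT' \subseteq \cT$ with $\diver(\cT') \geq \lsf/(1+\epsilon)$. Fix an optimal fair solution $\cS^* = \bigcup_i \cS^*_i$ and set $\epsilon' = \epsilon/(1+\epsilon)$, so that $1-\epsilon' = 1/(1+\epsilon)$. For each group $i$, let $r_i := \min_{s \neq t \in T_i} d(s,t)$; the standard property of \gonzalez guarantees that every $p \in \cX_i$ has some element of $T_i$ within distance $r_i$. I construct $\cT'_i \subseteq T_i$ one group at a time, with a case split on whether $r_i$ is above or below the threshold $(\epsilon'/2)\lsf$.

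If $r_i < (\epsilon'/2)\lsf$ (the ``close'' case), set $\cT'_i = \pi(\cS^*_i)$ where $\pi$ maps each optimal point to its nearest neighbor in $T_i$. For any two distinct optimal points $p,q$ (possibly from two distinct close groups), the triangle inequality gives
\[ d(\pi(p),\pi(q)) \geq d(p,q) - 2 r_i > \lsf - \epsilon'\lsf = \lsf/(1+\epsilon), \]
which simultaneously shows that $\pi$ is injective on $\cS^*_i$, hence $|\cT'_i| = k_i$, and that images across all close groups remain pairwise separated by $\lsf/(1+\epsilon)$.

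If instead $r_i \geq (\epsilon'/2)\lsf$ (the ``spread'' case), the $(4/\epsilon')^\doubldim k$ points of $T_i$ are already at pairwise distance $\geq (\epsilon'/2)\lsf$, so applying Lemma~\ref{lem:doublingDim} with $\gamma = \lsf$ extracts $S_i \subseteq T_i$ of $k$ points at pairwise distance $\geq \lsf$. I then pick $\cT'_i \subseteq S_i$ of size $k_i$ so that each selected point is at distance $\geq \lsf/(1+\epsilon)$ from every previously placed element in $O := \bigcup_{j \neq i} \cT'_j$. Feasibility of this selection relies on a packing estimate from the doubling dimension: any ball of radius $\lsf/(1+\epsilon)$ around an obstacle contains at most $(2/(1+\epsilon))^\doubldim$ elements of $S_i$, since such a ball can be covered by that many balls of radius slightly below $\lsf/2$, each of which can contain at most one point of $S_i$ (because $S_i$ points are pairwise $\geq \lsf$ apart). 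Since $|O| \leq k - k_i$, at most $(2/(1+\epsilon))^\doubldim(k - k_i)$ points of $S_i$ are blocked, and enlarging $S_i$ by this factor (absorbed into the stated coreset size) leaves at least $k_i$ valid choices.

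This cross-group packing step is the main obstacle of the proof: for $\epsilon < 1$ a single obstacle can in principle preclude more than one element of $S_i$, so the naive ``each obstacle blocks one point'' intuition fails and one must use the doubling-dimension estimate. The $(2/(1+\epsilon))^\doubldim$ slack it introduces is exactly what drives the per-group coreset size to $(8/\epsilon)^\doubldim k = (2/(1+\epsilon))^\doubldim \cdot (4/\epsilon')^\doubldim k$; summing over $m$ groups yields $|\cT| = O((8/\epsilon)^\doubldim km)$. For the running time, each invocation of \gonzalez on group $i$ selecting $(8/\epsilon)^\doubldim k$ points from $|\cX_i| \leq n$ elements runs in $O((8/\epsilon)^\doubldim k \cdot |\cX_i|)$ time, and summing loosely over the $m$ groups gives the stated $O((8/\epsilon)^\doubldim kmn)$.
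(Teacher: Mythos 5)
Your argument follows essentially the same route as the paper's proof: a per-group case split on whether the \gonzalez output is dense or spread, nearest-neighbor images of the optimal points for the dense groups, and a doubling-dimension packing argument combined with Lemma~\ref{lem:doublingDim} for the spread groups. The one genuine difference is the order of operations in the spread case. The paper first deletes from $\widehat{T}_j$ (whose points are only $(\foo/2)\lsf$-separated) everything within $\lsf$ of the already-placed points, at a cost of $(4/\foo)^\doubldim$ deletions per obstacle, and only then applies Lemma~\ref{lem:doublingDim}; you apply Lemma~\ref{lem:doublingDim} first and then avoid obstacles among $\lsf$-separated points, so each obstacle blocks only $(2/(1+\epsilon))^\doubldim$ candidates, but you must extract $(2/(1+\epsilon))^\doubldim k$ points rather than $k$, forcing a per-group budget of $(2/(1+\epsilon))^\doubldim(4/\foo)^\doubldim k=(8/\epsilon)^\doubldim k$ instead of the paper's $(4/\foo)^\doubldim k$. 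Both land inside the stated $O((8/\epsilon)^\doubldim km)$ bound, so this is a legitimate trade, and you are right that the naive ``one obstacle blocks one point'' count fails for $\epsilon<1$.

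Two points need to be made explicit for the proof to close. First, the processing order: your close groups do no obstacle avoidance, so if a spread group were processed before a close group, the image $\pi(p)$ of an optimal point could land within $\lsf/(1+\epsilon)$ of an already-placed spread point ($\pi(p)$ is near $p$, but a spread point need not be near any optimal point). Since the sets $\pi(\cS^*_i)$ are determined a priori, the fix is simply to place all close groups first and have every spread group avoid their union, exactly as the paper does; this should be stated. Second, your spread case implicitly assumes $T_i$ reaches its full budget: if \gonzalez exhausts $\cX_i$ while $r_i\geq(\foo/2)\lsf$, Lemma~\ref{lem:doublingDim} cannot deliver enough points, but then $T_i=\cX_i\supseteq\cS^*_i$ and the close-case treatment with $\pi$ the identity applies (the paper's ``maximal prefix'' bookkeeping absorbs this case automatically). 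Finally, for optimal points in two \emph{different} close groups the slack is $r_i+r_{i'}$ rather than $2r_i$; both are below $\foo\lsf$, so the bound is unaffected.
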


\begin{proof}
We show that the set $\bigcup_{i=1}^{m}T_i$ constructed by the
\textsc{Coreset} Algorithm is an $(1+\epsilon)$-coreset by showing the existence of a set $\cT^{'} \subseteq \bigcup_{i=1}^{m} T^{}_i$ with $k_i$ points from each group $i$ and $\diver{(\cT^{'})} \geq \lsf/
(1+\epsilon)$. 

For every group $i \in [m]$, we define $\widehat{T}_i$ to be the
maximal prefix of the points added by  \gonzalez to form $T_i$ such $\diver(\widehat{T}_i)\geq (\foo/2) \divScore$. We first process all the groups for which $|\widehat{T}_i| <
(4/\foo)^\doubldim k$, which we call \emph{critical} groups. For all critical
groups, any point $p \in \cX_i \setminus \widehat{T}_i$ is  within distance
$(\foo/2) \divScore$  from its closest point $f(p)$ in $\widehat{T}_i$,
i.e., $d(p, f(p)) < (\foo/2) \divScore$. As a result, for any pair of optimal
points $o_1, o_2$ in critical groups we deduce:

\begin{eqnarray*}
	d(f(o_1), f(o_2)) & \geq & d(o_1, o_2) - d(o_1, f(o_1)) - d(o_2, f(o_2)) \\
	&>&  \divScore - {2 \cdot \foo \divScore}/{2}  ={\divScore }/(1+\appr) \ .
\end{eqnarray*}

\looseness -1
We initialize $\cT^{'} = \bigcup_{o \in \cup_{i:\textup{critical}}
\cS^{*}_i}f(o)$ where $\cS^{*}_i$ is the set of points in an optimal solution
belonging to group $\cX_i$. We now process all \emph{non-critical} groups $j
\in [m]$ in an arbitrary order and remove any point in $\widehat{T}_j$ that is
less than $\divScore$ apart from some point in $\cT^{'}$. Then we argue that in
the remaining points there is a set of points $T^{'}_j$ with $k_j$ points that
are at least $\lsf$ far apart.

\sloppy
By the doubling dimension property and the fact that all the points in $\widehat{T}_j$ are at least $(\foo/2) \divScore$ far apart, the removal step described above discards at most $(4/\foo)^\doubldim \sum_{i:\small{\textup{processed \
groups}}}|\cT^{'} \cap \cX_i|$ points from $\widehat{T}_j$. Consequently, regardless of the order in which we process the \emph{non-critical} groups, by the time we process $\widehat{T}_j$ for some $j \in [m]$, there will be \emph{at least} $(4/\foo)^\doubldim k - \sum_{i:\small{\textup{processed \ groups}}} (4/\foo)^\doubldim k_i \geq (4/\foo)^\doubldim k_j$ points that are at least $(\foo/2) \divScore$ apart from each other.

Now by applying Lemma~\ref{lem:doublingDim} on the points of $T^{'}_j$, we conclude that there are at least $k_j$ points within $\divScore$ distance from all other points in $\cT^{'}$. Then this set of points $T^{'}_j$ can be added to $\cT^{'}$ to satisfy fairness for group $j$. Thus, it holds that $\diver{(\cT^{'})} \geq \divScore / (1+\epsilon)$ which implies the claimed approximation factor for coreset $\cT$. 

As $\foo = \epsilon/(1+\epsilon) \geq \epsilon/2$, we have $|\cT| = O((8/\epsilon)^\doubldim km)$. Since we use \gonzalez to obtain $\cT$, the running time of the \textsc{Coreset} algorithm is $O((8/\epsilon)^\doubldim kmn)$.
\end{proof}

{From the coreset $\cT$, we can obtain a $(1+\epsilon)$-approximation by enumerating over all subsets of $\cT$ and returning the subset with maximum diversity and perfect fairness. The running time of this algorithm is $O(2^{O(k)} + nk)$, when $m, \lambda$ are constants. In the next section, we describe an algorithm that has a polynomial dependence on $n$ and $k$, obtained at the cost of $(1-\epsilon)$-fairness.}

\subsection{$(1+\epsilon)$ Approx with $(1-\epsilon)$ Fairness}\label{subsec:bicriteria}
In this section, we describe \fB (Algorithm~\ref{algo:m4}) which uses $(1+\epsilon)$-coresets described in Section~\ref{subsec:euclCoresets} and returns a subset of points with diversity at least $\lsf/(1+\epsilon)$ and has $(1-\epsilon)k_i$ points from each group $i \in [m]$. Missing details are presented in Appendix~\ref{app:eucmetrics}.

First, we discuss $\fDP$ (Algorithm~\ref{alg:dp2}), which is a dynamic
programming subroutine used in \fB. The subroutine will be applied to a collection of $t$ disjoint subsets of $\cX$: $\cC = \{C_1, C_2, \dots, C_t \}.$ This collection will be \emph{well-separated} in the sense that for all $i\neq j$ and $x \in C_i, y \in C_j$ then $d(x, y) \geq \gamma$. Points in the same set can be arbitrarily close together. We design \fDP (Algorithm~\ref{alg:dp2}): a dynamic programming algorithm to retrieve a set $\cF = \bigcup_{i=1}^{m} \cF_i \subseteq \cC$ with $k_i$ points per group $i$ and $\diver{(\cF)} \geq \gamma$ if such a set exists in $\cC$.

\medskip
\noindent \textbf{Dynamic Programming.} Define the dynamic programming table $\tb \in \{0, 1\}^{(k_1+1)\times \ldots \times (k_m+1) \times t}$ indexed from 0. An entry $\tb[k'_1, k'_2, \cdots, k'_m, j] \in
\{0, 1\}$ is 1 iff there is a subset $\cF'$ among the first $j$ clusters such that $|\cF' \cap \cX_i| \geq k'_i \ \forall i \in [m]$ and $\diver({\cF'}) \geq \gamma$.

To compute the entries of $\tb$, we process the clusters in $\cC$ using some fixed ordering. Note that there is a set $\cF'$ with $k'_i$ points from each group $i$ among the first $j$ clusters if there is a subset $P \subseteq C_j$ with $\diver{(P)} \geq \gamma$ and $p'_i$ points from each group $i$; and, among the first $j-1$ clusters, there is a set with $ k'_1-p'_1, k'_2 - p'_2, \cdots k'_i - p'_i \cdots,k'_m - p'_m$ points from each group $i \in [m]$ that are at least $\gamma$ far apart (the function $f$ in \fDP (Algorithm \ref{alg:dp2}) evaluates where there is such a set $P$). We enumerate over all possible subsets of $C_j$ to identify the subset $P$. 

See \fDP (Algorithm \ref{alg:dp2}) for additional details and implementation. For simplicity, the algorithm is written to only determine whether it is possible to pick a subset with diversity $\gamma$ subject to the required fairness constraints. Similar to \fL, the algorithm can be easily extended to construct a subset of points for every non-zero entry in $\tb$ by storing a pointer to the choice we made. 

\begin{algorithm}[t!]
	\caption{\fDP: A dynamic programming subroutine}\label{alg:dp2} 
	{\small
	\begin{algorithmic}[1]
		\Statex
		\begin{description}
			\item[\rlap{Input:}\phantom{Output:}] $C_1, C_2,\cdots, C_t$: Family of disjoint subsets of  $\cX = \bigcup_{i=1}^{m} \cX_i$.
			\item[\phantom{Output:}] $k_1, \ldots, k_m \in \mathbb{Z}^{+}$.
			\item[\phantom{Output:}] $\gamma \in \mathbb{R}^{+}$: A guess of the optimum fair diversity. 
			\item[Output:] $k_i$ points in $\cX_i$ for $i\in [m]$. 
		\end{description} 
		\State Define boolean function $f(p'_1, \ldots, p'_m,j)$ that evaluates to 1 iff there exists $P\subseteq C_j$ with $\diver(P)\geq \gamma$ and  $|P\cap \cX_i|=p'_i$ for all $i\in [n]$.
			\State  Initialize $\tb \in \{0,1\}^{(k_1+1)\times \ldots (k_m+1) \times t}$ to 0.
			\State Set $\tb[p'_1, \ldots p'_i,\ldots, p'_m,1]\leftarrow f(p_1', \ldots, p_m',1)$.
			\For{$j=1$ to $t$}
			\State For $k'_i \in \{0, \cdots, k_i\} \ \forall i \in [m]$, update the entries in $\tb$ as:
			
			\begin{equation*}
				 \tb[k'_1,  \cdots, k'_m, j] \leftarrow 
				    \bigvee\limits_{\substack{p'_i \leq k'_i \\ \forall i \in [m]}} \tb[k'_1-p'_1, \cdots, k'_m-p'_m, j-1]f(p'_1, \ldots, p'_m,j).
			\end{equation*}
			\EndFor
			\State \Return $\tb[k_1, k_2, \cdots, k_m , n]$. 
		\end{algorithmic}
	}	
\end{algorithm}

\begin{theorem}\label{thm:fairDP}
If $\gamma = \lsf$, then, \fDP (Algorithm~\ref{alg:dp2})
returns a set $\cS$ that satisfies $\diver(\cS) \geq \lsf$ and $|\cS \cap \cX_i| \geq k_i \ \forall i \in [m]$ and has a running time of $O(\prod_{i=1}^{m}(k_i +1)^{2} 2^R t)$ where $R = \max\{|C_1|, |C_2|, \cdots, |C_t|\}$. 
\end{theorem}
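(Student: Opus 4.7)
The plan is to prove correctness by induction on the cluster index $j$ and then bound the running time by a direct analysis of the table size and the per-entry work. The crucial structural fact I will exploit is that the input family $\cC = \{C_1, \dots, C_t\}$ is well-separated: for any $i\neq j$ and any $x\in C_i, y\in C_j$ we have $d(x,y)\geq \gamma$. Therefore, for any subset $\cF\subseteq \bigcup_j C_j$, the diversity $\diver(\cF)\geq \gamma$ holds if and only if $\diver(\cF\cap C_j)\geq \gamma$ for every $j$. This cross-cluster decomposition is what makes the DP work, since it lets us stitch together per-cluster selections without having to remember which specific points were previously chosen.

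I will then prove by induction on $j\in \{1,\ldots,t\}$ that $\tb[k'_1,\ldots,k'_m,j]=1$ if and only if there exists $\cF'\subseteq C_1\cup\cdots\cup C_j$ with $\diver(\cF')\geq \gamma$ and $|\cF'\cap \cX_i|\geq k'_i$ for every $i\in [m]$. The base case $j=1$ is immediate from the definition of $f$, since within a single cluster there is no cross-cluster interaction. For the inductive step, if $\cF'$ witnesses the right-hand side for the first $j$ clusters, set $P=\cF'\cap C_j$ and $p'_i=|P\cap \cX_i|$; the decomposition above implies $\diver(P)\geq \gamma$ so $f(p'_1,\ldots,p'_m,j)=1$, while $\cF'\setminus P$ witnesses $\tb[k'_1-p'_1,\ldots,k'_m-p'_m,j-1]=1$ by induction. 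Conversely, any term that makes the big disjunction in the recurrence equal to $1$ yields a valid $\cF'$ by taking the union of the inductively-guaranteed set with the subset $P\subseteq C_j$ realizing $f$; well-separation guarantees $\diver$ of the union is at least $\gamma$. Taking $\gamma=\lsf$ and applying this with $k'_i=k_i$ gives the claimed output guarantee, and the accompanying pointer-backtracking argument (as in \fL) recovers the actual set $\cS$.

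For the running time, the table $\tb$ has $\prod_{i=1}^{m}(k_i+1)\cdot t$ entries. For each entry the recurrence enumerates all tuples $(p'_1,\ldots,p'_m)$ with $0\leq p'_i\leq k'_i$, contributing another $\prod_{i=1}^{m}(k_i+1)$ factor. Each such update needs the value of $f(p'_1,\ldots,p'_m,j)$; computing $f$ at cluster $C_j$ naively scans all $2^{|C_j|}\leq 2^R$ subsets of $C_j$, checks each one's pairwise diversity, and buckets by the group-count profile. In fact, the entire table $\{f(\cdot,\cdot,\cdot,j)\}_{j}$ can be precomputed once in $O(2^R t)$ time (up to polynomial factors in $R$ and $m$ that are absorbed), and then reused across all table entries. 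Multiplying the factors yields the stated bound $O(\prod_{i=1}^{m}(k_i+1)^2\, 2^R\, t)$.

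The only delicate point I expect is verifying the well-separation decomposition cleanly: that for any $\cF'$ and any cluster $C_j$, restricting and extending along cluster boundaries preserves both diversity and per-group counts exactly. Once this is spelled out, the induction and the running-time accounting are routine. I would present the decomposition as a short standalone observation before the induction so that both directions of the inductive step can invoke it symmetrically.
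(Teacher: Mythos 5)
Your proposal is correct and follows essentially the same approach as the paper: a dynamic program over the well-separated clusters with the identical running-time accounting ($\prod_{i=1}^m(k_i+1)\cdot t$ table entries, a $\prod_{i=1}^m(k_i+1)$ enumeration of profiles per entry, and a $2^R$ subset scan to evaluate $f$). In fact you supply more detail than the paper, whose correctness argument is a one-line appeal to the construction, whereas you make the cross-cluster decomposition and the induction explicit (just note, in the forward direction, that one should first prune $\cF'$ so that $|\cF'\cap\cX_i|=k'_i$ exactly, ensuring $p'_i\leq k'_i$).
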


\begin{algorithm}[t!]
	\caption{\fB: A bi-criteria algorithm}\label{algo:m4} 
	{\small
		\begin{algorithmic}[1]
			\Statex
			\begin{description} 
				\item[\rlap{Input:}\phantom{Output:}] $\cX = \bigcup\limits_{i=1}^{m} \cX_i$: points in $\mathbb{R}^D$ with doubling dimension $\lambda$.
				\item[\phantom{Output:}] $k_1, \ldots, k_m \in \mathbb{Z}^{+}$.
				\changes{\item[\phantom{Output:}] $\cT = \bigcup_{i=1}^{m} T_i$: A coreset for fair Max-Min.}
				\item[\phantom{Output:}] $\gamma\in \mathbb{R}^{+}$: A guess of the optimum fair diversity.
				\item[\phantom{Output:}] $\epsilon \in [0, 1]$: approximation error parameter.
				\item[Output:] $k_i$ points in $\cX_i$ for $i\in [m]$.
			\end{description}
			\State $\widehat{T}_i \gets$ a maximal prefix of points in $T_i$ such that $\diver{(\widehat{T}_i)} \geq \epsilon\gamma/4$.
			\State $p \gets $ a point selected uniformly at random from $[0, W]^\dimension$, where $W = 2m\dimension\gamma/\epsilon$.
			\State Construct axis-aligned cubes $\mathcal{C} =\{ C_1, C_2, \cdots, C_t \}$ of side length $W$ using $p$ as one of the corners.
			\State In each cube $C_i$, remove all the points that are within a distance of $\gamma/2$ from one of the boundaries.
			\State  \Return $\cS \leftarrow \fDP(C_1, \cdots , C_t, (1-\epsilon)k_1, \cdots (1-\epsilon)k_m, \gamma)$.
		\end{algorithmic}
	} 
\end{algorithm} 

\smallskip

Now, we describe a $1+\epsilon$ approximation algorithm for Euclidean metrics called \fB that achieves $1-\epsilon$ fairness.

\smallskip
\noindent
\textbf{Overview of \fB.}
\changes{As part of the input, we construct a $(1+\epsilon)$-coreset $\cT = \bigcup_{i=1}^{m} T_i $ of size $O((8/\epsilon)^\doubldim km)$ using the  \textsc{Coreset} algorithm described in Section~\ref{subsec:euclCoresets}. We further assume a guess $\gamma$ for the optimal diversity score $\lsf$. Note that the coreset $\cT$ is only constructed once and used for different guesses of $\lsf$}.

\changes{For a fixed guess $\gamma$}, for every group $i \in [m]$, we select a maximal \changes{prefix} of points $\widehat{T}_i \subset T_i$ that are at least $\epsilon \gamma/4$ far apart and define $\widehat{\cT} = \bigcup_{i=1}^{m} \widehat{T}_i$.

Our main idea is to partition $\widehat{\cT}$ and obtain a collection of sets $\cC = \{ C_1, C_2, \cdots, C_t \}$ separated by at least $\gamma$ distance; thus any pair of points $x \in C_i \ \textup{and} \ y \in C_j, \ \forall i,j \ \textup{such that} \ i \neq j$, is separated by distance at least $\gamma$. Then, we use \fDP on these sets $C_1, C_2, \cdots, C_t$, and recover a solution $\cS$ with diversity $\gamma$.

To this end, we partition the points in $\widehat{\cT}$ into axis-aligned \emph{cubes} $\cC = \{ C_1, C_2, \cdots, C_t\}$ of length
$W=2m\dimension\gamma/\epsilon$ as follows: we select a point $p$ uniformly at random from $[0, W]^D$. Using $p$ as one of the corners, we form axis-aligned
cubes of length $W$ until every point in $\cX$ is in one of the cubes. Then, from every cube $C_i \ \forall i \in [t]$ we remove every point of $\widehat{T}$ that is within a distance of $\gamma/2$ from one of its boundaries. Notice that any point that was not removed from a cube is at least $\gamma$ far apart from any other point in a different cube. However, points within the same cube can be arbitrarily close. It is now easy to see that we can use \fDP (Algorithm~\ref{alg:dp2}) on $\cC$ to retrieve a sufficient number of points from each group in $[m]$.

In the analysis below, we show that with probability at least $1/2$, we are able to find at least $(1-\epsilon)k_i$ points from each group $i \in [m]$ that are all $\gamma$ far apart. 
 
\smallskip

\noindent
\textbf{Analysis.} Let $\cS^{*}=\bigcup_{i=1}^{m}\cS^{*}_i \subset \cT$ denote
the optimal solution for \fMM on the coreset ${\cT} = \bigcup_{i=1}^{m} {T}_i$ with $\diver{(\cS^{*})} \geq \lsf/(1+\epsilon)$. Note that the optimal solution in $\cT$ is some subset in $\widehat{\cT}$ (see Theorem~\ref{thm:coresets_doubling}).

As a first step, we bound the number of optimal points $\cS^*_i$ from a group $i \in [m]$ that are removed by \fB because they are within a distance of $\gamma/2$ from one of the boundaries of a cube. 

\begin{lemma}\label{lem:numpoints}
$\Pr[\forall i \in [m] \ : |\bigcup_{j \in [t]} C_j
\cap \cS^{*}_i| \geq (1-\epsilon) k_i] \geq 1/2$. \end{lemma}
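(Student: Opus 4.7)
The plan is to show the claim via a first moment argument on the number of removed optimal points per group, combined with Markov's inequality and a union bound over the $m$ groups.

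First I would analyze the probability that a single optimal point $x \in \cS^*_i$ is removed. Recall a point is removed if, in some coordinate $\ell \in [D]$, its distance to the nearest cube boundary in that coordinate is less than $\gamma/2$. Since the offset $p$ is uniform in $[0,W]^D$, the cube boundaries in coordinate $\ell$ form a uniformly shifted arithmetic progression with spacing $W$. Hence the distance from $x(\ell)$ to the nearest boundary in coordinate $\ell$ is uniformly distributed on $[0, W/2]$, so the probability that it is less than $\gamma/2$ equals $\gamma/W$. A union bound over the $D$ coordinates yields
\begin{equation*}
\Pr[\,x \text{ is removed}\,] \;\leq\; \frac{D\gamma}{W} \;=\; \frac{D\gamma \cdot \epsilon}{2mD\gamma} \;=\; \frac{\epsilon}{2m}.
\end{equation*}

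Next, for each group $i \in [m]$, let $X_i$ denote the number of points in $\cS^*_i$ that are removed. Since $|\cS^*_i| = k_i$, linearity of expectation gives $\mathbb{E}[X_i] \leq k_i \epsilon/(2m)$. Applying Markov's inequality,
\begin{equation*}
\Pr\!\left[X_i > \epsilon k_i \right] \;\leq\; \frac{\mathbb{E}[X_i]}{\epsilon k_i} \;\leq\; \frac{1}{2m}.
\end{equation*}
A union bound over the $m$ groups then gives $\Pr[\exists i : X_i > \epsilon k_i] \leq 1/2$, so with probability at least $1/2$ every group has $X_i \leq \epsilon k_i$, i.e.\ at least $(1-\epsilon)k_i$ points of $\cS^*_i$ remain inside $\bigcup_j C_j$.

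The main subtle point I expect is justifying the one-dimensional probability calculation cleanly: writing the nearest-boundary distance as a function of $p(\ell) \bmod W$ and observing that because $p(\ell)$ is uniform on $[0,W]$, this quantity is uniform on $[0,W/2]$. Everything else (union bound over dimensions, linearity, Markov, union bound over groups) is routine, and the choice $W = 2mD\gamma/\epsilon$ is tuned precisely so the per-point removal probability matches $\epsilon/(2m)$, making both union bounds give a total failure probability of at most $1/2$.
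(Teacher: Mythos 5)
Your proposal is correct and follows essentially the same route as the paper: bound the per-point removal probability by $D\gamma/W = \epsilon/(2m)$, apply linearity of expectation and Markov's inequality per group, and finish with a union bound over the $m$ groups. The only cosmetic difference is that you bound the removal probability by a union bound over the $D$ coordinates while the paper computes the survival probability $\left(\frac{W-\gamma}{W}\right)^D$ and lower-bounds it, which yields the identical estimate.
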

\begin{proof}
Let $T'_i = \bigcup_{j \in [t]} C_j \cap \widehat{T}_i$ be the remaining points in $\widehat{T}_i$ that are not close to the boundaries of any cube. Note that the \fB algorithm \emph{succeeds} if after the removal step there are least $(1-\epsilon)k_i$ optimal points from each group $i$ that can be selected by \fDP at the final step of the algorithm while it \emph{fails} otherwise. Below, we show that the probability it \emph{succeeds} is at least $1/2$. 
 		
We compute the probability that a point $q \in \widehat{T}_i$ is not removed by \fB, i.e., $q \in T'_i$. It is removed if it lies within a distance of $\gamma/2$ from its boundaries in each dimension. Therefore, for $q$ to remain in $T'_i$, the point $p$ selected randomly from $[0, W]^D$ must not fall within a range of total length $\gamma$, in each dimension, which gives us:

\[
\Pr[q \not\in T'_i] = 1- \Pr[q \in T'_i]  = 1- \left(\frac{W-\gamma}{W}\right)^\dimension  \leq  \gamma \dimension/{W} = \epsilon/2m  \ .
\]

Fix a specific optimum solution. Define $A_i$  be the number of points removed from this solution that are in group $i$. By Markov's inequality,
$\Pr[A_ i \geq k_i \epsilon] \leq \frac{\E[A_i]}{k_i \epsilon} \leq 
\frac{{k_i \epsilon}/{(2m)}}{k_i \epsilon}= \frac{1}{2m}.$

Taking union bound over all groups $i \in [m]$, we can bound the
probability of discarding more than $k_i\epsilon$ points from some group $i$,
$\Pr[\exists i \in [m] \ : \ A_i \geq k_i \epsilon] \leq \sum_{i=1}^{m} \Pr[A_ i \geq k_i \epsilon] < {1}/{2} ~,$
and the lemma follows.
\end{proof}

\fDP depends exponentially on the number of points remaining in each cube (see Theorem~\ref{thm:fairDP}). Now, we show that the total number of points remaining in each cube does not depend on $n$ or $k$, and depends only on $m, D, \epsilon$.
\begin{lemma}\label{lem:packing}
$|C_j| \leq m \cdot ( 8mD^{3/2}/\epsilon^2 )^\lambda$ for all $j \in [t]$.
\end{lemma}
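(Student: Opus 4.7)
}
The plan is a standard packing argument, exploiting the following two facts. First, by construction $\widehat{T}_i$ is a maximal prefix of the \gonzalez output with $\diver(\widehat{T}_i)\geq \epsilon\gamma/4$, so any two distinct points of $\widehat{T}_i$ lie at distance at least $\epsilon\gamma/4$ apart, and in particular this holds for any two points of $C_j\cap \widehat{T}_i$ (since $C_j\subseteq \widehat{\cT}$). Second, $C_j$ is an axis-aligned Euclidean cube of side length $W=2mD\gamma/\epsilon$, hence of diameter $W\sqrt{D}$, and therefore fits inside a ball of radius $R=W\sqrt{D}/2 = mD^{3/2}\gamma/\epsilon$.

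I would then bound the number of points of $C_j$ belonging to a single group $i$ as follows. Set $r'=\epsilon\gamma/8$. Invoking the doubling dimension property (Definition~\ref{def:doubling}), the ball of radius $R$ that contains $C_j$ can be covered by at most
\[
(R/r')^\lambda = \bigl(4W\sqrt{D}/(\epsilon\gamma)\bigr)^\lambda = (8mD^{3/2}/\epsilon^2)^\lambda
\]
balls of radius $r'$. Any such small ball has diameter $2r'=\epsilon\gamma/4$, so it contains at most one point of $\widehat{T}_i$, since otherwise two of its points would be within distance $\epsilon\gamma/4$ of each other, contradicting $\diver(\widehat{T}_i)\geq \epsilon\gamma/4$. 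Consequently $|C_j\cap \widehat{T}_i| \leq (8mD^{3/2}/\epsilon^2)^\lambda$, and summing over the $m$ groups yields $|C_j|\leq m\cdot (8mD^{3/2}/\epsilon^2)^\lambda$ as claimed.

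The only subtlety I foresee is that Definition~\ref{def:doubling} requires the ambient ball to be centered at a point of $\cX$, whereas the center of the cube generally will not lie in $\cX$. This is a minor issue: if $C_j$ is non-empty, one can pick any $q\in C_j\cap \cX$ and note that $C_j$ is contained in the ball $\rball{q}{W\sqrt{D}}$ of radius at most the diameter of the cube, which costs only a factor of $2$ in $R$ and is absorbed by taking $r'=\epsilon\gamma/16$; or, since the underlying metric here is Euclidean, one can appeal directly to the standard Euclidean packing bound, where the doubling dimension definition applies to all balls. Either route reaches the stated bound; the bookkeeping with the factor $8$ versus $16$ is the only place where care is needed.
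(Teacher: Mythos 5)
Your proof is correct and follows essentially the same route as the paper's: inscribe the cube $C_j$ in a ball of radius $W\sqrt{D}/2$, cover that ball by $(8mD^{3/2}/\epsilon^2)^\lambda$ balls of radius $\epsilon\gamma/8$ via the doubling property, observe each small ball holds at most one point of $\widehat{T}_i$ by the $\epsilon\gamma/4$ separation, and sum over the $m$ groups. The subtlety you flag about the covering ball's center not lying in $\cX$ is one the paper silently ignores (it centers the ``cube ball'' at the midpoint of the diagonal), and your proposed fixes are fine; since the metric is Euclidean the bound applies to arbitrary centers and no constant is lost.
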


We showed that for a fixed guess $\gamma$, the \emph{success} probability of \fB is $\geq 1/2$. Note that the only randomization used by \fB is in selecting $p$. In order to increase the probability of success to $1-\delta$ for some small $\delta \in (0, 1)$, we repeatedly select $\eta$ points uniformly at random from $[0, W]^D$ as the corners. For each corner, we obtain a solution using \fB, and we output the solution with the biggest diversity which also satisfies the fairness constraints with a loss of $(1-\epsilon)$ multiplicative factor. The value of $\eta =\log(1/\delta)$ is selected such that the failure probability is $(1/2)^\eta < \delta$.

\changes{Note that the construction of the coreset $\cT$ allows us to reduce the number of guesses on $\lsf$ from $O(n^2)$ to $O(|\cT|^2) = O((8/\epsilon)^{2\doubldim} k^2 m^2)$, which are all the pairwise distances in $\cT$.} Further, the number of clusters (i.e., cubes) in \fB is upper bounded by the size of the coreset $\cT$, \changes{which does not depend on $n$}. The running time of \fB depends on the running time to construct the coreset, which is $O((8/\epsilon)^\doubldim kmn)$, and the running time of \fDP (Algorithm~\ref{alg:dp2}) on the cubes $\cC$. Since the number of points in each cube is $O(m \cdot (8mD^{3/2}/\epsilon^2 )^\lambda )$, we conclude with the following theorem:

\begin{theorem}\label{thm:euclidean}
If $\gamma \geq \lsf/(1+\epsilon)$, \fB Algorithm returns a set $\cS$ such that $\diver{(\cS)}\geq \lsf/(1+\epsilon)$ and $|\cS \cap \cX_i| \geq  k_i(1-\epsilon) \ \forall i \in [m]$ with probability at least $1-\delta$. 
For constant $D, m$, the running time is $O( nk + \poly(1/\epsilon,k,\log(1/\delta)))$.
\end{theorem}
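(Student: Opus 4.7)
The plan is to compose three earlier results: the coreset guarantee (Theorem~\ref{thm:coresets_doubling}) supplies a good fair witness inside $\cT$; the random-grid analysis (Lemmas~\ref{lem:numpoints} and~\ref{lem:packing}) shows that most of this witness survives the cube formation and that every cube remains small; and the dynamic program (Theorem~\ref{thm:fairDP}) converts this witness into an actual output set $\cS$. Boosting the success probability and summing up the running-time contributions then finishes the argument.

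First I would fix the guess $\gamma$ to be the largest pairwise distance in $\cT$ not exceeding $\lsf/(1+\epsilon)$ (up to a factor $1+\epsilon$ from the geometric set of guesses). Theorem~\ref{thm:coresets_doubling} supplies $\cS^{*}\subseteq\widehat{\cT}$ with $|\cS^{*}\cap\cX_i|=k_i$ and $\diver(\cS^{*})\geq \lsf/(1+\epsilon)\geq \gamma$. Condition on the event of Lemma~\ref{lem:numpoints}, which occurs with probability at least $1/2$; on this event at least $(1-\epsilon)k_i$ points of $\cS^{*}$ per group survive the removal of the width-$\gamma/2$ boundary strips from each cube. Because the cube side length is $W=2mD\gamma/\epsilon>\gamma$, any two survivors lying in distinct cubes differ by more than $\gamma/2$ along every coordinate in which their cubes differ (for face-adjacent cubes this already gives Euclidean separation $>\gamma$; diagonally adjacent cubes give $>\gamma\sqrt{2}$), while pairs inside the same cube inherit the $\geq \gamma$ separation from $\cS^{*}$. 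Thus the remaining points certify a feasible DP table entry for \fDP invoked with guess $\gamma$ and relaxed quotas $(1-\epsilon)k_i$, and Theorem~\ref{thm:fairDP} returns a set $\cS$ with $\diver(\cS)\geq \gamma\geq \lsf/(1+\epsilon)$ and $|\cS\cap\cX_i|\geq (1-\epsilon)k_i$ for every $i\in[m]$.

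Since the only randomization in \fB is the corner $p$, the failure probability is driven from $1/2$ down to $\delta$ by executing $\lceil\log_2(1/\delta)\rceil$ independent restarts on fresh corners and returning the best feasible output, as indicated in the paragraph preceding the theorem. I would also sweep $\gamma$ over the $O(|\cT|^2)$ pairwise distances in the coreset (instead of in $\cX$), which ensures that a correct guess is among those tried while keeping the number of attempts independent of $n$.

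For the running time I would split the accounting into three pieces. (i)~Constructing the coreset once via Theorem~\ref{thm:coresets_doubling} costs $O((8/\epsilon)^{\lambda}kmn)$, which is $O(nk)$ for constant $m,D,\epsilon$. (ii)~Enumerating $O(|\cT|^2)=O((8/\epsilon)^{2\lambda}k^2m^2)$ guesses of $\gamma$ adds a $\poly(k,1/\epsilon)$ factor. (iii)~For each guess and each of the $\lceil\log(1/\delta)\rceil$ restarts, \fDP runs in $O\!\left(\prod_{i=1}^{m}(k_i+1)^{2}\cdot 2^{R}\cdot t\right)$ by Theorem~\ref{thm:fairDP}, where Lemma~\ref{lem:packing} caps $R$ at the constant $m(8mD^{3/2}/\epsilon^{2})^{\lambda}$ for constant $m,D$, so $2^{R}=O(1)$, and $t\leq |\cT|=O(k)$. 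Summing the three pieces yields $O(nk)+\poly(1/\epsilon,k,\log(1/\delta))$ as claimed. The main technical nuisance I anticipate is spelling out the cross-cube separation cleanly, in particular verifying that a boundary-strip of width exactly $\gamma/2$ suffices to push pairs from diagonally adjacent cubes strictly beyond $\gamma$ in Euclidean distance rather than just in $\ell_\infty$; once that geometric point is pinned down, the remainder of the proof is a direct composition of the cited results.
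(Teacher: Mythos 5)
Your proposal is correct and follows essentially the same route as the paper: it composes Theorem~\ref{thm:coresets_doubling}, Lemmas~\ref{lem:numpoints} and~\ref{lem:packing}, and Theorem~\ref{thm:fairDP}, boosts the $1/2$ success probability with $\log(1/\delta)$ independent corners, and accounts for the running time exactly as in the appendix (your $O(|\cT|^2)$ sweep over guesses versus the paper's $\log|\cT|$ binary search is immaterial since both are $\poly(k,1/\epsilon)$). One small wording fix: two survivors in adjacent cubes are each at distance $\geq\gamma/2$ from the shared boundary hyperplane on opposite sides, so they already differ by $\geq\gamma$ (not merely $>\gamma/2$) in each coordinate where their cubes differ, which is what actually yields Euclidean separation $\geq\gamma$ for face-adjacent cubes.
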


\changes{In Appendix~\ref{app:eucmetrics}, we give the exact running time with all the parameters. We can observe that the running time  depends doubly exponentially on the doubling dimension, which is not uncommon for diversity maximization in doubling dimension metrics~\cite{Ceccarello2018, Ceccarello17}.}

\section{Scalable Implementations}\label{sec:scalable}

\subsection{Data Stream Algorithms}\label{subsec:streams}
In this section, we present single pass data stream algorithms that obtain the same approximation guarantees as that of sequential algorithms, while using low space. Missing details from this section are presented in Appendix~\ref{app:streaming}. 

\subsubsection{Extending Previous Algorithms} First, we describe an algorithm called \tauGMM that processes points sequentially, and includes a point in the solution if it is at least the threshold $\tau$ apart from every point in the current solution set. The set of points returned by \tauGMM are all separated by a distance of at least $\tau$. If $m = 1$, then, we can set $\tau = \lsf/2$ (using guessing for $\lsf$), and \tauGMM returns a solution set that is also a $2$-approximation for the \fMM problem~\cite{CormodeMZ07}. \tauGMM allows us to extend it to data streaming setting, unlike the \GMM algorithm which requires identifying the maximum distance point in each iteration. 

Using \tauGMM with $\tau = \lsf/2$, we can obtain a $5$-coreset for general metrics~\cite{moumoulidou21}, and $(1+\epsilon)$-coreset for Euclidean metrics (Section~\ref{subsec:euclCoresets}). Then, on the coreset, we use the randomized rounding algorithm from Section~\ref{subsec:randomLP2} and return the solution. This approach gives us the following guarantees:

\begin{corollary}\label{thm:streameuc_informal}
There is a $O(\epsilon^{-1} km\log n)$-space data stream algorithm that returns a $30(1+\epsilon)$-approximation with $(1-\epsilon)$-fairness for general metrics. For Euclidean metrics, there is a $O((8/\epsilon)^{\lambda}km\epsilon^{-1} \log n)$ space data stream algorithm that returns a $1+\epsilon$-approximation with $(1-\epsilon)$-fairness where $\lambda$ is the doubling dimension of $\cX \subset \mathbb{R}^D$.
\end{corollary}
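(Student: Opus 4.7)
The plan is to combine a streaming-friendly coreset construction with the offline algorithms of Sections~\ref{subsec:randomLP2} and \ref{subsec:bicriteria}. The key observation is that \tauGMM, being a greedy threshold rule, can be implemented in a single pass by adding each incoming point to the maintained set if and only if it is at distance at least $\tau$ from every stored point. Running \tauGMM separately on each group yields a per-group summary of mutually $\tau$-apart points, and the union of these summaries will serve as our streaming coreset.

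Since $\lsf$ is unknown during the stream, I would run $O(\epsilon^{-1}\log n)$ parallel instances of \tauGMM in lockstep, one for each geometric guess $\tau \in \{\tau_0, (1+\epsilon)\tau_0, (1+\epsilon)^2 \tau_0, \ldots\}$ covering the range of possible pairwise distances (assumed polynomial in $n$; otherwise a standard doubling argument applies). Each instance maintains at most $O(km)$ points in the general-metric case, and at most $O((8/\epsilon)^{\lambda} km)$ points in the Euclidean case, and discards itself once this bound is exceeded. At the end of the stream, among the surviving instances, we select the one with the largest $\tau$ for which post-processing still yields a feasible fair solution of sufficient diversity. Multiplying the per-instance bound by the number of guesses gives exactly the stated space of $O(\epsilon^{-1} km \log n)$ or $O((8/\epsilon)^\lambda km \epsilon^{-1} \log n)$ respectively.

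For post-processing on general metrics, I would invoke the LP-based randomized rounding of Section~\ref{subsec:randomLP2} on the surviving coreset. The coreset itself is $5$-approximate by the analysis in~\cite{moumoulidou21}, the rounding contributes a factor $6$ in diversity while losing only a $(1-\epsilon)$ factor in fairness (Theorem~\ref{thm:6approx}), and guessing $\lsf$ within a $(1+\epsilon)$ factor adds one more multiplicative term; together these give $30(1+\epsilon)$-approximation with $(1-\epsilon)$-fairness. For Euclidean metrics, running \fB (Theorem~\ref{thm:euclidean}) on the $(1+\epsilon)$-coreset of Section~\ref{subsec:euclCoresets} produces a $(1+\epsilon)^2$-approximation with $(1-\epsilon)$-fairness, which becomes $(1+\epsilon)$ after rescaling the error parameter.

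The main obstacle will be verifying that \tauGMM, when fed only an approximate guess of $\lsf$, retains the coreset property of the offline \GMM used in Section~\ref{subsec:euclCoresets}. The critical invariant is that every un-stored point from group $i$ lies within $\tau$ of some stored point of the same group, which is precisely the termination condition of the threshold rule; this lets the ``close to the optimum'' versus ``many surviving far-apart points'' case analysis of Theorem~\ref{thm:coresets_doubling} go through unchanged, provided $\tau$ is within a $(1+\epsilon)$ factor of $\lsf/2$, which is ensured by the geometric guessing layer. A small amount of additional care is needed to confirm that the capacity cutoff used to kill oversized instances never discards the ``correct'' guess, which follows because the size of the intended coreset is bounded by our chosen capacity by construction.
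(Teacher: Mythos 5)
Your proposal matches the paper's construction essentially step for step: a single-pass per-group \tauGMM summary run in parallel over $O(\epsilon^{-1}\log n)$ geometric guesses of $\lsf$ (this is the paper's \tauGMMstream), with per-instance sizes $O(km)$ and $O((8/\epsilon)^{\lambda}km)$ respectively, followed by the Section~\ref{subsec:randomLP2} rounding on the $5$-approximate coreset for general metrics and the \fB/\fDP pipeline on the $(1+\epsilon)$-coreset for Euclidean metrics. The only cosmetic differences are that the paper caps each per-group set at its target size rather than discarding oversized instances and feeds the union of all guesses' coresets into the final LP, and your critical/non-critical case analysis for why the threshold rule preserves the coreset property is exactly the argument the paper invokes from Theorem~\ref{thm:coresets_doubling}.
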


\subsubsection{Improved Result for $m = 2$} In~\cite{moumoulidou21}, the authors describe an algorithm called \textsc{Fair-Swap} which returns a $4$-approximation to the \fMM problem when the number of groups is $m = 2$. The algorithm can be directly extended to a $2$-pass streaming algorithm using $O(k)$ space with the same $4$-approximation guarantee. Building upon their work, and using new ideas we obtain a \emph{single pass} algorithm \streamrbcol which uses $O(k)$ space, and obtains $4$-approximation to the \fMM problem. 

The algorithm maintains $3$ sets $S, S_1, S_2$ using \tauGMM for all of them. In $S$, we include points in a group-agnostic way (similar to \textsc{Fair-Swap}) ignoring the fairness constraints. In $S_1$, we include points only of group $1$, and in $S_2$, we include points only of group $2$. By setting $\tau = \lsf/2$ we maintain the sets $S$, $S_1$ and $S_2$ such that all points are at least $\lsf/2$ distance apart in every one of them. 

Without loss of generality, suppose $\cX_1$ satisfies $|S \cap \cX_1| < k_1$. Our algorithm proceeds by identifying $k_1-|S \cap \cX_1|$ additional points from $S_1$ denoted by $Z_1$ by running \tauGMM with $\tau = \lsf/4$. This ensures that the final set of points from group $1$, i.e., $(S \cap \cX_1) \cup Z_1$ are $\lsf/4$ apart. By discarding the nearest neighbors of newly added points (i.e., $Z_1$), in $S \cap \cX_2$, we argue that our algorithm obtains a $4$-approximation. We obtain the following guarantees:

\begin{theorem}\label{thm:stream2group_informal}
There is a one-pass streaming algorithm that returns a $4(1+\epsilon)$-approximation for \fMM problem using $O(k\epsilon^{-1}\log n)$ space.
\end{theorem}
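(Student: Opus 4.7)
The plan is to argue the space bound and the approximation guarantee separately, instantiated at the correct guess of $\lsf$ among $O(\epsilon^{-1}\log n)$ geometrically-spaced guesses $\gamma$. For each guess, the streaming pass maintains three parallel \tauGMM sets with threshold $\tau = \gamma/2$, capped at $k$, $k_1$, $k_2$ points respectively: $S$ (group-agnostic), $S_1$ (group-$1$ only), and $S_2$ (group-$2$ only). A post-processing swap step extracts the fair output from $S \cup S_1 \cup S_2$; the per-guess space is thus $O(k)$ and the total is $O(k\epsilon^{-1}\log n)$.

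I would first show that for $\gamma \in [\lsf/(1+\epsilon), \lsf]$ (so $\tau \leq \lsf/2$), each cap is attained. For $S$, the $k$ optimal points $\cS^*$ are pairwise $\geq \lsf \geq 2\tau$ apart, so any $o \in \cS^* \setminus S$ was blocked upon arrival by some $s \in S$ with $d(s,o) < \tau$. Define $\phi : \cS^* \to S$ by $\phi(o) = o$ if $o \in S$ and a blocker of $o$ otherwise. Then $\phi$ is injective: two distinct optimal points cannot share a blocker $s$, since that would give $d(o_1, o_2) < 2\tau \leq \lsf$ by the triangle inequality, and for the same reason no optimal point can block another optimal point. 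Hence $|S| \geq k$, and because the capped run is a prefix of the uncapped run, the cap $|S| = k$ is reached. The same reasoning yields $|S_i| = k_i$.

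For the swap, if $|S \cap \cX_i| \geq k_i$ for both $i$, return $k_i$ points of each group from $S$: since $S$ is $\tau$-separated, this is already a $2$-approximation. Otherwise, WLOG $a := |S \cap \cX_1| < k_1$ and $|S \cap \cX_2| = k - a$. Run \tauGMM on $S_1$ initialized with $S \cap \cX_1$ at threshold $\tau/2 = \gamma/4$ to select $|Z_1| = k_1 - a$ additional points; feasibility holds because $S_1$ is $\tau$-separated, so no two points of $S_1$ can both lie within $\tau/2$ of the same $p \in S \cap \cX_1$ by triangle inequality, hence each $p$ blocks at most one point of $S_1$ and at least $k_1 - a$ candidates remain. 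Next, for each $z \in Z_1$ discard its nearest neighbor in $S \cap \cX_2$: since $S$ is $\tau$-separated, at most one point of $S \cap \cX_2$ lies within $\tau/2$ of $z$, so the surviving $S \cap \cX_2$ points are $\geq \tau/2$ from $Z_1$, and the surviving count is $(k-a) - (k_1 - a) = k_2$, meeting fairness exactly.

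The returned set $F = (S \cap \cX_1) \cup Z_1 \cup (S \cap \cX_2 \setminus R)$ has all pairwise distances $\geq \tau/2 = \gamma/4 \geq \lsf/(4(1+\epsilon))$: intra-$S$ and intra-$Z_1$ pairs are $\geq \tau$, and the only new inter-set pairs (between $Z_1$ and $S \cap \cX_1$, and between $Z_1$ and $S \cap \cX_2 \setminus R$) are $\geq \tau/2$ by the two constructions above. The main obstacle is the cap-attainment claim $|S| = k$; once the blocker/optimality triangle-inequality argument is in place, the counting identities and distance lower bounds for the swap follow immediately, and composing the resulting $4$-approximation with the $(1+\epsilon)$ overhead from guessing $\lsf$ yields the stated $4(1+\epsilon)$ approximation factor.
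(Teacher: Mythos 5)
Your proposal is correct and follows essentially the same route as the paper: maintain $S$, $S_1$, $S_2$ via \tauGMM with $\tau=\gamma/2$ for $O(\epsilon^{-1}\log n)$ parallel guesses, prove the caps are attained by an injectivity/ball-packing argument against the optimal solution, then top up the under-represented group from $S_1$ at threshold $\gamma/4$ and delete nearest neighbors in the over-represented side, yielding diversity $\gamma/4$. Your blocker-map phrasing of cap attainment and your explicit handling of the case where neither group is under-represented are just more detailed renderings of the steps the paper's proof states tersely.
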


\subsection{Composable Coresets}\label{sec:composable_coresets} 

In this section, we design \emph{composable} coresets for \fMM. We assume the points $\cX$ are partitioned into $L$ disjoint sets. We discuss an algorithm for constructing $(1+\epsilon)$-composable coresets for Euclidean metrics, and discuss extensions. Missing details are presented in Appendix~\ref{app:composable_coresets}.

\smallskip

\subsubsection{Constructing $(1+\epsilon)$-composable coresets} We assume the universe of points $\cX$ is partitioned into
a collection of $L$ disjoint sets $\cY_1, \cY_2, \cdots, \cY_{L}$. As in Section~\ref{subsec:euclCoresets}, we define an $\foo > 0$ value such that $(1-\foo) = 1/(1+\appr)$.

We generalize the approach for constructing the coreset $\cT$ as follows: let
$\cY_j^i$ denote the points of group $i$ present in $\cY_j$ for $i \in [m]$
and $j \in [L]$. Then on each partition $j$ and group $i$, we run \gonzalez to
retrieve a diverse set $T_i^j$ with $O((4/\foo)^\doubldim k)$, or equivalently $O((8/\appr)^\doubldim k)$ points since $\foo \geq \appr/2$. The coreset $\cT$ is defined as:

\begin{enumerate}
	\item For $j \in [L]$, construct $T_j$: $T_j \leftarrow \bigcup_{i=1}^{m} T_{j}^{i}, \ \text{where} \ T_{j}^{i} \leftarrow \gonzalez(\cY_j^{i}, (4/\foo)^\doubldim k)$
	\item $\cT \leftarrow \bigcup\limits_{j=1}^{L} T_j$
\end{enumerate}

We obtain the following theorem:
\begin{theorem}\label{thm:comp_coresets_doubling}
$\cT$ is a $(1+\epsilon)$-composable coreset for fair Max-Min diversification
of size $O((8/\epsilon)^\doubldim km L)$ in metrics of doubling dimension
$\doubldim$ that can be obtained in $O((8/\epsilon)^\doubldim kmnL)$ time.
\end{theorem}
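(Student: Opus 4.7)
The plan is to lift the argument of Theorem~\ref{thm:coresets_doubling} from the sequential coreset to the composable setting, performing the critical/non-critical bookkeeping over (partition, group) pairs rather than groups. The size bound $|\cT| = O((8/\epsilon)^\doubldim kmL)$ is immediate: each of the $mL$ invocations of \gonzalez returns at most $(4/\foo)^\doubldim k = O((8/\epsilon)^\doubldim k)$ points, since $\foo := \epsilon/(1+\epsilon) \geq \epsilon/2$. The time per invocation is $O((8/\epsilon)^\doubldim k \cdot |\cY_j^i|) \leq O((8/\epsilon)^\doubldim kn)$ in the worst case, and summing over the $mL$ invocations yields $O((8/\epsilon)^\doubldim kmnL)$.

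For the coreset property, let $\cS^* = \bigcup_i \cS^*_i$ be an optimal solution with $\diver(\cS^*) = \lsf$, set $\cY_j^i := \cY_j \cap \cX_i$ and $k_{j,i} := |\cS^*_i \cap \cY_j|$, so that $\sum_{(j, i)} k_{j, i} = \sum_i k_i = k$. For each nonempty $\cY_j^i$ let $\widehat{T}_j^i$ denote the maximal prefix of $T_j^i$ with diversity at least $(\foo/2)\lsf$, and call $(j, i)$ \emph{critical} if $|\widehat{T}_j^i| < (4/\foo)^\doubldim k$. By the \gonzalez invariant, for critical $(j, i)$ every point of $\cY_j^i$ is within distance $(\foo/2)\lsf$ of its closest point in $\widehat{T}_j^i$, while for non-critical $(j, i)$ the set $\widehat{T}_j^i$ consists of $(4/\foo)^\doubldim k$ pairwise $(\foo/2)\lsf$-separated points.

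I would then build the witness set $\cT' \subseteq \cT$ in two phases. In the first phase, for every optimal $o \in \cS^*_i \cap \cY_j$ with $(j, i)$ critical, add the nearest proxy $f(o) \in \widehat{T}_j^i$ to $\cT'$. The triangle inequality gives $d(f(o_1), f(o_2)) \geq \lsf - \foo\lsf = \lsf/(1+\epsilon)$ for distinct $o_1, o_2 \in \cS^*$, so these proxies are pairwise distinct and $\lsf/(1+\epsilon)$-apart. In the second phase, process the non-critical pairs in arbitrary order: for the current $(j, i)$, delete from $\widehat{T}_j^i$ every point within distance $\lsf$ of some element already in $\cT'$, then apply Lemma~\ref{lem:doublingDim} to extract $k_{j, i}$ points of $\widehat{T}_j^i$ that are mutually $\lsf$-separated, and add them to $\cT'$.

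The key technical point, as in Theorem~\ref{thm:coresets_doubling}, is to verify that the second-phase extraction always succeeds; this is the main obstacle. By the doubling-dimension covering argument used there, each previously added element of $\cT'$ removes at most $(4/\foo)^\doubldim$ points from $\widehat{T}_j^i$, leaving at least $(4/\foo)^\doubldim(k - |\cT'|)$ candidates. At the moment $(j, i)$ is processed, $|\cT'|$ equals the sum of $k_{j', i'}$ over the already-handled pairs, so $|\cT'| \leq k - k_{j, i}$; hence at least $(4/\foo)^\doubldim k_{j, i}$ points of $\widehat{T}_j^i$ survive and Lemma~\ref{lem:doublingDim} supplies $k_{j, i}$ points pairwise $\lsf$-apart. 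These points are also at distance $\geq \lsf \geq \lsf/(1+\epsilon)$ from every earlier element of $\cT'$ by the removal rule. The resulting $\cT'$ therefore contains exactly $k_i$ points from each group $i$ and has pairwise distance at least $\lsf/(1+\epsilon)$, certifying that $\cT$ is a $(1+\epsilon)$-composable coreset.
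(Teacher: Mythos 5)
Your proof is correct, and it reaches the same conclusion by the same underlying machinery (the \gonzalez nets, the triangle-inequality proxy argument, and the packing extraction of Lemma~\ref{lem:doublingDim}), but the decomposition is genuinely different from the paper's. The paper classifies entire \emph{groups} as critical or non-critical via the aggregate radius $r^i = \max_{j\in[L]} \diver(T_j^i)$: a group is non-critical when a \emph{single} partition already contains $(4/\foo)^\doubldim k$ points pairwise $(\foo/2)\lsf$-separated, and then all $k_i$ points for that group are extracted from that one partition's net; otherwise the coverage property of $T_i=\bigcup_j T_j^i$ ensures every optimal point of $\cS^*_i$, across all partitions, has a proxy within $(\foo/2)\lsf$. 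This reduces the composable case almost verbatim to Theorem~\ref{thm:coresets_doubling}. You instead classify each \emph{(partition, group)} pair separately, allocate the per-pair budget $k_{j,i}=|\cS^*_i\cap\cY_j|$, and run the two-phase construction over pairs; the counting $|\cT'|\le k-k_{j,i}$ at the moment pair $(j,i)$ is processed is exactly what makes the extraction go through, and you verify it. What your finer granularity buys is a cleaner, self-contained argument: within one group some partitions may be critical and others not, and you handle each locally, sidestepping the max-over-partitions aggregation and the edge case (which the paper must treat separately) where \gonzalez exhausts some $\cY_j^i$ with few points but large diversity. What the paper's coarser classification buys is brevity --- once groups are classified, the sequential proof applies with no further bookkeeping. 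Both yield the same size, time, and approximation bounds.
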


For general metrics, using a similar approach, we obtain a $5$-composable coreset by extending a recent construction of $5$-coreset for the sequential setting~\cite{moumoulidou21}. We also discuss two-pass distributed algorithms for constructing $\alpha$-composable coresets for Euclidean ($\alpha = 1+\epsilon$) and general metrics ($\alpha =5$).
\section{Conclusion}\label{sec:conclusions}

In this paper, we presented new approximation algorithms that substantially improve upon currently known results for the  \fMM problem both in
general and Euclidean metric spaces. 
There are several interesting directions for future work, including obtaining a $2$-approximation for the problem in general metrics or improving the hardness result.

Another direction is to generalize the fairness constraints to arbitrary matroid constraints (the fairness constraints considered in this paper can be expressed via the special case of a partition matroid). While there are results known for related diversity maximization problems under matroid constraints~\cite{abbassi2013, Bhaskara16, Borodin2012}, to the best of our knowledge, there are currently no results for Max-Min diversification.

\subsection*{Acknowledgements} This work was supported by the NSF under grants CCF-1934846, CCF-1908849, CCF-1637536, CCF-1763423, IIS-1943971, and an Adobe Research Grant.

\bibliography{literature_arxiv} 

\newpage
\section*{Appendix}
\appendix
\section{$(m+1)$-Approximation with Perfect Fairness}\label{app:lineartime} 

We give the pseudocode for \genswap (Algorithm~\ref{alg:genswap}) described in Section~\ref{subsec:lineartime} and analyze its approximation guarantees.  Let $\mathcal{C}$ denote the set of clusters obtained when the input $\gamma$ (a
guess for the optimal diversity value) to Algorithm~\ref{alg:genswap} satisfies,
$\lsf/(1+\epsilon) < \gamma \leq \lsf$ for some $\epsilon > 0$. Let $\cS^* = \{
y_1, y_2, \cdots y_k \}$ denote the set of points selected by an optimal solution. 

\changes{Recall that by construction of a cluster $D \in \mathcal{C}$, any point $p \in \cX$ from some group $i$ is removed from further consideration only if: (1)~it is less than $\frac{\gamma}{m+1}$ distance from some point in $D$, \emph{and} (2)~there is already a point $p' \in D$ of group $i$. Using Claim~\ref{cl:atmost_one}, we show that in every cluster $D \in \mathcal{C}$ there is at most one optimal point $y_i \in \cS^{*}$, which is either included in $D$, or if $y_i$ was removed, then, there is a point $y' \in D$ of the same group as $y_i$.}

\changes{\smallskip
\noindent \textbf{Notation}. Define $f(y_i) \in D$ to be a point of the same group as $y_i$, and note that $f(y_i)$ could be $y_i$ itself. Let $f(\cS^*)= \{f(y_1), f(y_2), \cdots, f(y_k)\}$.

To be more specific, assume an optimal point $y_i \in \cS^*$ is less than $\frac{\gamma}{m+1}$ distance from some point in $D$. Let $f(y_i)$ represent a point in $D$ of the same group as $y_i$, and note that $f(y_i)$ could be $y_i$ itself if $y_i$ was not removed while constructing $D$; otherwise $f(y_i) = y'$. Then, we show that there cannot be a pair of $f(y_i), f(y_j)$ points in the same cluster $D \in \mathcal{C}$ for any pair of optimal points $y_i, y_j \in \cS^{*}$. Consequently, we can argue that there are at least $k_i$ points from each group $i$ in the formed clusters $\mathcal{C}$, some of which could be the optimal points themselves, and are all in \emph{different} clusters. This will let us argue it is possible to find a flow of size $k$ using the reduction to max-flow.}

\begin{algorithm}[!ht] 
    \caption{\genswap}\label{alg:genswap} 
    {\footnotesize
        \begin{algorithmic}[1]
            \Statex
            \begin{description} 
            \item[\rlap{Input:}\phantom{Output:}] $\cX = \bigcup\limits_{i=1}^{m} \cX_i$: Universe of available elements. 
                \item[\phantom{Output:}] $k_1, \ldots, k_m \in \mathbb{Z}^{+}$.
                \item[\phantom{Output:}] $\gamma\in \mathbb{R}^+$: A guess of the optimum fair diversity.
                \item[Output:] $k_i$ points in $\cX_i$ for $i\in [m]$. 
                \end{description} 
       \State $\mathcal{R} \leftarrow \cX$ denote the set of remaining elements.
       \State $\mathcal{C} \leftarrow \emptyset$ denote a collection of subsets of points (called clusters).
       \While{$|\mathcal{R}| > 0$ \textbf{ (and) } $|\mathcal{C}| \leq k m$}
            \State $D \leftarrow \emptyset$ denote the current cluster, and $D_{\text{col}} \leftarrow \emptyset$ denote the groups of points in cluster $D$.
            \While{an element $p \in \mathcal{R} \cap \cX_i$ for some $i \in \{ 1, 2, \cdots , m\}\setminus D_{\text{col}}$. exists}
            \If{$|D| = 0$ (or) $d(p, x) < \frac{\gamma}{m+1}$ for some $x \in D$}
                \State $D \leftarrow D \cup \{p\}$ and $D_{\text{col}} \leftarrow D_{\text{col}} \cup \{ i\}$.
            \EndIf
            \EndWhile
            \State $\mathcal{R} \leftarrow \mathcal{R} \setminus \bigcup_{p \in D} \rball{p}{\frac{\gamma}{m+1}}.$
    \State $ \mathcal{C} \leftarrow \mathcal{C} \cup \{ D \}$.
    \State $\cR \leftarrow \cR \setminus \cX_i \ \forall i \in [m]$ if $| \{ D \mid D \in \mathcal{C} \text{ and } D \cap \cX_i \neq \emptyset \} | \geq k$.
       \EndWhile
         \LineCommentx{Construct flow graph :}
        \State Let $\mathcal{C} = \{ D_1, D_2, \cdots D_t\}$.
        \State 
         Construct directed graph $G=(V,E)$ where
        \begin{eqnarray*}
        V&=& \{a,u_1, \ldots, u_m, v_1, \ldots, v_t, b\}  \\
        E&=& \{(a,u_i) \mbox{ with capacity $k_i$}: i \in [m]\} \\
        & & \cup~ \{(v_j,b) \mbox{ with capacity $1$}: j \in [t]\} \\
        & & \cup~  \{(u_i,v_j) \mbox{ with capacity $1$}:  |\cX_i\cap D_j|\geq 1\} 
        \end{eqnarray*} 
        
        \State Set $\cS \leftarrow \emptyset$. Compute maximum $a$-$b$ flow in $G$ using Ford-Fulkerson algorithm~\cite{cormen2009introduction}. 
        \If{flow size $<k=\sum_i k_i$}
        \Return $\emptyset$ \Comment{Abort}
        \Else \Comment{max flow is $k$}
        \State $\forall(u_i,v_j)$ with flow equal to $1$, add the point in $D_j$ with group $i$ to $\mathcal{S}$.
        \EndIf
        \State \Return $\mathcal{S}$.
        \end{algorithmic}} 
\end{algorithm} 

\changes{
\begin{claim}\label{cl:atmost_one} Let $y_i \in \cS^{*}$ denote an optimal point that is less than $\frac{\gamma}{m+1}$ distance from some point in some cluster $D \in \mathcal{C}$. We have: $|\bigcup_{p \in D} \mathbf{B}(p, \frac{\gamma}{m+1}) \cap f(\cS^*)| \leq 1$ for every cluster $D \in \mathcal{C}$.
\end{claim}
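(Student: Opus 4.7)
The plan is to argue by contradiction. Fix a cluster $D \in \mathcal{C}$ and suppose two distinct elements $f(y_i), f(y_j) \in f(\cS^*)$ (with $i \neq j$) both lie in $\bigcup_{p \in D} \mathbf{B}(p, \gamma/(m+1))$. I will show this forces $d(y_i, y_j) < \gamma$, which contradicts $d(y_i, y_j) \geq \lsf \geq \gamma$ since $y_i, y_j$ are distinct points of an optimal fair solution and the guess satisfies $\gamma \leq \lsf$.

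I would first establish two structural facts about the clustering produced by \genswap. \textbf{(i)~Inter-cluster separation:} any two distinct clusters $D, D' \in \mathcal{C}$ satisfy $d(p,p') \geq \gamma/(m+1)$ for every $p \in D$, $p' \in D'$. This is because immediately after whichever of $D, D'$ is built first (say $D$), the pruning step $\mathcal{R} \leftarrow \mathcal{R} \setminus \bigcup_{p \in D} \rball{p}{\gamma/(m+1)}$ deletes from $\mathcal{R}$ every surviving point within $\gamma/(m+1)$ of $D$, so every point later recruited into $D'$ is at distance at least $\gamma/(m+1)$ from all of $D$. \textbf{(ii)~Cluster diameter:} every cluster $D$ has diameter strictly less than $(m-1)\gamma/(m+1)$. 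Indeed, $|D| \leq m$ because the inner while-loop condition on $D_{\text{col}}$ forbids two points of the same group in $D$, and each point added after the first is within $\gamma/(m+1)$ of some point already in $D$; hence any two points of $D$ are joined by a path of at most $m-1$ hops, each of length strictly less than $\gamma/(m+1)$.

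Next, using (i) I would localize $f(y_i)$ and $f(y_j)$. Each lies in exactly one cluster; if $f(y_i)$ were in a cluster $D' \neq D$, then the hypothesized containment would place $f(y_i) \in D'$ at distance less than $\gamma/(m+1)$ from some $p \in D$, contradicting (i). Thus $f(y_i), f(y_j) \in D$, and because a cluster contains at most one point per group, $y_i$ and $y_j$ belong to distinct groups. Moreover, the definition of $f$ ensures that each of $y_i, y_j$ either lies in $D$ itself (in which case it is at distance $0$ from itself) or was eliminated by the pruning step that followed $D$'s construction (in which case it is within $\gamma/(m+1)$ of some point of $D$); in either scenario, there exist $q_i, q_j \in D$ with $d(y_i, q_i), d(y_j, q_j) < \gamma/(m+1)$.

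Finally, the triangle inequality combined with (ii) would close the argument:
\[
d(y_i, y_j) \leq d(y_i, q_i) + d(q_i, q_j) + d(q_j, y_j) < \frac{\gamma}{m+1} + \frac{(m-1)\gamma}{m+1} + \frac{\gamma}{m+1} = \gamma,
\]
the desired contradiction. The most delicate point will be step~(i), since the separation $\gamma/(m+1)$ between clusters must be tracked through the algorithm's dynamic pruning regardless of construction order; the final calculation also leaves no slack, so the strictness of each local bound (from the open-ball condition $d(p,x) < \gamma/(m+1)$ in the inner loop) has to be preserved through every triangle-inequality step to end in the strict inequality $d(y_i, y_j) < \gamma$.
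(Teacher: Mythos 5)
Your proof is correct and follows essentially the same route as the paper's: bound the intra-cluster distance by a chain of at most $m-1$ hops each strictly shorter than $\gamma/(m+1)$, add the two radii $\gamma/(m+1)$ for the optimal points, and contradict $d(y_i,y_j)\geq \lsf \geq \gamma$. Your step (i) on inter-cluster separation is extra bookkeeping the paper skips (since $f(y_i)\in D$ by definition), but it does no harm.
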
}

\changes{
\begin{proof}
For the sake of contradiction, assume there is pair of points $$f(y_1), f(y_2) \in \bigcup_{p \in D} \mathbf{B}(p, \frac{\gamma}{m+1}) \cap f(\cS^*)$$ 

for a pair of optimal points $y_1, y_2 \in \cS^{*}$. From the definition of
optimality, we have: $d(y_1, y_2) \geq \lsf \geq \gamma.$ 

Let $p^{y_1} \in D$ and $p^{y_2} \in D$ be the points from which $y_1$ (and $y_2$ respectively), is $<\frac{\gamma}{m+1}$ distance apart. Note that it is not necessarily the case that $p^{y_1} = f(y_1)$ (or $ p^{y_2} = f(y_2)$). As $D$ contains at most $m$ points, note that the largest chain in $D$ has length of $m-1$ and thus by triangle inequality and construction of $D$, $d(p^{y_1}, p^{y_2}) \leq \frac{(m-1) \gamma}{m+1}.$ Once again using triangle inequality, we have:
\begin{align*}
	d(y_1, y_2) &\leq d(y_1, p^{y_1}) + d(y_2, p^{y_2}) + d(p^{y_1}, p^{y_2})\\
	&< \frac{\gamma}{m+1} + \frac{\gamma}{m+1} + \frac{(m-1)\gamma}{m+1} < 	\gamma. 
\end{align*}

This is a contradiction. Consequently, it follows that there is at most one optimal point $y_i \in \cS^{*}$ in every cluster $D \in \mathcal{C}$, and it is either included in $D$, in which case $f(y_i) = y_i$, or there is another point $y'$ from the same group as $y_i$ present in $D$ (i.e., $f(y_i)=y'$). 
\end{proof}
}

Now, we argue that the set of all points in $\mathcal{C}$ contains a subset of
points that obtains an ${m+1}$-approximation factor.
\begin{claim}\label{cl:subset_flow}
There exists a subset $S \subseteq \bigcup_{D \in \mathcal{C}} D$ such that
\begin{itemize}
	 \item[(i)] $|S \cap D| \leq 1$ for every $D \in \mathcal{C}$ and $|S \cap \cX_j| = k_j$ for every $j \in [m]$.
	 \item[(ii)] For every pair of points $s_1, s_2 \in S$, $\ d(s_1, s_2) \geq \frac{\gamma}{m+1}$.
\end{itemize}
\end{claim}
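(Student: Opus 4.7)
The plan is to prove (ii) as a direct consequence of the cluster construction, then establish (i) by verifying Hall's condition on a bipartite graph between groups and clusters (equivalently, by showing that the max-flow in \genswap's network has value $k$).

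For (ii), I would observe that whenever a cluster $D$ is finalized the algorithm removes from $\cR$ every point within distance $\gamma/(m+1)$ of $D$, so any point later added to a subsequent cluster $D'$ sits at distance at least $\gamma/(m+1)$ from every $p \in D$. Hence any two points drawn from distinct clusters are automatically $\gamma/(m+1)$-separated, and since the $S$ produced in (i) picks at most one point per cluster, (ii) follows immediately.

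For (i), I would extend the map $f$ of Claim~\ref{cl:atmost_one} to every optimal point $y \in \cS^*$ that survives the ``$\cR \leftarrow \cR \setminus \cX_i$ when $k$ clusters already contain an $\cX_i$-point'' discard rule. If $y$ is placed in some cluster $D$, set $f(y) = y$; otherwise $y$ was removed when some $D$ was finalized, and the inner-loop termination rule forces $D$ to already hold a point of $y$'s group $i$ (else that group-$i$ point in $\cR$ within $\gamma/(m+1)$ of $D$ would have been added to $D$), which we take as $f(y)$. Claim~\ref{cl:atmost_one} then implies that distinct optimal points map to distinct clusters: two distinct $f$-values inside the same $D$ would both lie in $\bigcup_{p \in D} \rball{p}{\gamma/(m+1)}$, and combining $|f(\cS^*) \cap \text{ball}| \leq 1$ with the optimality bound $d(y_1, y_2) \geq \lsf \geq \gamma$ yields a contradiction.

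Let $R_j = \{D \in \mathcal{C} : D \cap \cX_j \neq \emptyset\}$. I would then verify Hall's condition $|\bigcup_{j \in I} R_j| \geq \sum_{j \in I} k_j$ for every $I \subseteq [m]$. If no optimal point of any group in $I$ was discarded by the ``$\cR \setminus \cX_i$'' rule, then $f$ is defined on all of $\bigcup_{j \in I} \cS^*_j$; since the cluster containing $f(y)$ lies in $R_j$ whenever $y \in \cS^*_j$, injectivity of $f$ supplies $\sum_{j \in I} k_j$ distinct clusters in $\bigcup_{j \in I} R_j$. Otherwise some $i \in I$ triggered the discard, forcing $|R_i| \geq k \geq \sum_{j \in I} k_j$, and Hall's condition holds trivially. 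Applying Hall's theorem to the bipartite graph in which group $j$ is duplicated $k_j$ times supplies $k_j$ distinct clusters per group $j$, and picking one group-$j$ point from each yields the desired $S$. The main technical obstacle is handling the ``$\cR \setminus \cX_i$'' discard step, where $f$ may be undefined on part of $\cS^*$; splitting the Hall verification into the two cases above is what avoids tracking discarded optimal points individually.
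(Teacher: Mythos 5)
Your proof is correct and follows essentially the same route as the paper's: part~(ii) is the identical removal-radius argument, and part~(i) rests on the same two ingredients, namely the extension of $f$ to every processed optimal point together with Claim~\ref{cl:atmost_one} (at most one optimal point associated with each cluster), plus the counting bound $|R_i|\geq k\geq\sum_{j\in I}k_j$ for any group that triggered the discard rule. The only difference is presentational: you verify Hall's condition and invoke Hall's theorem, whereas the paper builds the same transversal greedily by first placing the groups whose points were all processed (its ``critical'' groups) and then using the same $k-\sum_{i:\text{processed}}k_i\geq k_j$ count for the remaining groups.
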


\begin{proof}	
\changes{

\noindent 
\textbf{(i)} We show that in the formed clusters $\mathcal{C}$, it possible to select $k_j$ points from each group $j$ by selecting at most one point from each cluster $D \in \mathcal{C}$.

Following the analysis in~\cite{moumoulidou21}, call a group $j$ \emph{non-critical} if there are $k$ points in the formulated set of clusters $\mathcal{C}$. Otherwise, call group $j$ \emph{critical}. Note that if a group $j$ is \emph{critical}, \genswap processed all the points in $\cX_j$. For non-critical groups, recall that the remaining points are discarded once we find $k$ points.

For a critical group $j$, from Claim~\ref{cl:atmost_one} and the fact that all points were processed while constructing the clusters, we know that every cluster $D \in \mathcal{C}$ satisfies $|\bigcup_{p \in D} \mathbf{B}(p, \frac{\gamma}{m+1}) \cap f(\cS_{j}^*)| \leq 1$. Therefore, it follows that for a critical group $j$ there are $k_j$ distinct clusters among the formed clusters that contain the points in $f(\cS{_j}^{*})$. We add the points in $f(\cS{_j}^{*})$ to $S$.

Note that for all \emph{non-critical} groups, the selected points are all in different clusters by construction. Thus, regardless of how pick the points for the different groups across the clusters, by the time we process some non-critical group $j$ there will be at least $k-\sum_{i:\textup{processed\_groups}} k_i \geq k_j$ clusters, each of which contain one point from group $j$. These $k_j$ points can be added to $S$ by picking only one point from each cluster.

Thus, it follows that we can pick $k_j$ points for each group $j \in [m]$ by picking at most one point from each cluster $D \in \mathcal{C}$.

\noindent \textbf{(ii)} Consider two points $s_1, s_2 \in S$. As $|S \cap D|
\leq 1$ for every $D \in \mathcal{C}$, we have that $s_1$ and $s_2$ belong to
different clusters, say $D^{s_1}$ and $D^{s_2}$. Without loss of generality suppose the cluster $D^{s_1}$ is formed before $D^{s_2}$. From Algorithm~\ref{alg:genswap}, we know that all the points in $\mathbf{B}(s_1, \frac{\gamma}{m+1})$ are removed, and $s_2$ is chosen from the remaining points. Therefore, $d(s_1, s_2) \geq \frac{\gamma}{m+1}$.
}
\end{proof}

\begin{claim}\label{cl:validflow}
There is a valid flow of size $k$ in the directed graph $G$ as described in \genswap (Algorithm~\ref{alg:genswap}).
\end{claim}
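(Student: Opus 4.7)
The plan is to construct an explicit feasible $a$-$b$ flow of size $k$ by appealing directly to the subset $S$ guaranteed by Claim~\ref{cl:subset_flow}. Recall that $S$ satisfies $|S\cap D|\leq 1$ for every cluster $D\in \mathcal{C}$ and $|S\cap \cX_j|=k_j$ for every $j\in[m]$, so in particular $|S|=\sum_j k_j = k$.

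The construction is the following. For each $s\in S$, let $i(s)\in [m]$ be the group containing $s$ (unique since the $\cX_i$ partition $\cX$) and let $j(s)\in[t]$ be the index of the cluster containing $s$ (unique since $|S\cap D|\leq 1$). Route one unit of flow along the path $a\to u_{i(s)}\to v_{j(s)}\to b$. The total flow is thus $|S|=k$; it remains to verify edge capacities.

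First, the edge $(u_{i(s)}, v_{j(s)})$ is present in $G$ because $s\in \cX_{i(s)}\cap D_{j(s)}$, so $|\cX_{i(s)}\cap D_{j(s)}|\geq 1$ which is exactly the condition for the edge to exist. Next, on any such edge the total flow is at most $|S\cap \cX_{i(s)}\cap D_{j(s)}|\leq |S\cap D_{j(s)}|\leq 1$, matching the unit capacity. For edges of the form $(a,u_i)$, the total flow routed through $u_i$ equals $|\{s\in S:i(s)=i\}|=|S\cap \cX_i|=k_i$, which matches the capacity $k_i$. Finally, for edges of the form $(v_j,b)$, the total flow through $v_j$ equals $|\{s\in S:j(s)=j\}|=|S\cap D_j|\leq 1$, which matches the unit capacity.

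Conservation at each internal node is immediate from the construction (for every path we push a single unit simultaneously along its three edges). Hence the assignment is a valid $a$-$b$ flow of value $k$, which is the statement of the claim. The only nontrivial input here is Claim~\ref{cl:subset_flow}; given that claim, there is no real obstacle, and indeed this is the reason the LP/flow formulation was set up with the capacities $k_i$ on the source-side edges and unit capacities on the cluster-side edges.
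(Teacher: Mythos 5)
Your proof is correct and follows essentially the same route as the paper's: both take the subset $S$ from Claim~\ref{cl:subset_flow} and route one unit of flow along $a\to u_i \to v_j \to b$ for each point of $S$ in cluster $D_j$ and group $i$, then verify the capacities using $|S\cap \cX_i|=k_i$ and $|S\cap D_j|\leq 1$. Your version is slightly more careful in that it explicitly checks that each edge $(u_{i(s)},v_{j(s)})$ actually exists in $G$, which the paper leaves implicit.
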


\begin{proof}
Let $\mathcal{C}$ be the clustering obtained by \genswap. Consider the sum of the capacities of all edges incident on the source node $a$ in the
graph $G$. It is given by $\sum_{i \in [m]} k_i = k$. Therefore, the maximum flow in $G$ is at most $k$. We argue that we can construct a valid flow of size $k$ from $\mathcal{C}$.

Let $S\subseteq \bigcup_{D \in \mathcal{C}} D$ be a subset satisfying  Claim~\ref{cl:subset_flow}. For every point $e \in S$ such that $e \in D_j
\cap \cX_i$ for some $D_j \in \mathcal{C}, i \in [m]$, we add a flow of
value $1$ on the edges $(u_i, v_j)$, $(v_j, b)$ and $(a, u_i)$. As $|S \cap
\cX_i| = k_i$ (Claim~\ref{cl:subset_flow} (i)), we have a flow of
value $k_i$ on every edge of the form $(a, u_i)$ for $i \in [m]$. As $|S \cap
D_j| \leq 1$ (Claim~\ref{cl:subset_flow} (i)), the capacity constraints are
always satisfied for the edges incident on node $v_j$ for every $D_j \in
\mathcal{C}$. As $|S \cap \cX_i| = k_i$ (Claim~\ref{cl:subset_flow}
(i)), the capacity constraints are satisfied for the edges incident on node
$u_i$ for all $i \in [m]$. Therefore, there is a valid flow achieving the
maximum flow $\sum_{i \in [m]} k_i = k$ in the graph $G$.
\end{proof}

\begin{claim}\label{cl:approx}
Given a flow $f$ of size $k$ in $G$, the set $S$ obtained by
\genswap (Algorithm~\ref{alg:genswap}) satisfies the fairness constraints, and for every two points $s_1, s_2 \in S$, we have $d(s_1, s_2) \geq \frac{\gamma}{m+1}$.
\end{claim}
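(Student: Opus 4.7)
The plan is to prove the two parts of the claim separately: (i) the fairness constraints, and (ii) the pairwise distance lower bound. Both facts will follow from the flow network's structure together with the cluster construction in \genswap.

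First I would establish fairness by a capacity counting argument. The total capacity of the edges leaving the source $a$ is $\sum_{i \in [m]} k_i = k$. Since by hypothesis $f$ is a flow of size $k$, each edge $(a,u_i)$ must be saturated and carry exactly $k_i$ units. Because the Ford--Fulkerson algorithm on an integer-capacitated network returns an integral maximum flow, every edge $(u_i, v_j)$ carries a flow value in $\{0,1\}$. Conservation at $u_i$ then implies that exactly $k_i$ of the edges $\{(u_i, v_j)\}_j$ carry flow $1$. The algorithm adds one point of group $i$ from $D_j$ to $S$ for each such edge, which yields $|S \cap \cX_i| = k_i$ for every $i \in [m]$.

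Next I would argue the at-most-one-per-cluster property and then deduce the distance bound. For any cluster $D_j$, the only edge leaving $v_j$ is $(v_j,b)$ which has capacity $1$; hence the total flow into $v_j$ is at most $1$, so at most one $(u_i,v_j)$ edge has flow $1$. Thus $|S \cap D_j| \leq 1$ for all $j \in [t]$. Now take any two distinct $s_1, s_2 \in S$ with $s_1 \in D_{j_1}$, $s_2 \in D_{j_2}$, and $j_1 \neq j_2$. Without loss of generality $D_{j_1}$ is constructed first by \genswap. Immediately after forming $D_{j_1}$, the algorithm executes the update $\mathcal{R} \leftarrow \mathcal{R} \setminus \bigcup_{p \in D_{j_1}} \rball{p}{\gamma/(m+1)}$, so every point remaining in $\mathcal{R}$ is at distance at least $\gamma/(m+1)$ from every point of $D_{j_1}$, and in particular from $s_1$. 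Since $s_2$ is picked from a cluster built later, $s_2 \in \mathcal{R}$ after this update, and therefore $d(s_1, s_2) \geq \gamma/(m+1)$.

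I do not expect significant obstacles here; the argument is essentially a bookkeeping exercise that translates the integral max-flow back to the clustering structure. The only subtle point is invoking flow integrality to ensure that each $(u_i,v_j)$ edge contributes a well-defined point to $S$; this is justified by the integrality of Ford--Fulkerson on unit-capacity auxiliary edges. After that, the separation $\gamma/(m+1)$ is immediate from how \genswap prunes $\mathcal{R}$ after closing each cluster.
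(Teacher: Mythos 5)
Your proof is correct and follows essentially the same route as the paper's: a capacity/conservation argument at the source and at the $u_i$ nodes gives exact fairness, the unit capacity on each $(v_j,b)$ edge gives the at-most-one-point-per-cluster property, and the pruning of $\mathcal{R}$ by balls of radius $\gamma/(m+1)$ after each cluster is closed gives the separation bound. Your explicit appeal to the integrality of Ford--Fulkerson is a small but welcome addition that the paper leaves implicit.
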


\begin{proof}
As the sum of the capacities of all edges incident on node $a$ in the graph $G$
is $\sum_{i \in [m]} k_i = k$, we can conclude that an edge incident on node
$u_i$ has a flow of $k_i$ in $f$. From conservation of flow, at every node
$u_i$, there are $k_i$ edges of the form $(u_i, v_j)$ with flow $1$. We include
a node from $D_j$ of group $i$ in $S$ if the flow on the edge $(u_i, v_j)$ is
$1$. Therefore, we have included $k_i$ nodes of group $i$ in $S$ for every $i
\in [m]$ satisfying the fairness constraints.

Consider $s_1, s_2 \in S$ belonging to clusters $D^{s_1}$ and $D^{s_2}$
respectively. From the construction, every edge $(v_j, b)$ of capacity $1$
ensures that only one point from a cluster in $\mathcal{C}$ is picked.
Therefore, we have $D^{s_1}$ and $D^{s_2}$ are distinct clusters. Without loss
of generality let the cluster $D^{s_1}$ be formed before $D^{s_2}$. From
Algorithm~\ref{alg:genswap}, we know that all the points in $\mathbf{B}(s_1,
\frac{\gamma}{m+1})$ are removed, and $s_2$ is chosen from the remaining
points. Therefore, $d(s_1, s_2) \geq \frac{\gamma}{m+1}$.
\end{proof}

\begin{theorem}[Theorem~\ref{thm:lineartime} restated]
\genswap (Algorithm~\ref{alg:genswap}) returns a $(m+1) (1+\epsilon)$-approximation and achieves {perfect fairness} for the \fMM problem using a running time of $O(nkm^3 \epsilon^{-1}\log n)$.
\end{theorem}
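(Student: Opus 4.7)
The plan is to prove the theorem by combining the four claims already established in Appendix~\ref{app:lineartime} with a standard geometric-guessing argument for $\gamma$ and a direct running-time accounting of the clustering loop plus the max-flow computation.

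First, I would handle the unknown $\lsf$ by running \genswap on a geometrically spaced set of guesses. Following the discussion in Section~\ref{subsec:randomLP}, trying $\gamma = (1+\epsilon)^i \cdot \dmin$ for $i = 0,1,\ldots, O(\epsilon^{-1}\log n)$ covers the interval $[\dmin,\dmax]$ assuming $\dmax/\dmin = \poly(n)$. We return the solution with the largest $\diver(\cS)$ that achieves perfect fairness. For the largest guess with $\gamma \le \lsf$, we are guaranteed $\gamma > \lsf/(1+\epsilon)$, which is the ``good guess'' we analyze.

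For correctness on the good guess, I would chain the four claims exactly as stated. Claim~\ref{cl:atmost_one} ensures that for every cluster $D \in \mathcal{C}$ there is at most one $f(y_i)$ image of an optimal point, so each optimal group is represented in distinct clusters; Claim~\ref{cl:subset_flow} then extracts a subset with exactly $k_j$ points per group $j$ while using at most one point per cluster, and argues these points are pairwise at distance at least $\gamma/(m+1)$. Claim~\ref{cl:validflow} translates this into an integral $a$-$b$ flow of value $k$ in $G$, so Ford-Fulkerson terminates with a max flow of $k$; Claim~\ref{cl:approx} then shows the points selected from saturated $(u_i,v_j)$ edges constitute a perfectly fair set with all pairwise distances $\geq \gamma/(m+1)$. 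Combined with $\gamma > \lsf/(1+\epsilon)$, this yields the $(m+1)(1+\epsilon)$ approximation factor.

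For the running time at a single guess, note that each cluster contains at most $m$ points (one per group), the outer loop halts once $|\mathcal{C}| \le km$, and the group-discard step keeps the total number of active groups bounded. Therefore, each of the $n$ points is tested at most once against all existing clusters: comparing a point to a cluster requires $O(m)$ distance evaluations, and with at most $km$ clusters and $O(m)$ work per cluster check, the clustering phase costs $O(nkm^2)$ (the extra factor of $m$ absorbed into the bound). The resulting flow network has $O(km)$ vertices and $O(km^2)$ edges with total source capacity $k$, so Ford-Fulkerson runs in $O(k \cdot km^2) = O(k^2m^2)$ time, dominated by the clustering for $n \geq k$. Multiplying by the $O(\epsilon^{-1}\log n)$ guesses yields $O(nkm^3 \epsilon^{-1}\log n)$, as claimed.

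The main obstacle is not any single step but the careful accounting in Claim~\ref{cl:atmost_one} and Claim~\ref{cl:subset_flow}: one must ensure that the distinction between \emph{critical} and \emph{non-critical} groups is maintained correctly so that the map $f$ is well-defined and every optimal group has $k_j$ surviving cluster representatives. The triangle-inequality chain of length $m-1$ in Claim~\ref{cl:atmost_one}, which forces the $\gamma/(m+1)$ threshold and hence the $(m+1)$ approximation factor, is the crux of the argument; the remaining pieces are bookkeeping and a standard flow-decomposition argument.
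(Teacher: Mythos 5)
Your proposal is correct and follows essentially the same route as the paper: it chains Claims~\ref{cl:atmost_one}--\ref{cl:approx} for correctness on a guess $\gamma \geq \lsf/(1+\epsilon)$, uses the standard $O(\epsilon^{-1}\log n)$ geometric guessing, and accounts separately for the clustering loop and the Ford--Fulkerson step to reach $O(nkm^3\epsilon^{-1}\log n)$. Your per-point accounting of the clustering phase is organized slightly differently from the paper's per-cluster accounting (and is in fact marginally tighter), but both land within the claimed bound, so there is nothing to fix.
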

\begin{proof}

Suppose the guess $\gamma \geq \lsf/(1+\epsilon)$ is obtained using binary search. Let $\mathcal{C}$ denote the clusters obtained by Algorithm~\ref{alg:genswap} using the
guess $\gamma$. From Claim~\ref{cl:validflow}, we have that there is a valid
flow of size $k$ in $G$. After solving the maximum flow problem on the graph
$G$, let $f$ be the flow. Using Claim~\ref{cl:approx}, we can obtain a set $S$
from $f$ that satisfies fairness constraints, and for every $s_1, s_2 \in S$,
we have: $d(s_1, s_2) \geq \frac{\gamma}{m+1} = \frac{\lsf}{(m+1)(1+\epsilon)}.$
Hence, $S$ obtains an approximation factor of $(m+1)(1+\epsilon)$.

\smallskip

\noindent \textit{Running Time}. 
Each cluster $D \in \mathcal{C}$ contains at most $m$ points. For including a
point $p \in D$, we iterate over all remaining points $\mathcal{R}$ that are
within a distance of $\frac{\gamma}{m+1}$, and of a group different from the
points already included in $D$. This step requires a running time of
$O(|\mathcal{R}|\cdot m \cdot m) = O(nm^2)$. We remove all points of group $i \in [m]$ from $\cR$, if the total number of clusters containing a point from the group is at least $k$. This ensures that there are at most $km$ clusters that are possible.
Therefore, the total running time for constructing $\mathcal{C}$ is $O(nm^2 \cdot km) = O(nkm^3)$. \\

\noindent We can observe that the number of edges in $G$ are $O(|\mathcal{C}|m)
= O(km^2)$. Using Ford-Fulkerson, we can obtain the maximum flow of value $k$. This step requires a running time of $O(k^2m^2)$. For identifying the correct guess $\gamma$, we use binary searching which results in a multiplicative factor of $O(\log n/\epsilon)$ to the total running time  (see Section~\ref{subsec:randomLP} for a discussion). Combining all the running times, gives us the lemma.
\end{proof}

\subsection{Tight Example for $3m-1$ approximation algorithm in~\cite{moumoulidou21}}\label{app:tightexample}

\changes{Below we show a tight example for \fMMFlow in~\cite{moumoulidou21} and show how \genswap yields a better approximation. 
\smallskip

\noindent 
\textbf{A tight example for \fMMFlow.} Suppose $k=3$ and we have to select one white and two black points. Here, edges represent the distance across two points, e.g., $d(p_1, p_2)= 1/5$. Note that the optimal solution in this example is the set of points $\{p_1, p_3, p_4\}$ with diversity score equal to $1$. 

\centering
\includegraphics[width=0.5\textwidth]{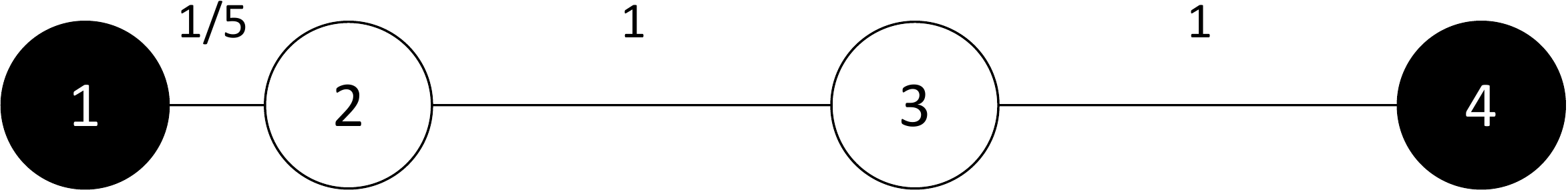}

The \fMMFlow algorithm in~\cite{moumoulidou21}, for a guess $\gamma=1$, for the black group selects both points since they are at least $d_1=\frac{m\gamma}{3m-1}=2/5$ far apart from each other. Similarly for the white group. Now because there is no pair of points with distance strictly less than $d_2= \frac{\gamma}{3m-1}=1/5$, \fMMFlow constructs four connected components (each with a point). As a result, the points $\{p_1, p_2, p_4\}$ will be selected by the max-flow algorithm and we obtain a set with diversity score equal to $1/5$. Note that for this example, \genswap returns the set $\{p_1, p_3, p_4\}$ as $p_1$ and $p_2$ are less than $1/3$ distance apart. These two points will be in the same cluster and at most one of them can be picked; thus, we guarantee an approximation ratio of $3$.}

\section{Hardness of Approximation}\label{app:genmetrics}

\begin{theorem}[Theorem~\ref{thm:hardness} restated]
Unless $P = NP$, there is no {polynomial time} algorithm for the \fMM problem that obtains an $\alpha$-approximation factor for diversity score, and a
$\beta$-approximation for the fairness constraints, i.e., the set $\cS$
returned by the algorithm satisfies $|\cS \cap \cX_i| \geq {k_i}/{\beta} \
\forall i \in [m]$, for some constants $\alpha < 2$ and $\beta > 0$.
\end{theorem}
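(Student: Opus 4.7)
The plan is to reduce from $\textsc{Gap-Clique}_\rho$ for a suitable constant $\rho \geq \beta$. Given an instance $(G = (V, E), k)$, I would construct an \fMM instance by setting $\cX = V$ with a single group $m = 1$ and $k_1 = k$, and defining the metric $d(u,v) = 2$ if $\{u,v\} \in E$, $d(u,v) = 1$ if $u \neq v$ and $\{u,v\} \notin E$, and $d(u,u) = 0$. The triangle inequality is immediate since any two nonzero distances sum to at least $2$. The key conceptual point is that cliques in $G$ correspond exactly to point subsets of pairwise distance $2$, while every non-clique subset has some pair at distance $1$.

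With this metric, $\lsf = 2$ in the yes case (take the $k$-clique), whereas in the no case every subset of size at least $k/\rho$ contains a non-edge and therefore has diversity $1$. Now suppose an algorithm $A$ achieves $\alpha$-approximation with $\alpha < 2$ and $\beta$-fairness in polynomial time. In the yes case, $A$ returns $S$ with $|S| \geq k/\beta$ and $\diver(S) \geq 2/\alpha > 1$; since all nonzero distances lie in $\{1,2\}$, we must have $\diver(S) = 2$, so $S$ is a clique of size at least $k/\beta$. In the no case, $S$ has size at least $k/\beta \geq k/\rho$ and hence cannot be a clique, forcing $\diver(S) = 1$. Thus a single call to $A$ followed by testing whether $\diver(S) = 2$ decides $\textsc{Gap-Clique}_\rho$ in polynomial time, yielding the desired contradiction.

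The only subtlety is aligning the two relaxation parameters so that the algorithm's (relaxed) output is still large enough to witness the gap: we need $k/\beta \geq k/\rho$, which holds by choosing any constant $\rho \geq \beta$ (for instance $\rho = \lceil \beta \rceil$). Since $\textsc{Gap-Clique}_\rho$ is NP-hard for every constant $\rho \geq 1$, the reduction yields a contradiction with $P \neq NP$. There is no serious obstacle here: restricting the metric to two distance values $\{1,2\}$ converts the condition $\alpha < 2$ into a clean decision between diversity $1$ and diversity $2$, so the fairness relaxation by $\beta$ only affects the size of the witness subset, not its diversity value.
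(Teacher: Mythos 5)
Your proposal is correct and is essentially identical to the paper's proof: the same reduction from $\textsc{Gap-Clique}$ with edge distances $2$ and non-edge distances $1$, a single group, and the same yes/no case analysis. The only (inessential) difference is that you take $\rho = \lceil \beta\rceil \geq \beta$ while the paper sets $\rho = \beta$, which if anything handles the $\beta < 1$ corner a bit more cleanly.
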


\begin{proof}
We present a reduction from $\textsc{Gap-Clique}_\rho$, where $\rho = \beta$.
For every vertex of the graph $G$, we create a new point, and set of points is
denoted by $\cX$. For every edge $(u, v)$ in $G$, we set $d(u, v) := 2$. For
all other pairs of vertices, we set the distances as $1$. Every vertex is
assigned the same color, and the corresponding fairness constraint is $|\cS
\cap \cX| \geq k$, where $\cS$ is the set of points whose diversity we are
trying to maximize in \fMM.

Suppose there is a polynomial time algorithm that returns a set $\cS$, obtains
an $\alpha$-approximation for the diversity score, and a $\beta$-approximation
for the fairness constraints. We first consider the `\textsc{Yes}' instance in \textsc{Gap-Clique}$_\beta$, i.e., we assume there is a clique of size $k$ in $G$. This implies $\lsf = 2$. As $\alpha < 2$, we have that the set $\cS$
returned has a diversity score $\geq \lsf/\alpha > 1$. Therefore, $\cS$ is a
clique in $G$ as all other pairwise distances are $1$ (from construction). As $\cS$ is a $\beta$-approximation for the fairness constraint, we have that
$|\cS| \geq {k}/{\beta}$. Let us now consider the `\textsc{No}' instance, i.e.,
there is no clique of size $\geq k/\beta$ in $G$. Therefore, $|\cS \cap \cX| <
k/\beta$, as $|\cS \cap \cX|$ is upper bounded by the maximum clique size in
$G$. From the above arguments, we have that using our algorithm, we can distinguish the `\textsc{Yes}' and `\textsc{No}' instances of \textsc{Gap-Clique}$_\beta$, which is not possible unless $P = NP$~\cite{pcp}. Hence, the theorem.
\end{proof}

\section{Euclidean Metrics}\label{app:eucmetrics}

\subsection{Exact Computation in One Dimension}
\begin{theorem}[Theorem~\ref{thm:fairline} restated]\label{thm:fairline_app}
There is an algorithm that solves the \fMM problem exactly when the points can be embedded on a line and requires a running time of $O(n^4 \prod_{i=1}^m (k_i+1) )$. 
\end{theorem}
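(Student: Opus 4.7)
The plan is to prove correctness of the dynamic program \fL for every guess $\gamma$ via induction on $j$, then instantiate with the right guess. Specifically, I will establish the invariant: $\tb[k'_1,\ldots,k'_m,j]=1$ if and only if there exists a subset $\cS'\subseteq\{p_1,\ldots,p_j\}$ with $|\cS'\cap\cX_i|=k'_i$ for every $i\in[m]$ and $\diver(\cS')\geq\gamma$. The base cases are immediate: the empty set has no pairs to violate diversity, and a single point trivially has diversity $\geq \gamma$.

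For the inductive step, I would consider any candidate set $\cS'$ among the first $j$ points and split on whether $p_j\in\cS'$. If $p_j\notin\cS'$, then $\cS'\subseteq\{p_1,\ldots,p_{j-1}\}$ and the inductive hypothesis gives $\tb[k'_1,\ldots,k'_m,j-1]=1$. If $p_j\in\cS'$ and $p_j\in\cX_i$, then $\cS'\setminus\{p_j\}$ has $k'_i-1$ points from group $i$ and $k'_\ell$ from each other group $\ell$. Here the sortedness of the points is crucial: every other point $p_\ell\in\cS'\setminus\{p_j\}$ satisfies $p_\ell\leq p_j$, so the requirement $d(p_\ell,p_j)\geq\gamma$ reduces to $p_\ell\leq p_j-\gamma$, i.e., $\ell\leq j'$ where $j'=\max\{j'' : p_{j''}+\gamma\leq p_j\}$. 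Therefore $\cS'\setminus\{p_j\}$ is a valid witness for $\tb[k'_1,\ldots,k'_i-1,\ldots,k'_m,j']=1$, and conversely any such witness can be extended by $p_j$ since the new point is at distance $\geq\gamma$ from everything in $\{p_1,\ldots,p_{j'}\}$. Combining both cases yields exactly the recurrence used in Algorithm~\ref{alg:dp1}. Running the algorithm with $\gamma=\lsf$, the table entry $\tb[k_1,\ldots,k_m,n]$ equals $1$, and backtracking via the pointers described in Section~\ref{subsec:line} recovers an optimal set.

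For the running time, I would first sort the $n$ points on the line in $O(n\log n)$ time, then precompute the values $j'$ for every $j$ in a single $O(n)$ sweep using two pointers (since $p_j$ is monotone, $j'$ is also monotone in $j$). For a fixed $\gamma$, the table has $n\prod_{i=1}^m(k_i+1)$ entries and each entry is filled in $O(1)$ time given the precomputed $j'$. Since we do not know $\lsf$ in advance, we try all $O(n^2)$ distinct pairwise distances as candidate guesses for $\gamma$ and keep the largest one for which $\tb[k_1,\ldots,k_m,n]=1$; this value must equal $\lsf$ because $\lsf$ itself is one of the candidates. The total cost is $O(n^2)\cdot O(n\prod_i(k_i+1))$, which is comfortably within the stated $O(n^4\prod_i(k_i+1))$ bound.

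The only subtle step will be articulating why, in the second case of the recurrence, checking distance only against $p_j$ (and not against the internal diversity of $\cS'\setminus\{p_j\}$) is sufficient. The point is that internal diversity is carried by the inductive hypothesis on the smaller table entry, while the line geometry guarantees that the only \emph{new} distances introduced when adding $p_j$ are those of the form $d(p_\ell,p_j)$ with $\ell\leq j'$, all of which are automatically $\geq\gamma$ by the choice of $j'$. Everything else, including the extension to recovering the actual subset, is bookkeeping on the pointers described in the excerpt.
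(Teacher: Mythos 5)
Your proof is correct and follows essentially the same route as the paper: an induction showing the DP table entry records exactly the feasible group-count vectors among the first $j$ points, combined with trying all $O(n^2)$ pairwise distances as guesses for $\gamma$. The only difference is that you precompute the indices $j'$ with a sort and a two-pointer sweep, which actually yields the sharper bound $O(n^3 \prod_{i=1}^m (k_i+1))$, whereas the paper's analysis spends $O(n)$ per table entry to locate $j'$ and hence states the weaker $O(n^4 \prod_{i=1}^m (k_i+1))$.
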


\begin{proof}
We use \fL to identify the exact solution. We observe that any optimal solution can be expressed as a subset of the first $j$ points for some $j \in [n]$. From the construction, if the guess $\gamma \leq \lsf$ there will always be at least $k_i$ points from group $i$ for all $i \in [m]$ that are all $\gamma$ far apart. Therefore, since the dynamic programming approach finds all the subsets with $k_i$ points per group $i$ for all $j \in [n]$, at least one of the $\tb[k_1, k_2, \cdots, k_m, j]$ entries will be equal to $1$ as required. As discussed previously, we can backtrack and construct the solution set. 

\emph{Running Time}. For a fixed guess $\gamma$, we need to compute $\prod_{i=1}^m (k_i+1)$ entries for every point, as every $k'_i$ for $i\in [m]$ takes at most $k_i + 1$ values. To compute an entry $\tb[\cdot, \cdot, \cdots, \cdot, j]$ using \fL~(Algorithm~\ref{alg:dp1}), we need to retrieve $O(n)$ distances to find point $j'$ that is at least $\gamma$ far apart from point $j$. Thus, the total running is equal to $O(n^2 \prod_{i=1}^m (k_i+1) )$ since there are $O(n \prod_{i=1}^m (k_i+1) )$
entries in $\tb$ and the computational cost to fill each entry is $O(n)$. As there are $O(n^2)$ distance values the guess $\gamma$ can take, the total running time is $O(n^4\prod_{i=1}^m (k_i+1))$.
\end{proof}

\subsection{$(1+\epsilon)$ Approx with $(1-\epsilon)$ Fairness}

\begin{theorem}[Theorem~\ref{thm:fairDP} restated]
If $\gamma = \lsf$, then, \fDP (Algorithm~\ref{alg:dp2})
returns a set $\cS$ that satisfies $\diver(\cS) \geq \lsf$ and $|\cS \cap \cX_i| \geq k_i \ \forall i \in [m]$ and has a running time of $O(\prod_{i=1}^{m}(k_i +1)^{2} 2^R t)$ where $R = \max\{|C_1|, |C_2|, \cdots, |C_t|\}$. 
\end{theorem}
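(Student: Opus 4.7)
The plan is to establish correctness by induction on the cluster index $j$, then bound the running time by counting table entries and the per-entry work. The crucial structural fact I will invoke is the \emph{well-separated} property of the input collection $\cC = \{C_1,\ldots,C_t\}$ stated just before Algorithm~\ref{alg:dp2}: any two points lying in distinct $C_i, C_j$ are at distance at least $\gamma$. This means that if we union a subset $P \subseteq C_j$ of diversity at least $\gamma$ with a subset $F' \subseteq C_1 \cup \cdots \cup C_{j-1}$ of diversity at least $\gamma$, then $\diver(P \cup F') \geq \gamma$ as well, since the only new pairwise distances introduced are cross-cluster distances, all of which are at least $\gamma$ by hypothesis.

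For correctness I will prove, by induction on $j$, that $\tb[k'_1,\ldots,k'_m,j] = 1$ iff there exists $\cF' \subseteq C_1 \cup \cdots \cup C_j$ with $|\cF' \cap \cX_i| \geq k'_i$ for every $i \in [m]$ and $\diver(\cF') \geq \gamma$. The base case $j=1$ is exactly the definition of $f(\cdot,1)$. In the inductive step, the ``if'' direction takes a witness $\cF'$ and sets $P = \cF' \cap C_j$ with counts $p'_i = |P \cap \cX_i|$: then $P$ witnesses $f(p'_1,\ldots,p'_m,j)=1$ (since $\diver(P) \geq \diver(\cF') \geq \gamma$), and $\cF' \setminus P$ witnesses $\tb[k'_1-p'_1,\ldots,k'_m-p'_m,j-1]=1$ by induction, so the recurrence fires. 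The ``only if'' direction combines, via the well-separated remark above, any $P \subseteq C_j$ realizing $f$ with the subset given by induction for $\tb[\ldots,j-1]$. Applying this equivalence at $(k_1,\ldots,k_m,t)$ and using that the optimal solution $\cS^*$ achieves $\diver(\cS^*) = \lsf = \gamma$ and contains exactly $k_i$ points of each group, we conclude $\tb[k_1,\ldots,k_m,t] = 1$, and backtracking through stored pointers recovers a valid set $\cS$.

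For the running time, there are $t \cdot \prod_{i=1}^m (k_i+1)$ entries in $\tb$. Computing each entry requires taking an OR over all tuples $(p'_1,\ldots,p'_m)$ with $p'_i \leq k'_i$, of which there are at most $\prod_{i=1}^m (k_i+1)$. For each such tuple the algorithm needs to evaluate $f(p'_1,\ldots,p'_m,j)$, which enumerates subsets of $C_j$ and costs $O(2^{|C_j|}) = O(2^R)$ in the worst case (and can be precomputed once per $j$ for all tuples without changing the overall bound). Multiplying these factors yields $O(\prod_{i=1}^m (k_i+1)^2 \cdot 2^R \cdot t)$ as claimed. The main subtlety, and the only non-routine step, is verifying that the well-separated hypothesis lets us decouple the per-cluster diversity checks from the global diversity requirement; once that observation is made, both the induction and the running-time count are straightforward.
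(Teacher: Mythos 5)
Your proposal is correct and follows essentially the same route as the paper: the paper's own correctness argument is a one-line appeal to ``the construction'' (which your induction on $j$, using the well-separated property to decouple per-cluster diversity from the global requirement, simply makes explicit), and your running-time count --- table entries times the OR over tuples $(p'_1,\ldots,p'_m)$ times the $2^R$ subset enumeration for $f$ --- matches the paper's verbatim. The only minor detail left implicit on both sides is trimming a witness with $|\cF'\cap\cX_i| > k'_i$ down to exactly $k'_i$ points per group so the chosen tuple satisfies $p'_i \leq k'_i$, which is harmless since removing points cannot decrease diversity.
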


\begin{proof}
As $\gamma = \lsf$, the optimal set of points satisfy the fairness constraints. From the construction in \fDP, we will return a set $\cS$ that has diversity $\lsf$, and achieves perfect fairness.

\emph{Running Time}. Consider a value $j \in [t]$. There are $\prod_{i=1}^{m}(k_i+1)$ entries in the table $\tb$ corresponding to this value of $j$. For every $k'_i \in \{0, 1, \cdots k_i\}$ and every subset $R \subseteq C_j$ where $|R \cap \cX_i| = p'_i \ \forall i \in [m]$, we check if there is a valid subset of points satisfying fairness constraints using the condition mentioned in \fDP. Since there at most $\prod_{i=1}^{m}(k_i+1)$ ways to enumerate the $p'_i$ values (because $p'_i \leq k'_i$), the total time to compute entries corresponding to this $j$ value is $O(\prod_{i=1}^{m}(k_i+1)^2 2^R)$. Therefore, to compute all the entries in $\tb$ we need $O(\prod_{i=1}^{m}(k_i+1)^2 2^R t)$ time.
\end{proof}

\begin{lemma}[Lemma~\ref{lem:packing} restated]
For every $j \in [t]$, $|C_j| \leq m \cdot \left( 8mD^{3/2}/\epsilon^2 \right)^\lambda$ where $\lambda$ is doubling dimension of $\mathbb{R}^D$.
\end{lemma}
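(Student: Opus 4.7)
The plan is to bound the number of points from each group $i$ that lie in a cube $C_j$ separately, then sum over the $m$ groups. The crucial observation is that the points of $\widehat{T}_i \cap C_j$ inherit a pairwise separation from the construction of $\widehat{T}_i$, namely $\diver(\widehat{T}_i) \geq \epsilon \gamma / 4$. Combined with the fact that $C_j$ sits inside a ball of bounded radius and the doubling dimension of the underlying Euclidean space is at most $\lambda$, a standard packing argument will yield the stated bound.

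First, I would fix a group $i \in [m]$ and focus on $\widehat{T}_i \cap C_j$. Since $C_j$ is an axis-aligned cube of side length $W = 2mD\gamma/\epsilon$ in $\mathbb{R}^D$, its diameter is $W\sqrt{D}$, so it is contained in some ball of radius $r := W\sqrt{D}/2 = mD^{3/2}\gamma/\epsilon$. Next, I would invoke Definition~\ref{def:doubling}: this ball of radius $r$ can be covered by at most $(r/r')^\lambda$ balls of radius $r' := \epsilon\gamma/8$. Since any two points of $\widehat{T}_i$ are at distance at least $\epsilon\gamma/4 = 2r'$ apart, each covering ball of radius $r'$ contains at most one point from $\widehat{T}_i$. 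Hence,
\[
|\widehat{T}_i \cap C_j| \;\leq\; \left(\frac{r}{r'}\right)^\lambda \;=\; \left(\frac{mD^{3/2}\gamma/\epsilon}{\epsilon\gamma/8}\right)^\lambda \;=\; \left(\frac{8mD^{3/2}}{\epsilon^2}\right)^\lambda.
\]

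Finally, summing over the $m$ groups gives
\[
|C_j| \;=\; \sum_{i=1}^{m} |\widehat{T}_i \cap C_j| \;\leq\; m \cdot \left(\frac{8mD^{3/2}}{\epsilon^2}\right)^\lambda,
\]
which is the claimed inequality. The only point that requires any care is the choice of the smaller radius $r'$: it must be at most half the minimum pairwise separation $\epsilon\gamma/4$ in $\widehat{T}_i$ so that each small ball captures at most one point, yet large enough that the ratio $r/r'$ does not blow up beyond $8mD^{3/2}/\epsilon^2$. The choice $r' = \epsilon\gamma/8$ matches both constraints exactly. No step here should pose a serious obstacle; this is essentially a bookkeeping application of the doubling property, and I do not foresee any complication beyond tracking the constants $W$, $r$, $r'$ carefully so that the final expression comes out to $(8mD^{3/2}/\epsilon^2)^\lambda$ as stated.
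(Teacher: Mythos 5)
Your proof is correct and follows essentially the same packing argument as the paper: enclose the cube $C_j$ in a ball of radius $W\sqrt{D}/2 = mD^{3/2}\gamma/\epsilon$, cover it by $(8mD^{3/2}/\epsilon^2)^\lambda$ balls of radius $\epsilon\gamma/8$ via the doubling property, observe that each small ball holds at most one point of $\widehat{T}_i$, and sum over the $m$ groups. The only (cosmetic) difference is that you justify the one-point-per-ball step directly by the triangle inequality, whereas the paper phrases it as a volume comparison between the covering balls and disjoint balls of radius $\epsilon\gamma/8$ around the points; your phrasing is, if anything, the cleaner of the two.
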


\begin{proof} 
Consider all points in $C_j$ that belong to group $i$, i.e., $C_j \cap
\widehat{T}_i$. From the construction of $\widehat{T}_i \subseteq {\cT}$, we have that every pair of points of the same group is separated by a distance at least $\epsilon \gamma/4$. Therefore, each point can be represented by a ball of radius $\epsilon \gamma/8$, and we want to count the maximum number of non-overlapping balls that can be packed inside the cube $C_j$. Observe that the length of the diagonal of $C_j$ is $W\sqrt{D}$, and the cube lies entirely in the ball of radius $W\sqrt{D}/2$ with center at the middle of the diagonal. We call this \emph{cube ball}. As Euclidean metrics are  doubling metrics, we can cover the \emph{cube ball} with overlapping balls of radius $\epsilon\gamma/8$ and the number of the  balls required is $\left(\frac{W\sqrt{D}/2}{\epsilon\gamma/8}\right)^\lambda$, where $\lambda = O(D)$ is the doubling dimension of $\mathbb{R}^D$.

We can observe that the total volume occupied by the overlapping balls is at least the volume occupied by the non-overlapping balls corresponding to the points and having the same radius. Therefore, we can upper bound the number of points using the total number of non-overlapping balls used to cover the \emph{cube ball}. As there are $m$ groups, we have that the total number of the points in $C_j$ is: $|C_j| \leq m \cdot \left(\frac{W\sqrt{D}/2}{\epsilon\gamma/8} \right)^\lambda = m \cdot \left( 8mD^{3/2}/\epsilon^2 \right)^\lambda.$
\end{proof}

\changes{
\begin{theorem}[Theorem~\ref{thm:euclidean} restated]
If $\gamma \geq \lsf/(1+\epsilon)$, \fB Algorithm returns a set $\cS$ such that $\diver{(\cS)}\geq \lsf/(1+\epsilon)$ and $|\cS \cap \cX_i| \geq  k_i(1-\epsilon) \ \forall i \in [m]$ with probability at least $1-\delta$. The running time of the algorithm is:
\[ O(n \cdot (8/\epsilon)^\doubldim km +  \prod_{i=1}^{m}(k_i +1)^{2} 2^{m (8m D^{3/2} / \epsilon^2)^\lambda} (8/\epsilon)^\doubldim km \log|\cT| \log(1/\delta)).\]
\end{theorem}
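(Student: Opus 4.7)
The plan is to combine the coreset guarantee of Theorem~\ref{thm:coresets_doubling} with the two lemmas already established for \fB, then close the argument with a standard probability amplification and a guessing step for $\gamma$. First I would note that the coreset $\cT$ contains an optimal restricted solution $\cS^{*}=\bigcup_{i}\cS^{*}_i\subseteq\widehat{\cT}$ with $|\cS^{*}\cap\cX_i|=k_i$ for every $i$ and $\diver(\cS^{*})\geq \lsf/(1+\epsilon)$; since we are working with a guess $\gamma$ such that a set with diversity $\gamma$ exists in $\widehat{\cT}$, this $\cS^{*}$ is the witness that the algorithm must approximately recover.

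For correctness, I would invoke Lemma~\ref{lem:numpoints} to conclude that with probability at least $1/2$, after the random shift $p$ and the boundary-removal step, for every group $i$ at least $(1-\epsilon)k_i$ points of $\cS^{*}\cap\cX_i$ remain inside the cubes. Because every surviving point is at distance at least $\gamma/2$ from each face of its containing cube, any two surviving points lying in distinct cubes are at distance at least $\gamma$. Hence the retained portion of $\cS^{*}$ is a feasible candidate for the \fDP call with fairness budget $(1-\epsilon)k_i$ per group and diversity target $\gamma$, and Theorem~\ref{thm:fairDP} then guarantees that \fDP returns a set $\cS$ with $\diver(\cS)\geq \gamma\geq \lsf/(1+\epsilon)$ and $|\cS\cap\cX_i|\geq (1-\epsilon)k_i$ for all $i$. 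To boost the success probability, I would re-run the random-shift-and-solve procedure $\log(1/\delta)$ times independently and return the best feasible set; since each trial succeeds with probability at least $1/2$, the failure probability is at most $2^{-\log(1/\delta)}\leq \delta$.

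The running-time accounting, which is the more bookkeeping-heavy piece, decomposes into: (i)~building the coreset once in $O((8/\epsilon)^{\lambda}kmn)$ time by Theorem~\ref{thm:coresets_doubling}; (ii)~for each guess of $\gamma$ and each of the $O(\log(1/\delta))$ random trials, running \fDP on $t\leq |\cT|$ cubes, where each cube has at most $R=m(8mD^{3/2}/\epsilon^{2})^{\lambda}$ points by Lemma~\ref{lem:packing}, costing $O\bigl(\prod_i(k_i+1)^{2}\,2^{R}\,t\bigr)$ per call by Theorem~\ref{thm:fairDP}; and (iii)~ranging $\gamma$ over powers of $(1+\epsilon)$ between the smallest and largest pairwise distances in $\cT$, which yields $O(\log|\cT|)$ guesses and absorbs only an additional $(1+\epsilon)$ factor into the final approximation. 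Multiplying these pieces yields the stated running time.

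The main obstacle I expect is verifying that Lemma~\ref{lem:numpoints} really controls the loss \emph{per group} and not only in aggregate, since the fairness guarantee must hold simultaneously for every $i\in[m]$; this is handled by the groupwise Markov inequality followed by a union bound over the $m$ groups inside the proof of that lemma, so I would use it as a black box. A secondary subtlety is ensuring that the $O(\log|\cT|)$ enumeration of $\gamma$ actually hits a value in the window $[\lsf/(1+\epsilon)^{2},\lsf/(1+\epsilon)]$, which follows from the standard geometric-guessing argument already used in Section~\ref{subsec:randomLP} and is why the extra $(1+\epsilon)$ slack in the approximation is harmless.
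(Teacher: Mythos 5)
Your proposal follows the paper's own argument almost exactly: the correctness part (Lemma~\ref{lem:numpoints} for the per-group survival of $(1-\epsilon)k_i$ optimal coreset points, the $\gamma$-separation of points in distinct cubes after the boundary-removal step, recovery via \fDP and Theorem~\ref{thm:fairDP}, and $\log(1/\delta)$ independent repetitions of the random shift) and the running-time decomposition (one coreset construction in $O((8/\epsilon)^{\doubldim}kmn)$ time, plus one \fDP call per guess per trial with $t\leq|\cT|$ clusters and per-cluster size $R$ bounded by Lemma~\ref{lem:packing}) are the same as in the paper.

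The one step where your accounting does not deliver the stated bound is the enumeration of $\gamma$. Geometric guessing over powers of $(1+\epsilon)$ between the extreme pairwise distances gives $O\bigl(\epsilon^{-1}\log(d_{\max}/d_{\min})\bigr)$ guesses, which is not $O(\log|\cT|)$ (it depends on the aspect ratio of the data), and it also costs an extra $(1+\epsilon)$ factor in the approximation. The paper instead restricts the candidate values of $\gamma$ to the $O(|\cT|^2)$ pairwise distances realized within the coreset --- one of which equals the optimal coreset diversity, so no approximation is lost --- and searches over these candidates using the monotonicity of feasibility in $\gamma$, which is what produces the $\log|\cT|$ factor in the theorem. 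With that substitution for your step (iii), your argument matches both the claimed approximation and the claimed running time.
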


\begin{proof}
The running time of \fB (Algorithm~\ref{algo:m4}) depends on: (1)~the running time of constructing the coreset $\cT$ which is $O(nkm (8/\epsilon)^\lambda)$, where $\lambda$ is the doubling dimension, and (2)~the running time of \fDP (Algorithm~\ref{alg:dp2}) on the clusters for every guess $\gamma$.

From Theorem~\ref{thm:fairDP}, we know that \fDP has a running time of $O(\prod_{i=1}^{m}(k_i +1)^{2} 2^R t)$, where $t$ is the number of clusters and $R$ is the maximum size across all $t$ clusters. We upper bound the number of clusters by the coreset size. So, $t=O((8/\epsilon)^\lambda km)$. From Lemma~\ref{lem:packing}, we have $R = O(m (8m D^{3/2}/ \epsilon^2)^\lambda)$. Combining all the above, the final running time is:
$$O((8/\epsilon)^\doubldim kmn + \log|\cT| \log(1/\delta) \prod_{i=1}^{m}(k_i +1)^{2} 2^{m (8m D^{3/2} / \epsilon^2)^\lambda} (8/\epsilon)^\doubldim km).$$
\end{proof}	
}

\section{Scalable Algorithms: Data Stream Algorithms}\label{app:streaming}
In this section, we present $1$-pass data stream algorithms that obtain good approximation guarantees for the \fMM problem. First, we present a variant called \tauGMM, of the~\gonzalez algorithm from Section~\ref{subsec:euclCoresets}. Instead of finding the point that is farthest away, we use a threshold $\tau$ to identify the next point to be included in the set of diverse points. The threshold $\tau$ is set to $\gamma/2$, for a guess $\gamma$ of the optimal diversity score $\lsf$. We argue that this threshold variant of \gonzalez algorithm (called \tauGMM) also obtains the same $2$-approximation guarantee. 

Using \tauGMM, we give several data stream algorithms for the \fMM problem. In Section~\ref{app:streamgen}, we give an algorithm that obtains a $30$-approximation using $O(km)$ space for general metrics. In Section~\ref{app:streameuc}, we give an algorithm that obtains a $(1+\epsilon)$-approximation using $O(km(8/\epsilon)^\lambda)$ space, where $\lambda = O(D)$ for $D$-dimensional Euclidean metrics. Finally, for the special case of two groups, i.e., $m = 2$, we give an algorithm that obtains a $4(1+\epsilon)$-approximation using $O(k \epsilon^{-1} \log n)$ space for general metrics.

\smallskip
\noindent \textbf{Overview of \tauGMM}. We initialize the set of diverse points denoted by $\mathcal S$ with some arbitrary point in $\cX$. In each iteration, the algorithm selects an arbitrary point $p$ among the available points, denoted by $\mathcal R$, and includes it in $\mathcal S$, if the distance between the selected point $p$ and all other points in $\mathcal S$ is greater than $\tau = \gamma/2$. The point $p$ is made unavailable by removing it from $\mathcal R$ for subsequent iterations, irrespective of its inclusion in $\mathcal S$. Finally, the set $\mathcal S$ is returned.

\begin{algorithm}[t!]
		\caption{Algorithm \tauGMM }\label{algo:gammaGMM}
		{\small
			\begin{algorithmic}[1] 
				\Statex
				\begin{description}
					\item[\rlap{Input:}\phantom{Output:}] $\cX$: Universe of available points. 
					\item[\phantom{Output:}] $\tau \in \mathbb{R}^{+}$: a threshold on distance. 
					\item[\phantom{Output:}] $k \in \mathbb{Z}^{+}$.
					\item[\phantom{Output:}] $\cI:$ initialization set of points.
					\item[Output:] $\cS \subseteq \cX$ of size $k$.
				\end{description}	
				\State Let $\mathcal R \leftarrow \cX$ denote set of remaining points.
				\State If $\cI \neq \emptyset$, initialize $\mathcal{S} \leftarrow \cI$, otherwise $\cS \leftarrow \ $arbitrarily chosen point in $\cX$.
				\While{$|\mathcal{S}|< k$ \textbf{and} $\mathcal R \neq \emptyset$} 
					\State Let $p \in \mathcal{R}$ be an arbitrary point.
					\If{$\underset{v \in \mathcal{S}}{\text{min}} \ d(p, v)\geq \tau$}
					\State $\mathcal{S} \gets \mathcal{S} \cup \{p\}$.
					\EndIf
                \State $\mathcal R \leftarrow \mathcal{R} \setminus \{ p \}$.
				\EndWhile 
				\State \Return $\mathcal{S}$ 
			\end{algorithmic}
		}
\end{algorithm}

As defined in~\cite{ravi1994}, \fMM when $m = 1$ is also called \textsc{Max-Min} diversification. Using the theorem below, when our guess $\gamma$ is $\lsf$, we show that \tauGMM obtains a $2$-approximation for \textsc{Max-Min} diversification problem:
\begin{theorem}\label{thm:gammaGMM}
If $\gamma = \lsf$, the algorithm \tauGMM with $\tau = \gamma/2$ returns a $2$-approximation for the \textsc{Max-Min} diversification problem.
\end{theorem}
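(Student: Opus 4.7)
The plan is to reduce the theorem to two claims: (i) every pair of points output by \tauGMM is at least $\tau = \gamma/2$ apart, and (ii) the algorithm produces exactly $k$ points when $\gamma = \lsf$. Claim (i) is immediate from the inclusion rule: a point $p$ is added to $\cS$ only if $\min_{v\in \cS} d(p,v) \geq \tau$, and since $\cS$ only grows, any two selected points remain at distance $\geq \tau = \gamma/2 = \lsf/2$. This immediately gives $\diver(\cS)\geq \lsf/2$, i.e., the $2$-approximation guarantee, provided (ii) holds.

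The main work is therefore establishing (ii), and this is where I expect the main (though still routine) obstacle: showing that the algorithm cannot halt with $|\cS|<k$. I would argue this by contradiction and a pigeonhole-style injection into the optimal solution $\cS^* = \{o_1,\ldots,o_k\}$, which satisfies $d(o_i,o_j)\geq \lsf = \gamma$ for all $i\neq j$. Since \tauGMM only terminates when $|\cS|=k$ or $\cR = \emptyset$, assuming $|\cS|<k$ forces every point of $\cX$ to have been considered. Any considered point that was not added must, at the moment of consideration, have been within distance $\tau = \gamma/2$ of some already-selected point, and this property is preserved as $\cS$ grows.

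Now define a map $\phi : \cS^*\to \cS$ by setting $\phi(o_i)=o_i$ if $o_i\in \cS$, and otherwise picking $\phi(o_i)\in \cS$ to be any witness with $d(o_i, \phi(o_i))<\gamma/2$ (which exists by the preceding paragraph). I would then show $\phi$ is injective: if $\phi(o_i)=\phi(o_j)=v$ for $i\neq j$, then by the triangle inequality
\[
d(o_i,o_j)\leq d(o_i,v)+d(v,o_j)<\tfrac{\gamma}{2}+\tfrac{\gamma}{2}=\gamma,
\]
contradicting $d(o_i,o_j)\geq \lsf=\gamma$. (The case $\phi(o_i)=o_i$ and $\phi(o_j)\neq o_j$ is handled identically, noting $d(o_i,o_i)=0<\gamma/2$.) Injectivity yields $|\cS|\geq |\cS^*|=k$, contradicting $|\cS|<k$.

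Combining (i) and (ii), the returned set $\cS$ has exactly $k$ points with pairwise distance at least $\lsf/2$, proving the claimed $2$-approximation. The only subtlety I would double-check is the ordering of consideration versus insertion: a point $o_i$ might be considered at a time when some later-inserted neighbor is not yet in $\cS$, but since $\cS$ is monotone non-decreasing, the witness that existed at the time of rejection remains a valid witness at termination, so the injection argument is unaffected.
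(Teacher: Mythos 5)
Your proof is correct and follows essentially the same route as the paper's: the separation bound is immediate from the inclusion rule, and the cardinality claim $|\cS|=k$ is established by showing each selected point can be within $\lsf/2$ of at most one optimal point via the triangle inequality. Your explicit injection $\phi:\cS^*\to\cS$ is just the paper's ball-counting argument phrased in the other direction, and if anything you are slightly more careful in spelling out the coverage step (that termination with $\cR=\emptyset$ forces every rejected point to retain a witness in the final $\cS$).
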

\begin{proof}
From the construction of set $\cS$, $d(p, q) \geq \tau = \gamma/2 = \lsf/2$ for every pair $p, q \in \cS$, we have the approximation guarantee. Now, we will argue that $|\cS| = k$. Let $S^* = \{ y_1, y_2, \cdots y_k \}$ denote the set of points that obtains optimal diversity score, i.e., $\diver(\cS^*)= \lsf$. We claim that for every point $p \in \cS$, there exists at most one point $f(p) \in \cS^*$ such that $d(p, f(p)) < \lsf/2$. For the sake of contradiction suppose there are two points $y, y' \in \cS^* \cap \rball{p}{\lsf/2}$ such that $d(p, y) < \lsf/2$ and $d(p, y') < \lsf/2$. From triangle inequality, we have $d(y, y') < \lsf$, which is not true because $d(y, y') \geq \diver(\cS^*) = \lsf$. For every point $p \in \cS$, consider $\rball{p}{\lsf/2}$. As $d(p, q) \geq \gamma/2 = \lsf/2$ for every pair $p, q \in \cS$, and $|\rball{p}{\lsf/2} \cap \cS^*| \leq 1$, there are $|\cS^*|$ balls containing exactly one point from $\cS^*$ and centered at a point from $\cS$. Therefore, $|\cS| \geq k$ and \tauGMM terminates when $|\cS| = k$. Hence, the theorem.
\end{proof}

\noindent Now, we describe an algorithm \tauGMMstream that is based on \tauGMM and is a data streaming algorithm. 

\smallskip
\noindent \textbf{Overview of \tauGMMstream}. The algorithm runs \tauGMM for each group $i \in [m]$ separately using the same threshold $\tau$. Therefore, when a point $p \in \cX_j$ in the stream is processed, \tauGMMstream checks if it can include it in the solution $\cS_j$ corresponding to group $j \in [m]$ using the threshold $\tau$. The algorithm returns $\bigcup_{j \in [m]} \cS_j$ containing $\tilk_j$ points of group $j$, for all $j \in [m]$. 

\begin{algorithm}[t!]
		\caption{Algorithm \tauGMMstream }\label{algo:tauGMMstream}
		{\small
			\begin{algorithmic}[1] 
				\Statex
				\begin{description}
					\item[\rlap{Input:}\phantom{Output:}] $\cX$: Universe of available points. 
					\item[\phantom{Output:}] $\tau \in \mathbb{R}^{+}$: a threshold on distance. 
					\item[\phantom{Output:}]  $\tilk_1, \ldots, \tilk_m \in \mathbb{Z}^{+}$.
					\item[Output:] $\cS \subseteq \cX$ of size $\tilk_1 + \tilk_2 + \cdots \tilk_m$. 
				\end{description}	
				\State Initialize $\cS_i \leftarrow \emptyset \ \forall i \in [m]$. 
				\While{processing point $p \in \cX_j \ \text{ for some } j \in [m]$ from stream} 
					\If{$|\cS_j| < \tilk_j$ and $\underset{v \in \mathcal{S}_j}{\text{min}} \ d(p, v)\geq \tau$}
					\State $\mathcal{S}_j \gets \mathcal{S}_j \cup \{p\}$.
					\EndIf
				\EndWhile 
				\State \Return $\mathcal{S} \leftarrow \bigcup_{i=1}^m 
				\mathcal{S}_i$. 
			\end{algorithmic}
		}
\end{algorithm}

\subsection{Data streaming algorithm for General metrics}\label{app:streamgen}
In this section, we describe a data stream algorithm called \textsc{Fair-Stream-Gen} that obtains a $30$-approximation for the \fMM problem. 

\smallskip
\noindent In~\cite{moumoulidou21}, the authors give a simple coreset construction for general metrics, by taking a union of the outputs of \GMM algorithm run on each group separately. They argue that by selecting \changes{at most} $k$ points using \GMM from each group, the resulting set of $O(k m)$ points in total, contains a $5$-approximation for the \fMM problem. Building on this, we give a data stream algorithm that first constructs the coreset using \tauGMMstream, and then, we use \randround~from Section~\ref{subsec:randomLP} to return a $6$-approximation on the coreset.

\smallskip
\noindent Let $d_\text{min} = \min_{p, q \in \cX} d(p, q) \ \text{ and } d_\text{max} = \max_{p, q \in \cX} d(p, q)$. We assume we are given a lower bound $\dmin^{\text{lb}}$ for $\dmin$, and an upper bound  $\dmax^{\text{ub}}$ for $\dmax$, following similar assumptions for fair $k$-center clustering in the data streaming setting~\cite{chiplunkar2020solve}. We use a parameter $\gamma$ to guess optimal diversity score $\lsf$. 

\smallskip
\noindent \textbf{Overview of \textsc{Fair-Stream-Gen}}. We identify the coreset using \tauGMMstream for a guess $\gamma$, we set $\tau = 2\gamma/5$ and $\tilk_j = k \ \forall j \in [m]$. The coreset $\cT_\gamma$ consists of \changes{$O(\tilk_j)$} points of group $j$, all separated by a distance of $2\gamma/5$. We use geometric guessing in the range $\Gamma :=  \{\dmin^{\text{lb}} , (1+ \epsilon)\dmin^{\text{lb}}, (1+\epsilon)^2\dmin^{\text{lb}}, \dots, \dmax^{\text{ub}}\}$ using the parameter $\gamma$ and run all these $O(\frac{1}{\epsilon}\log(\dmax^{\text{ub}}/\dmin^{\text{lb}}))$ guesses in parallel. 

Finally, we construct \fMM LP on the set of coreset points for all the guesses, given by, $\bigcup_{\gamma \in \Gamma} \cT_\gamma$. The fractional solution $x^*$ of this LP is then used to construct an integral solution using \randround~from Section~\ref{subsec:randomLP}.

\begin{algorithm}[t!]
		\caption{Algorithm \textsc{Fair-Stream-Gen} }\label{algo:streamgen}
		{\small
			\begin{algorithmic}[1] 
				\Statex
				\begin{description}
					\item[\rlap{Input:}\phantom{Output:}] $\cX$: Universe of available points.
					\item[\phantom{Output:}]  $k_1, \ldots, k_m \in \mathbb{Z}^{+}$.
					\item[\phantom{Output:}]  $\dmin^{\text{lb}}, \dmax^{\text{ub}}$: lower bound for minimum and upper bound \Statex \phantom{Output:} for maximum pairwise distances in $\cX$.
					\item[Output:] $\cS \subseteq \cX$.
				\end{description}	
			\For{each of the guesses $\gamma \in \Gamma :=  \{\dmin^{\text{lb}} , (1+ \epsilon)\dmin^{\text{lb}}, (1+\epsilon)^2\dmin^{\text{lb}}, \dots, \dmax^{\text{ub}}\}$, run in parallel}
			\State Set $\tilk_i = k \ \forall i \in [m].$
			\State $T_i \leftarrow \tau\textsc{-GMM-Stream}(\cX_i, 2\gamma/5, \tilk_1, \tilk_2, \cdots, \tilk_m) \ \forall i \in [m]$.
			\State $\cT_\gamma \leftarrow \bigcup_{i=1}^m T_i$.
			\EndFor
			\State Let $x^*$ be the optimal LP solution to \textsc{Fair Max-Min LP} for the points $\bigcup_{\gamma \in \Gamma} \cT_\gamma$. 
\State \Return $\cS \leftarrow \randround(\bigcup_{\gamma \in \Gamma} \cT_\gamma, x^*)$. 
			\end{algorithmic}
		}
\end{algorithm}

\begin{algorithm}[t!]
		\caption{Algorithm \textsc{Fair-Stream-Euclidean} }\label{algo:streameuc}
		{\small
			\begin{algorithmic}[1] 
				\Statex
				\begin{description}
					\item[\rlap{Input:}\phantom{Output:}] $\cX$: Universe of available points.
					\item[\phantom{Output:}]  $k_1, \ldots, k_m \in \mathbb{Z}^{+}$.
					\item[\phantom{Output:}] $\dmin^{\text{lb}}, \dmax^{\text{ub}}$: lower bound for minimum and upper bound \Statex \phantom{Output:} for maximum pairwise distances in $\cX$.
					\item[Output:] $\cS \subseteq \cX$.
				\end{description}	
			\For{each of the guesses $\gamma \in \Gamma := \{\dmin^{\text{lb}} , (1+ \epsilon)\dmin^{\text{lb}}, (1+\epsilon)^2\dmin^{\text{lb}}, \dots, \dmax^{\text{ub}}\}$, run in parallel}
			\State Set $\tilk_i = k(8/\epsilon)^{\lambda} \ \forall i \in [m] $ where $\lambda$ is doubling dimension of $\mathbb{R}^D$.
			\State $T_i \leftarrow \tau\textsc{-GMM-Stream}(\cX_i, \epsilon\gamma/4, \tilk_1, \tilk_2, \cdots, \tilk_m) \ \forall i \in [m]$.
			\State $\cT_\gamma \leftarrow \bigcup_{i=1}^m T_i$.
			\EndFor
			\State Let $x^*$ be the optimal LP solution to \textsc{Fair Max-Min LP} for the points $\bigcup_{\gamma \in \Gamma} \cT_\gamma$. 
\State \Return $\cS \leftarrow \randround(\bigcup_{\gamma \in \Gamma} \cT_\gamma, x^*)$. 
			\end{algorithmic}
		}
\end{algorithm}

Combining the $5$-approximation of coreset from~\cite{moumoulidou21} and the guarantees of \randround, we obtain the following theorem:
\begin{theorem}\label{thm:streamgen}
 The algorithm \textsc{Fair-Stream-Gen} returns a $\cS$ that obtains a $30(1+\epsilon)$-approximation for the diversity score and $1-\epsilon$-fairness for the $\fMM$ problem using $O\left(km \frac{1}{\epsilon}\log(\dmax^{\text{ub}}/\dmin^{\text{lb}}) \right)$ space. Here, $\dmax^{\text{ub}} \geq \max_{p, q \in \cX} d(p, q)$ and $\dmin^{\text{lb}} \leq \min_{p, q \in \cX} d(p, q)$  are given upper and lower bound estimates.
\end{theorem}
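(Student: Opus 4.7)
The plan is to combine three ingredients whose factors multiply: the geometric guessing of $\gamma$, which costs a $(1+\epsilon)$ factor; a $5$-coreset produced per guess by \tauGMMstream, extending the coreset construction of~\cite{moumoulidou21}; and the $6$-approximation with $(1-\epsilon)$-fairness randomized rounding of Theorem~\ref{thm:6approx}, applied offline to the union of all the per-guess coresets. The product $5 \cdot 6 \cdot (1+\epsilon) = 30(1+\epsilon)$ will yield the claimed approximation.

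First I would pin down the ``right'' guess. By construction of $\Gamma$, some $\gamma^\star \in \Gamma$ satisfies $\lsf/(1+\epsilon) \leq \gamma^\star \leq \lsf$. For this guess, \tauGMMstream maintains, for each group $i$, a set $T_i$ of at most $k$ points pairwise separated by at least $\tau = 2\gamma^\star/5$. I would then argue that $\cT_{\gamma^\star} = \bigcup_{i} T_i$ is a $5$-coreset, by reusing the critical/non-critical dichotomy of~\cite{moumoulidou21}. For any non-critical group (one that reached $k$ points), $T_i$ already contains $k \geq k_i$ points at distance $\geq 2\gamma^\star/5$. For any critical group, the stream exhausted $\cX_i$, so every unstored point lies within $\tau$ of its nearest stored point $f(\cdot)$, and for two optimal points $o_1, o_2$ with $d(o_1,o_2) \geq \lsf \geq \gamma^\star$ the triangle inequality gives
\[
 d(f(o_1), f(o_2)) \;\geq\; d(o_1,o_2) - 2\tau \;\geq\; \gamma^\star - 4\gamma^\star/5 \;=\; \gamma^\star/5.
\]
Combining the two cases exactly as in the proof of Theorem~\ref{thm:coresets_doubling} produces a subset $\cT' \subseteq \cT_{\gamma^\star}$ with $k_i$ points per group and $\diver(\cT') \geq \gamma^\star/5 \geq \lsf/(5(1+\epsilon))$.

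Next, since $\cT_{\gamma^\star} \subseteq \bigcup_{\gamma \in \Gamma} \cT_\gamma$, the \fMM LP solved offline on the union, with diversity parameter $\gamma^\star/5$ (ball radius $\gamma^\star/10$), is feasible: set $x_j = 1$ on $\cT'$ and $0$ elsewhere, which obeys~(\ref{eq:constraint1})--(\ref{eq:constraint3}). Invoking the randomized rounding of Theorem~\ref{thm:6approx} on this LP returns $\cS$ with $|\cS \cap \cX_i| \geq (1-\epsilon) k_i$ for every $i \in [m]$ and pairwise distances at least $\gamma^\star/30 \geq \lsf/(30(1+\epsilon))$, which gives the diversity guarantee. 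For the space bound, each of the $|\Gamma| = O(\epsilon^{-1}\log(\dmax^{\text{ub}}/\dmin^{\text{lb}}))$ parallel copies of \tauGMMstream stores at most $k$ points per group, yielding $O(km \, \epsilon^{-1} \log(\dmax^{\text{ub}}/\dmin^{\text{lb}}))$ total working space; the LP and rounding run offline once the stream terminates and require no additional streaming memory.

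The main obstacle will be justifying the $5$-coreset property for \tauGMMstream in place of the original \gonzalez-based construction of~\cite{moumoulidou21}: unlike \gonzalez, the threshold variant terminates as soon as $k$ points are stored and does not always pick the farthest surviving point, so I would need to check that the critical/non-critical argument still goes through and that the threshold $\tau = 2\gamma^\star/5$ (which is exactly what makes $\gamma^\star - 2\tau = \gamma^\star/5$) is both necessary and sufficient to recover the $1/5$ diversity gap inside the coreset.
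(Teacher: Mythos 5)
Your proposal is correct and follows essentially the same route as the paper's (much terser) proof: identify the guess $\gamma^\star\in[\lsf/(1+\epsilon),\lsf]$, argue that \tauGMMstream with $\tau=2\gamma^\star/5$ yields a $5(1+\epsilon)$-coreset via the critical/non-critical dichotomy of~\cite{moumoulidou21}, and compose with the $6$-approximation rounding of Theorem~\ref{thm:6approx} to get $30(1+\epsilon)$ with $(1-\epsilon)$-fairness and the stated space bound. Your write-up is in fact more detailed than the paper's, which simply defers to Theorem 4 of~\cite{moumoulidou21} for the coreset property; the only minor slip is citing Theorem~\ref{thm:coresets_doubling} (the doubling-dimension argument) for combining the two cases, where the general-metric counting argument (each already-selected point eliminates at most one point of a non-critical $T_i$) is the one actually needed.
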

\begin{proof}
If $\gamma = \lsf/(1+\epsilon)$, following the same proof of Theorem 4 in~\cite{moumoulidou21}, we can argue that $\cT_\gamma$ is a $5(1+\epsilon)$-approximate coreset. Therefore, $\diver(\cT_\gamma) \geq \lsf/5(1+\epsilon)$. Combining this with the guarantees of Theorem~\ref{thm:6approx}, we get a $30(1+\epsilon)$-approximation and $1-\epsilon$ fairness.
\end{proof}

\subsection{Data streaming algorithm for Euclidean metrics}\label{app:streameuc}

In this section, we describe a data stream algorithm called \textsc{Fair-Stream-Euclidean} that obtains a $6$-approximation for \fMM. In Section~\ref{subsec:euclCoresets}, we gave an algorithm to construct a $(1+\epsilon)$-approximate coreset for Euclidean metrics, by taking a union of the outputs of \GMM algorithm run on each group. Using~\tauGMMstream, we obtain a $(1+\epsilon)$-approximate coreset. Similar to Section~\ref{app:streamgen}, we use \randround~from Section~\ref{subsec:randomLP} to return a $6$-approximation on the coreset.

\smallskip
\noindent Let $d_\text{min} = \min_{p, q \in \cX} d(p, q) \ \text{ and } d_\text{max} = \max_{p, q \in \cX} d(p, q)$. We assume we are given a lower bound $\dmin^{\text{lb}}$ for $\dmin$, and an upper bound  $\dmax^{\text{ub}}$ for $\dmax$, following similar assumptions for fair $k$-center clustering in the data streaming setting~\cite{chiplunkar2020solve}. We use a parameter $\gamma$ to guess optimal diversity score $\lsf$. 

\smallskip
\noindent \textbf{Overview of \textsc{Fair-Stream-Euclidean}}. We identify the coreset using \tauGMMstream for a guess $\gamma$, we set \changes{$\tau = \epsilon\gamma/4$} and \changes{$\tilk_j = k(8/\epsilon)^\lambda \ \forall j \in [m]$}, where $\lambda$ is doubling dimension of $\mathbb{R}^D$. The coreset $\cT_\gamma$ consists of $\tilk_j$ points of group $j$, all separated by a distance of $\epsilon\gamma/4$. We use geometric guessing in the range $\Gamma :=  \{\dmin^{\text{lb}}, (1+ \epsilon)\dmin^{\text{lb}}, (1+\epsilon)^2\dmin^{\text{lb}}, \dots, \dmax^{\text{ub}}\}$ using the parameter $\gamma$ and run all these $O(\frac{1}{\epsilon}\log(\dmax^{\text{ub}}/\dmin^{\text{lb}}))$ guesses in parallel. 

On the coreset obtained, we use \fDP, and the pairwise distances of points in the coreset are the new guesses for $\lsf$. We obtain the following guarantees:

\begin{theorem}\label{thm:streameuc}
The algorithm \textsc{Fair-Stream-Euclidean} returns a $\cS$ that obtains a $(1+\epsilon)$-approximation for the diversity score of $\fMM$ problem using $O\left(km(8/\epsilon)^{1+\lambda} \log(\dmax^{\text{ub}}/\dmin^{\text{lb}}) \right)$ space where $\lambda$ is doubling dimension of $\mathbb{R}^D$ and $\dmax^{\text{ub}} \geq \max_{p, q \in \cX} d(p, q)$, $\dmin^{\text{lb}} \leq \min_{p, q \in \cX} d(p, q)$ are given input upper and lower bound estimates. \end{theorem}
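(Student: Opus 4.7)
The plan is to decouple the argument into (i) correctness of the coreset produced by \tauGMMstream for a well-chosen guess, (ii) correctness of the final rounding on the combined coreset, and (iii) a direct space count. I would first focus on the guess $\gamma^\star \in \Gamma$ satisfying $\lsf/(1+\epsilon)\leq \gamma^\star \leq \lsf$, which exists because $\Gamma$ geometrically covers $[\dmin^{\text{lb}},\dmax^{\text{ub}}]$ with ratio $1+\epsilon$. The other guesses can only enlarge $\bigcup_{\gamma\in\Gamma}\cT_\gamma$ and therefore only help the final LP/rounding step.

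For step (i), I would show that $\cT_{\gamma^\star}$ is a $(1+\epsilon)$-coreset in the sense of Definition 3 by replaying the proof of Theorem~\ref{thm:coresets_doubling}. The key observation is that \tauGMMstream with threshold $\tau=\epsilon\gamma^\star/4$ and cap $\tilk_i=k(8/\epsilon)^\lambda$ produces, per group, a set $T_i$ whose elements are pairwise at distance at least $\epsilon\gamma^\star/4$. This is precisely the defining property of the maximal prefix $\widehat{T}_i$ used in Theorem~\ref{thm:coresets_doubling}. The critical/non-critical dichotomy then transfers directly: either $|T_i|=\tilk_i$ (non-critical), so Lemma~\ref{lem:doublingDim} supplies enough well-separated points to survive the conflict-removal across groups, or $|T_i|<\tilk_i$ (critical), in which case the stream has exhausted $\cX_i$ under the threshold and every point of $\cX_i$ lies within $\epsilon\gamma^\star/4$ of $T_i$, so the snapping argument $o \mapsto f(o)$ from the proof of Theorem~\ref{thm:coresets_doubling} carries over verbatim. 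Hence $\cT_{\gamma^\star}$ contains a feasible subset with diversity at least $\lsf/(1+\epsilon)$.

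For step (ii), since $\cT_{\gamma^\star}\subseteq\bigcup_{\gamma\in\Gamma}\cT_\gamma$, the LP objective $x^*$ is at least what a feasible LP on $\cT_{\gamma^\star}$ achieves; invoking \randround and Lemma~\ref{lem:randround} then yields a set whose diversity inherits the coreset guarantee (to match the sharp $(1+\epsilon)$ factor, one instead runs \fDP on the coreset, whose cost is polynomial in $|\cT_{\gamma^\star}|$ since the coreset does not depend on $n$). Regarding space, at each of the $|\Gamma|=O(\epsilon^{-1}\log(\dmax^{\text{ub}}/\dmin^{\text{lb}}))$ parallel guesses, \tauGMMstream stores at most $\sum_{i=1}^m \tilk_i = mk(8/\epsilon)^\lambda$ points in a single pass; multiplying yields the claimed $O(km(8/\epsilon)^{1+\lambda}\log(\dmax^{\text{ub}}/\dmin^{\text{lb}}))$ bound.

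The main obstacle is the faithful translation of the offline proof of Theorem~\ref{thm:coresets_doubling} to the online \tauGMMstream: \gonzalez always picks the farthest available point, whereas \tauGMMstream accepts an arbitrary incoming point meeting the threshold. I would argue that the coreset analysis only uses the invariant ``pairwise distance at least $\tau$'' together with the two alternatives on $|T_i|$, not the order of selection nor maximality in the farthest-point sense, so the offline argument applies without modification. A secondary subtlety is verifying that one of the $O(\epsilon^{-1}\log(\dmax^{\text{ub}}/\dmin^{\text{lb}}))$ discrete guesses is always within the multiplicative factor $1+\epsilon$ of $\lsf$, which follows from the standard geometric-guessing lemma and the input bounds on $\dmin$ and $\dmax$.
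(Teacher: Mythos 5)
Your proposal is correct and follows essentially the same route as the paper: for the guess $\gamma^\star$ with $\lsf/(1+\epsilon)\leq\gamma^\star\leq\lsf$, argue that $\cT_{\gamma^\star}$ is a $(1+\epsilon)$-coreset by transplanting the proof of Theorem~\ref{thm:coresets_doubling} (your observation that the critical/non-critical dichotomy needs only the pairwise-separation invariant plus the coverage property implied by the threshold rejection rule is exactly the point the paper leaves implicit), then solve on the coreset via \fDP and count $|\Gamma|\cdot\sum_i\tilk_i$ for the space bound. The only quibble is your parenthetical that \fDP runs in time polynomial in the coreset size --- it is exponential in the cluster size --- but since the theorem claims only a space bound this does not affect correctness.
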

\begin{proof}
If $\gamma = \lsf/(1+\epsilon)$, following the same proof as Theorem~\ref{thm:coresets_doubling}, we can argue that $\cT_\gamma$ is a $(1+\epsilon)$-approximate coreset. Therefore, $\diver(\cT_\gamma) \geq \lsf/(1+\epsilon)$. Combining this with the guarantees of Theorem~\ref{thm:fairDP}, we get a $(1+\epsilon)$ approximation.
\end{proof}


\subsection{Data streaming algorithm when $m=2$} 
In this section, we describe a data streaming algorithm \streamrbcol which obtains a $4$-approximation to the \fMM problem when the number of groups $m = 2$. In~\cite{moumoulidou21}, the authors describe an algorithm called \textsc{Fair-Swap} which returns a $4$-approximation to the \fMM problem when the number of groups is $m = 2$. The algorithm can be extended to a $2$-pass streaming algorithm using $O(k)$ space with the same $4$-approximation guarantee. Therefore, our algorithm essentially reduces the number of passes with the same guarantees for the space used.  First, we describe the algorithm \textsc{Fair-Swap} from~\cite{moumoulidou21} briefly and then present an overview of our Algorithm~\ref{algo:stream2groups}.

\smallskip
\noindent Let $d_\text{min} = \min_{p, q \in \cX} d(p, q) \ \text{ and } d_\text{max} = \max_{p, q \in \cX} d(p, q)$. We assume we are given a lower bound $\dmin^{\text{lb}}$ for $\dmin$, and an upper bound  $\dmax^{\text{ub}}$ for $\dmax$, following similar assumptions for fair $k$-center clustering in the data streaming setting~\cite{chiplunkar2020solve}. We use a parameter $\gamma$ to guess optimal diversity score $\lsf$. 

The \textsc{Fair-Swap} algorithm described in~\cite{moumoulidou21} uses \GMM to retrieve a set $\cS$ of $k$ points in a group-agnostic way by ignoring the fairness constraints. Thus, it is possible that one group is under-represented (i.e., $|\cS \cap \cX_i| < k_i$) and the other is over-represented (i.e., $|\cS \cap \cX_i| > k_i$). So, the algorithm first identifies the under-represented group $i \text{ for some }i \in \{1, 2\}$. Then, it uses \GMM only on $\cX_i$, and obtains $\widetilde \cS_i$, called the \emph{swap-set} containing the remaining $k_i - |\cS \cap \cX_i|$ points. Finally for every point in $\widetilde \cS_i$ it removes the nearest neighbor in the over-represented group.

\smallskip
\noindent \textbf{Overview of Algorithm \streamrbcol}. We first note that for a guess $\gamma = \lsf$ we can \emph{simulate} the behavior of \gonzalez using \tauGMM algorithm given in the Appendix. In order to extend \textsc{Fair-Swap} to a single pass data streaming algorithm, we need to know the under-represented group. However, we can only determine it towards the end of running the \tauGMM algorithm. We overcome this by simultaneously treating
either of the two groups as under-represented whenever a new point in the
stream is selected by the~\tauGMM algorithm. In order to do that, we maintain $3$ sets $S, S_1, S_2$ using \tauGMM for all of them. In $S$, we include points in a group-agnostic way (similar to \textsc{Fair-Swap}) ignoring the fairness constraints. In $S_1$, we include points only of group $1$, and in $S_2$, we include points only of group $2$. By setting $\tau = \gamma/2$ we maintain the sets $S$, $S_1$ and $S_2$ such that all points are at least $\gamma/2$ distance apart in every one of them. 

Without loss of generality, suppose $\cX_1$ is the under-represented group. So, $|S \cap \cX_1| < k_1$. Our algorithm proceeds by identifying $k_1-|S \cap \cX_1|$ new points from $S_1$ by running \tauGMM with the initialization set of $S \cap \cX_1$ and $\tau = \gamma/4$. This ensures that the final set of points from group $1$ that are returned are $\gamma/4$ apart. By discarding the nearest neighbors of group $2$ from newly added points, we argue that our algorithm obtains a $4$-approximation. 

\begin{algorithm}[t!]
		\caption{Algorithm \streamrbcol: Data Stream Algorithm for $m=2$.}\label{algo:stream2groups}
		{\small
			\begin{algorithmic}[1] 
				\Statex
				\begin{description} 
				\item[\rlap{Input:}\phantom{Output:}] $\cX_1, \cX_2$: Universe of available points.
				\item[\phantom{Output:}] $k_1, k_2 \in \mathbb{Z}^{+}.$
				\item[\phantom{Output:}]  $\dmin^{\text{lb}}, \dmax^{\text{ub}}$: lower bound for minimum and upper bound for maximum pairwise distances in $\cX$.
				\item[Output:] Set $\cS$ of $k_1, k_2$ points in $\cX_1, \cX_2.$
			\end{description}
				\For{each of the guesses $\gamma \in \Gamma :=  \{\dmin^{\text{lb}} , (1+ \epsilon)\dmin^{\text{lb}}, (1+\epsilon)^2\dmin^{\text{lb}}, \dots, \dmax^{\text{ub}}\}$, run in parallel}
				\State Initialize $S  \leftarrow \emptyset$, $S_1  \leftarrow \emptyset$ and $S_2  \leftarrow \emptyset$.
				\While{processing point $p \in \cX_j$ from stream}
				\If{$d(p, S) \geq \gamma/2$ and $|S| < k$}
				\State $S \leftarrow S \cup \{ p \}.$
				\EndIf
				\If{$j = 1$ and $d(p, S_1) \geq \gamma/2$ and $|S_1| < k_1$}
				\State $S_1 \leftarrow S_1 \cup \{ p \}.$
				\EndIf 
				\If{$j = 2$ and $d(p, S_2) \geq \gamma/2$ and $|S_2| < k_2$}
				\State $S_2 \leftarrow S_2 \cup \{ p \}.$
				\EndIf
				\EndWhile
				\State $T_j \leftarrow S \cap \cX_j$ for $j \in \{ 1, 2 \}$.
				\State Set $ u \leftarrow \text{arg min}_j |T_j| - k_i$ and $ o \leftarrow 3-u$.
				\State $E_u  \leftarrow \ $\tauGMM$(S_u, \gamma/4, k_u - |T_u|, T_u)$
				\State $R_o \leftarrow \{ \text{arg min}_{q \in T_o} d(p, q) : p \in E_u \setminus T_u \}$. 
				\State $\cS_\gamma \leftarrow  E_u \cup (T_o \setminus R_o)$.
				\EndFor
				\State \Return $\cS \leftarrow \underset{\cS_\gamma : \ \gamma \in \Gamma}{\text{arg max}} \ \diver(\cS_\gamma)$
			\end{algorithmic}
		}
\end{algorithm}

\smallskip
\noindent We obtain the following guarantees for \streamrbcol:

\begin{theorem}[Theorem~\ref{thm:stream2group_informal} restated]
There is a data streaming algorithm that obtains a  $4(1+\epsilon)$-approximation for \fMM problem using $O(k\epsilon^{-1}\log(\dmax^{\text{ub}}/\dmin^{\text{lb}}))$ space, where $\dmax^{\text{ub}} \geq \max_{p, q \in \cX} d(p, q)$ and $\dmin^{\text{lb}} \leq \min_{p, q \in \cX} d(p, q)$ are given input upper and lower bound distance estimates.
\end{theorem}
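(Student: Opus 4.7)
The plan is to restrict attention to the ``correct'' guess $\gamma^{*}\in\Gamma$ with $\lsf/(1+\epsilon)\leq \gamma^{*}\leq \lsf$, which exists because $\Gamma$ is a geometric grid with ratio $1+\epsilon$, and to argue that the branch of the algorithm corresponding to $\gamma^{*}$ produces a set $\cS_{\gamma^{*}}$ with $k_{i}$ points of each group and diversity at least $\gamma^{*}/4\geq \lsf/(4(1+\epsilon))$; the final $\arg\max$ step then inherits this bound. The first step is to verify that at the end of the stream, $|S|=k$, $|S_{1}|=k_{1}$, and $|S_{2}|=k_{2}$. This is a direct extension of Theorem~\ref{thm:gammaGMM}: any two points of an optimum solution are at distance $\geq \lsf \geq \gamma^{*}$, so by the triangle inequality at most one of them can lie within $\gamma^{*}/2$ of any particular element of $S$. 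Since every optimum point is either inserted into $S$ or ``covered'' at its processing time by a prior $S$-element within $\gamma^{*}/2$, an injective charging from $\cS^{*}$ to $S$ gives $|S|\geq k$, and the stopping condition forces equality; restricting the same argument to $\cX_{j}$ and to the $k_{j}$ optimum points in group $j$ yields $|S_{j}|=k_{j}$.

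Next I would analyze the swap step. Without loss of generality suppose group $1$ is under-represented in $S$, so $u=1$, $o=2$, $|T_{1}|<k_{1}$, and $|T_{2}|>k_{2}$. The crucial pigeonhole observation is that since the $k_{u}$ points of $S_{u}$ are pairwise $\gamma^{*}/2$-separated, for any fixed $p\in T_{u}$ at most one element of $S_{u}$ can lie in $\rball{p}{\gamma^{*}/4}$; otherwise two such elements would be within $\gamma^{*}/2$ of each other through $p$. Hence at least $k_{u}-|T_{u}|$ points of $S_{u}$ are $\gamma^{*}/4$-far from every point of $T_{u}$, so running \tauGMM at threshold $\gamma^{*}/4$ seeded with $T_{u}$ absorbs them all, yielding $E_{u}$ of size $k_{u}$. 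On the over-represented side $|R_{o}|\leq |E_{u}\setminus T_{u}|=k_{u}-|T_{u}|$, so $|T_{o}\setminus R_{o}|\geq (k-|T_{u}|)-(k_{u}-|T_{u}|)=k_{o}$, giving fairness (truncate to exactly $k_{o}$ if required).

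It remains to verify $\diver(\cS_{\gamma^{*}})\geq \gamma^{*}/4$. Pairs inside $E_{u}$ are $\gamma^{*}/4$-separated by construction, and pairs inside $T_{o}\setminus R_{o}\subseteq S$ are $\gamma^{*}/2$-separated by the outer \tauGMM. For a mixed pair $p\in E_{u}$, $q\in T_{o}\setminus R_{o}$: if $p\in T_{u}\subseteq S$ then $d(p,q)\geq \gamma^{*}/2$ directly; otherwise $p\in E_{u}\setminus T_{u}$, and letting $r\in R_{o}$ be the element of $T_{o}$ closest to $p$, we have $d(p,r)\leq d(p,q)$ and $d(q,r)\geq \gamma^{*}/2$ since both lie in $S$, so the triangle inequality $d(q,r)\leq d(p,q)+d(p,r)\leq 2d(p,q)$ gives $d(p,q)\geq \gamma^{*}/4$. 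For space, each of the $|\Gamma|=O(\epsilon^{-1}\log(\dmax^{\text{ub}}/\dmin^{\text{lb}}))$ parallel threads maintains three \tauGMM sets of size at most $k$, yielding $O(k\epsilon^{-1}\log(\dmax^{\text{ub}}/\dmin^{\text{lb}}))$ overall.

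The main technical obstacle I anticipate is precisely the cross-pair inequality. In the two-pass \textsc{Fair-Swap} analysis of~\cite{moumoulidou21}, the swap set is produced by a fresh \GMM pass over $\cX_{u}$ initialized with $T_{u}$, so the swap points are ``compatible'' with the first-pass structure by design. In the one-pass variant, $S$ and $S_{u}$ come from two \emph{independent} \tauGMM runs, so $T_{u}=S\cap\cX_{u}$ is not necessarily nested in $S_{u}$, and the nearest-neighbor removals defined against $T_{o}$ (the \emph{group-agnostic} over-represented side) must simultaneously clear potential $\gamma^{*}/4$-conflicts with later additions drawn from $S_{u}\setminus T_{u}$. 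The choice of $\gamma^{*}/4$ in the secondary \tauGMM, half of the $\gamma^{*}/2$ used for the three base sets, is tuned exactly to make the triangle chain $d(q,r)\leq 2d(p,q)$ close, and verifying this compatibility is the delicate step of the whole argument.
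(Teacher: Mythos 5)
Your proposal is correct and follows essentially the same route as the paper's proof: the injection/ball argument showing $|S|=k$, $|S_1|=k_1$, $|S_2|=k_2$; the pigeonhole showing $S_u$ contains at least $k_u-|T_u|$ points that are $\gamma/4$-far from all of $T_u$ (so the seeded \tauGMM fills $E_u$ to size $k_u$); and the nearest-neighbor removal with a $\gamma/4$ triangle-inequality bound for mixed pairs. The only (harmless) differences are that you make the count $|T_o\setminus R_o|\geq k_o$ explicit where the paper leaves it implicit, and you phrase the mixed-pair bound via the chain $d(q,r)\leq 2d(p,q)$ rather than the paper's equivalent ``at most one point of $T_o$ lies in $\rball{p}{\gamma/4}$, and it is removed'' formulation.
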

\begin{proof}
Consider the optimal solution $\cS^* = \{ y_1, y_2, \cdots y_k \}$ where $k = k_1 + k_2$. Consider set $S_1$ after the stream has ended. We can observe that for every point $p \in S_1$, $|\rball{p}{\lsf/2} \cap \cS^*| \leq 1$. Therefore, $|S_1| = k_1$. Similarly, we can argue that $|S| = k$ and $|S_2| = k_2$. 

Let $T_i = S \cap \cX_i$ for $i \in \{1, 2\}$ and $u \in \{1, 2\}$ denote the under-represented group, i.e., $|T_u| < k_u$ and $o$ denote the over-represented group, i.e., $|T_o| > k_o$. This means that we need to find  $k_u - |T_u|$ additional points to add to $S$ to satisfy the fairness constraint for group $u$. First, we argue that $S_u$ contains $k_u - |T_u|$ points that are at least
$\gamma/4$ distance from all the points in $T_u$. As $\tau = \gamma/2$, we have $\diver(S_u) \geq \gamma/2$ and $\diver(T_u) \geq \gamma/2$. Consider balls of radius $\gamma/4$ centered around points in $T_u$, given by $\rball{p}{\gamma/4} \ \forall p \in T_u$. Using triangle inequality, we have that $|\rball{p}{\gamma/4} \cap S_u| \leq 1 \ \forall p \in T_u$, as otherwise, we would have two points in $S_u$ with distance strictly less than $\gamma/2$. This implies that there are at least $k_u - |T_u|$ points in $S_u$,  denoted by $\widetilde S_u$ with distance at least $\gamma/4$ from points in $T_u$. As $\widetilde S_u \subseteq S_u$, we have $\diver(\widetilde S_u) \geq \gamma/2$. Therefore, using \tauGMM on the set $S_u$ with $\tau = \gamma/4$ and initialized with points $T_u$, we obtain the set $E_u$ containing $k_u$ points such that $\diver(E_u) \geq \gamma/4$.

We remove the nearest neighbors denoted by the set $R_o$ from $\cX_o$ of the newly included points $E_u \setminus T_u$. As $\diver(T_o) \geq \gamma/2$, we have for $\diver(R_o) \geq \gamma/2$. Using triangle inequality, we have that $|\rball{p}{\gamma/4} \cap R_o| \leq 1 \ \forall p \in E_u \setminus T_u$, as otherwise, we would have two points in $R_o$ with distance strictly less than $\gamma/2$. Therefore, the points remaining given by $\cS_\gamma = E_u \cup (T_o \setminus R_o)$ satisfy $\diver(\cS_\gamma) \geq \gamma/4$. For the guess $\gamma \geq \lsf/(1+\epsilon)$, we have that: $\diver(
\cS_\gamma) \geq \frac{\lsf}{4(1+\epsilon)}.$

As the total number of points stored is $O(|S| + |S_1| + |S_2|) = O(k)$, the theorem follows. 
\end{proof}

\section{Scalable Algorithms: Distributed Systems \& Composable Coresets}\label{app:composable_coresets}

In this section we show the approach discussed in Section~\ref{sec:composable_coresets} yields an $(1+\epsilon)$-composable coreset $\cT$ for \fMM problem. Our proof is similar to Theorem~\ref{thm:coresets_doubling} and the main difference is in identifying
\emph{critical} and \emph{non-critical} groups based on the properties of  $T_j \ \forall j \in [m]$.

The analysis uses the \emph{anti-cover} property, which every set selected by \gonzalez satisfies. Let $S = \{s_1, s_2, \dots, s_k\} \subseteq \cX$ be a set of size $k$ selected by \gonzalez. Then, we say $S$ is an \emph{$r$-net} for $\cX$, or equivalently $S$ satisfies the \emph{anti-cover} property, if for $r = \min\limits_{u \in \left\{S \setminus s_k \right\}} d(u, s_{k}) = \diver{(S)} \ ,$ the following properties hold: (1)~(\emph{separation}) for any $u, v \in S, d(u, v) \geq r$, and (2)~(\emph{coverage}) for any $v \in \cX \setminus S, \min\limits_{u \in S} d(u, v) \leq r$. We now prove the following result:

\begin{theorem}[Theorem~\ref{thm:comp_coresets_doubling} restated] $\cT$ is a $(1+\epsilon)$-composable coreset for fair Max-Min diversification of size $O((8/\epsilon)^\doubldim km L)$ in metrics of doubling dimension $\doubldim$ which can be obtained in $O((8/\epsilon)^\doubldim kmnL)$ time.
\end{theorem}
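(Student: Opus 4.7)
The plan is to mimic the sequential proof of Theorem~\ref{thm:coresets_doubling} with a per--(partition, group) critical/non-critical dichotomy. For each $j\in[L]$ and $i\in[m]$, let $\widehat{T}_j^i \subseteq T_j^i$ be the maximal prefix of the points added by \gonzalez{} on $\cY_j^i$ such that $\diver(\widehat{T}_j^i)\geq (\foo/2)\lsf$, and call the pair $(j,i)$ \emph{critical} if $|\widehat{T}_j^i|<(4/\foo)^{\doubldim}k$, \emph{non-critical} otherwise. The anti-cover property of \gonzalez{} guarantees that in the critical case every point of $\cY_j^i$ (in particular any optimum lying there) is within $(\foo/2)\lsf$ of $\widehat{T}_j^i$, while in the non-critical case $\widehat{T}_j^i$ contains exactly $(4/\foo)^{\doubldim}k$ points that are pairwise $\geq(\foo/2)\lsf$ apart.

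For each optimum $o\in \cS^*_i$ residing in a partition $\cY_j$ whose pair $(j,i)$ is critical, let $f(o)\in\widehat{T}_j^i$ be its nearest coreset point, so $d(o,f(o))<(\foo/2)\lsf$. I would initialize $\cT'$ as the set of all such $f(o)$ taken over critical pairs. The same triangle-inequality calculation as in Theorem~\ref{thm:coresets_doubling} shows $d(f(o_1),f(o_2))\geq \lsf - \foo\lsf = \lsf/(1+\appr)$ for any two distinct critical optima, so the proxies are distinct and sit in the correct groups. I would then process the non-critical pairs in an arbitrary order: for each $(j,i)$, discard from $\widehat{T}_j^i$ every point within distance $\lsf$ of the current $\cT'$. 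Because each element of $\cT'$ can eliminate at most $(4/\foo)^{\doubldim}$ points of a $(\foo/2)\lsf$-separated set (cover a ball of radius $\lsf$ by balls of radius $(\foo/4)\lsf$) and $|\cT'|$ remains $\leq k - k_{j,i}^*$ where $k_{j,i}^*:=|\cS^*_i\cap\cY_j|$, at least $(4/\foo)^{\doubldim}k_{j,i}^*$ candidates survive. Lemma~\ref{lem:doublingDim} then extracts $k_{j,i}^*$ points pairwise $\geq\lsf$ apart, which I add to $\cT'$.

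Summing over all pairs, $\cT'$ ends with exactly $k_i$ points of group $i$ for every $i\in[m]$, and every pair of selected points is at distance $\geq \lsf/(1+\appr)$, establishing that $\cT$ is a $(1+\appr)$-composable coreset. The main subtlety I expect to be careful about is the invariant $|\cT'|\leq k - k_{j,i}^*$ at the moment each non-critical $(j,i)$ is processed: this holds because $\cT'$ contains exactly one representative per already-addressed optimum and the $k_{j,i}^*$ optima of the current pair are still unaddressed, so the accounting is oblivious to the processing order. The size bound $|\cT|\leq L\cdot m\cdot (4/\foo)^{\doubldim}k = O((8/\appr)^{\doubldim}kmL)$ and the running time $O((8/\appr)^{\doubldim}kmnL)$ then follow directly from the cost of the $Lm$ invocations of \gonzalez{} together with $\foo\geq \appr/2$.
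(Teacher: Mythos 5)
Your proposal is correct, and it reaches the theorem by a route that is recognizably the same skeleton as the paper's (adapting the critical/non-critical argument of Theorem~\ref{thm:coresets_doubling} via the anti-cover property of \gonzalez, triangle-inequality proxies for optima near the coreset, and the packing bound of Lemma~\ref{lem:doublingDim}), but with a genuinely different decomposition. The paper collapses the partition structure back to the group level: following Indyk et al., it sets $r^i=\max_{j\in[L]} r_j^i$ where $r_j^i$ is the coverage radius of $T_j^i$ for $\cY_j^i$, declares a whole group critical or non-critical according to whether $r^i<(\foo/2)\lsf$, and then invokes the sequential argument verbatim --- in the non-critical case a single partition $j^*$ attaining the maximum supplies all $(4/\foo)^{\doubldim}k$ well-separated candidates for group $i$'s entire quota $k_i$. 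You instead keep the classification at the $(j,i)$ level and allocate the per-pair quota $k_{j,i}^*=|\cS^*_i\cap\cY_j|$, with the explicit invariant $|\cT'|\le k-k_{j,i}^*$ at the moment pair $(j,i)$ is processed; summing $k_{j,i}^*$ over $j$ recovers $k_i$. Both accountings are sound. The paper's version buys a one-line reduction to the already-proved sequential theorem; yours is more self-contained, makes the order-independence of the processing explicit, and sidesteps a small subtlety in the paper's non-critical case (when $T_{j^*}^i$ exhausts $\cY_{j^*}^i$, its diversity can be large while it holds fewer than $(4/\foo)^{\doubldim}k$ points, which the paper handles by implicitly treating the coverage radius as zero; your prefix-based definition of critical pairs avoids this case distinction altogether). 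Your size and running-time bounds match the statement.
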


\begin{proof}
We show how by analyzing the properties of the $T_j$ sets, we can categorize
the groups into \emph{critical} and \emph{non-critical}. Then the ideas in
Theorem~\ref{thm:coresets_doubling} in the sequential setting apply here as
well.

Let $T_i$ be the set of points of group $i$ present in $\cT$, namely for a
fixed value $i$: $T_i \leftarrow \bigcup_{j=1}^{L} T_j^i$. For any point $x \in
\cX_i$ for $i \in [m]$, define $f(x) = \min\limits_{y \in T_i} d(x, y)$. By the \emph{anti-cover} property of \gonzalez, each $T_j^i \subset T_i$ set of
points that belong to group $i$ is an $r_j^i-$ net for $\cY_j^i$, where
$r_j^i = \diver{(T_j^i)}$. Then $T_j^i$ satisfies the \emph{separation} and
\emph{coverage} properties, thus the following hold: 

\begin{enumerate} 
\item For any pair of points $x, y \in T_j^i$, it holds
that $d(x, y) \geq r_j^i$. 
\item For any point $x \in \cY_j^i\setminus T_j^i$,
it holds that $d(x, f(x)) \leq r_j^i$. 
\end{enumerate}

Now, following the analysis of Indyk at al.~\cite{Indyk14},
define $r^i = \max\limits_{j \in [L]} \ r_j^i$. Then by the coverage property
it follows that $d(x, f(x)) \leq r^i$ for any point $x \in \cX_i$ in group $i$
for $i\in [m]$. Also notice that if $|T_j^i| < (4/\foo)^\doubldim k$ for all $j
\in [L]$, then $r^i$ can be treated as zero since $d(x, f(x)) = 0$ for all
points in $\cX_i$. Further, since $T_i$ is a superset of the $T_j^i$ sets, it contains a set $T'_i \subseteq T_i$ with $\diver{(T'_i)} \geq r^i$. Next, notice that we can define the following cases for the value of $r^i$ to see if a group is critical or not:

\begin{enumerate}
\item If $r^i \geq (\foo/2) \divScore$ group $i$ is \emph{non-critical}. Also
notice there exists a set in $T_i$ with at least $(4/\foo)^\doubldim k$
points that are greater or equal than $(\foo/2) \divScore$ apart from each
other.

\item If $r^i < (\foo/2) \divScore$ group $i$ is \emph{critical}. Also notice
that for any point $x \in \cX_i$ it holds that $d(x, f(x)) \leq r^i < (\foo/2)
\divScore$.
\end{enumerate}

Finally, using similar arguments as in the sequential setting in
Theorem~\ref{thm:coresets_doubling}, we can prove that $\cT$ is a
$(1+\epsilon)-$\emph{composable} coreset.
\end{proof}

We now discuss a simple two-round distributed algorithm for \fMM that uses composable coresets.

\medskip

\noindent
\textbf{A two-round distributed algorithm using composable coresets.} We assume the data is partitioned into $L$ processing sites. A similar approach based on coresets was proposed by~\cite{Ceccarello17} for the unconstrained Max-Min diversification problem. In the first round, each site $j \in [L]$ computes its local coreset $T_j$ and sends it to the coordinator site. In the second round, at the coordinator site we use the best known approximation algorithm in the sequential setting or a brute force approach. Here, in the latter case we  retrieve an $(1+\epsilon)$-approximate solution. Alternatively, we could use the linear programming approach discussed in Section~\ref{sec:genmetrics}. In that case, we get a $6(1+\epsilon)$-approximate solution for the \fMM problem by only sacrificing at most $\epsilon k_i$ points per group $i$. Similar arguments follow if we are in general metric spaces. 

\end{document}